\numberwithin{equation}{section}
\newtheorem{theorem}{Theorem}
\newtheorem{proposition}[theorem]{Proposition}
\newtheorem{corollary}[theorem]{Corollary}
\newtheorem{lemma}[theorem]{Lemma}
\journal{Journal of Theoretical Biology }
\def\I{\mathbb{ I}}
\def\i{\textit}
\def\Y{\begin{array}{c}	y_1 \\ \vdots \\ y_n \end{array}}
\def\O{\begin{array}{c}	1 \\ \vdots \\ 1 \end{array}}
\def\I_n{
\begin{array}{ccc}1 & \, & \, \\ \, & \ddots & \, \\ \, & \, & 1 \end{array}
}
\def\Z{\begin{array}{c} 0 \\ \vdots \\ 0 \end{array}}
\def\YVn^t{\begin{array}{cccc} \Y & \Y & \ldots & \Y \end{array}}
\def\YVan^t{\begin{array}{cccc} \Z & \Y & \ldots & \Y \end{array}}
\def\YVbn^t{\begin{array}{cccc} \Y & \Z & \ldots & \Y \end{array}}
\def\YVnn^t{\begin{array}{cccc} \Y & \Y & \ldots & \Z \end{array}}
\def\Za{\begin{array}{cccc} \Y & \Z & \ldots & \Z \end{array}}
\def\Zb{\begin{array}{cccc} \Z & \Y & \ldots & \Z \end{array}}
\def\Zn{\begin{array}{cccc} \Z & \Z & \ldots & \Y \end{array}}
\def\OYVn^t{\begin{array}{cccc} \O & \O & \ldots & \O \end{array}}
\def\OYVan^t{\begin{array}{cccc} \Z & \O & \ldots & \O \end{array}}
\def\OYVbn^t{\begin{array}{cccc} \O & \Z & \ldots & \O \end{array}}
\def\OYVnn^t{\begin{array}{cccc} \O & \O & \ldots & \Z \end{array}}
\def\OZa{\begin{array}{cccc} \O & \Z & \ldots & \Z \end{array}}
\def\OZb{\begin{array}{cccc} \Z & \O & \ldots & \Z \end{array}}
\def\OZn{\begin{array}{cccc} \Z & \Z & \ldots & \O \end{array}}
\newcommand{\diag}{\text{diag}}
\begin{document}

\begin{frontmatter}

\title{Reduced models of networks of coupled enzymatic reactions}

\author[a]{Ajit Kumar}
\ead{ajitkmar@math.uh.edu}

\author[a,b]{Kre\v{s}imir Josi\'{c}\corref{c}}
 \address[a]{Department of Mathematics, University of Houston, Houston TX 77204-3008,  USA}
 \address[b]{Department of Biology and Biochemistry, University of Houston, Houston TX 77204-5001,  USA}
\cortext[c]{Corresponding author }
\ead{josic@math.uh.edu}

\begin{abstract}
The Michaelis-Menten equation has played a central role in our understanding of biochemical processes.  It has long been understood how this equation approximates the dynamics of irreversible enzymatic reactions.    However, a similar approximation in the case of networks, where the product of one reaction can act as an enzyme in another, has not been fully developed. 
Here we rigorously derive such an approximation in a class of coupled enzymatic networks where the individual interactions are of Michaelis-Menten type. 
 We show that the sufficient conditions for the validity of the \textit{total quasi steady state assumption (tQSSA)}, obtained in a  single protein case by Borghans, de Boer and Segel
 can be extended to   sufficient conditions for the validity of the tQSSA in a large class of enzymatic networks. 
Secondly,  we derive reduced equations that approximate the network's dynamics and involve only  protein concentrations.  This significantly reduces the number of equations necessary to model such systems. 
We prove the validity of this approximation using geometric singular perturbation theory and results about matrix differentiation.   The ideas used in deriving the approximating equations are quite general, and can be used to systematize other model reductions.  
\end{abstract}

\begin{keyword}
Michaelis Menten 
\sep quasi steady state 
\sep  total quasi steady state
\sep protein interaction networks
\sep coupled enzymatic networks
\sep  geometric singular perturbation


\end{keyword}

\end{frontmatter}



\section{Introduction}

The Michaelis-Menten (MM) scheme~\citep{BH,MM} is a fundamental building block of many models of
 protein interactions:
An enzyme, $E$,  reacts with a protein, $X$, resulting in an intermediate complex, $C$.  In turn, this complex can break down into a product, $X_p$, and the enzyme $E$. It is frequently assumed that formation of $C$ is reversible while its breakup  is not.   The process is represented by the following sequence of reactions~\citep{BH,MM,murray2003}
\begin{equation} \label{MM}
X + E \overset{k_1}{\underset{k_{-1}}{\rightleftharpoons}} C \overset{k_2}{\rightarrow} X_p + E.
\end{equation} 

Frequently catalytic activity in protein interaction network is modeled by MM equations~\citep{huangFerrell1996,novakPatakiCilibertoTyson2001,novakPatakiCilibertoTyson2003,UriAlon2007,davitichBornholdt2008,stadman1977,NovakTyson1993,goldbeter91}. This gives rise to \emph{coupled enzymatic networks}, where the  substrate of one reaction acts as enzyme in another reaction. 
A direct application of the law of mass action to such models
 typically leads to high dimensional differential equations which are often stiff, and difficult to study directly.  
 
 A number of methods have been introduced to address these problems. Most of these methods are based on \emph{quasi steady state assumptions} which take advantage of the  differences in characteristic timescales of the  quantities  being modeled.  It is typically assumed that the chemical species, or some combinations of chemical species, can be divided into two classes: One which equilibrates rapidly, and  a second which evolves more slowly~\cite{hek2010,othmerLee2010}.  Assuming that the members of the first class equilibrate instantaneously  leads to a reduced model involving only elements of the second class.  
 
 Reduction methods differ in their assumption on which chemical species, or combinations thereof, are assigned to the two different classes.   For instance, the  \emph{standard quasi steady state assumption (sQSSA)} posits that the concentrations of the intermediate complexes change quickly compared to the protein concentration~\citep{GK,segel,segelAndSlemrod,maini,NoethenWalcher2006}. An alternative is the \emph{reverse quasi steady state assumption (rQSSA)} where the protein concentration is assumed to change rapidly compared to intermediate complexes~\citep{SchnellMaini2000}. Rigorous justifications of these methods are largely available only for  isolated reactions of the type shown in scheme~\eqref{MM}, and the Goldbeter-Koshland switch\footnote{A Goldbeter-Koshland switch consist of two coupled reactions.  One of these reactions frequently represents protein phosphorylation, and the second  dephosphorylation.}~\citep{GK}. 
 
 The total quasi steady state assumption (tQSSA) was introduced to broaden the range of parameters over which a \emph{quasi steady state assumption} is valid.  Under this assumption  the concentration of the intermediate complex, $C$,  evolves quickly compared   to the sum of the intermediate complex and the protein concentration~\citep{BorghansBoerSegel1996,tzafriri,hegland,PedersenBersaniBersani2008,tzafririEdelman2004}.
Numerical experiments and heuristic arguments suggest that tQSSA may be valid in coupled enzymatic networks over a very broad set of parameters~\citep{CilibertoFabrizioTyson2007}.

Here we aim to provide a theoretical foundation for the reductions used in  numerical studies of enzymatic networks.  A standard model reduction technique for systems involving quantities that change on different timescales is geometric singular perturbation theory (GSPT)~\citep{Fenichel1979,kaper,hek2010}. For instance, this theory has been used by Khoo and Hegland to prove several results obtained earlier by Borghans, et al. using self consistency arguments~\citep{hegland,BorghansBoerSegel1996}.
 GSPT has also been used to reduce other models of biochemical reactions~\citep{ZagarisKaperKaper2004,HardinZagarisKrabWesterhoff2009,othmerLee2010}. We derive a sufficient condition for the validity of tQSSA in arbitrary  networks of  proteins and enzymes provided the interactions are of MM type and  can be modeled by mass action kinetics. This directly extends previous work, like that of Pedersen, et al.~\citep{PedersenBersaniBersaniCortese2008} who proposed a sufficient condition for the validity of tQSSA in the Goldbeter-Koshland switch. 

The direct application of the tQSSA to coupled enzymatic networks generally leads to a differential-algebraic system. The algebraic part of this system consists of coupled quadratic equations that are typically impossible to solve.  Our second aim is to show that, under certain assumptions on the structure of the network, it is possible to circumvent this problem using
ideas introduced by Bennett, et al.~\citep{BennettVolfsonTsimringHasty2007}.  This allows us to obtain reduced set of differential equations for a class of protein interaction networks in terms of protein concentrations only. 

We proceed as follows: In section~\ref{IsolatedMichaelisMentenReaction} we review the original Michaelis--Menten scheme.  We introduce terminology, and illustrate our approach in a simple setting. In
this section we also give a brief overview of the theory of geometric singular perturbation theory, which is fundamental in proving the validity of the reduced equations.  In section~\ref{sec:may3010_TwoProtein} we extend our  approach to a well studied two protein network that plays part in the G2-to-mitosis phase (G2/M) transition in the eukaryotic cell cycle. We present the ideas in the most general setting in section~\ref{sec:TheGeneralProblem}, where we derive the general form of the reduced equations. Each section  begins by the discussion of the tQSSA in the context of the network under consideration, and  closes with a derivation of the reduced equations under the tQSSA, as well as sufficient conditions under which the tQSSA holds.  A number of technical details used in the proofs of the main results are given in the appendices. 
We note that  throughout the presentation the law of mass action is assumed to hold.

\section{Isolated Michaelis-Menten reaction}\label{IsolatedMichaelisMentenReaction}

The MM scheme is frequently used to model enzymatic processes in solution
which are ubiquitous in biology.   As discussed in the introduction, a number of different
approaches have been proposed to justify the reduced equations mathematically.   We start by giving
a detailed overview of the tQSSA approach based on geometric singular perturbation theory
(GSPT)\citep{Fenichel1979}. The setting of a single MM type reaction will be used to introduce the main ideas and difficulties of reducing equations that describe larger reaction networks.

For notational convenience we will use variable names to denote both a chemical species and its concentration.  For instance, $E$  denotes both an enzyme and its concentration.
Reaction~\eqref{MM} reaction obeys two natural  constrains: The total amount of protein and enzyme remain constant.  Therefore,
\begin{equation}\label{25Mar10_XT}
X + C + X_p = X_T, \quad \text{and} \quad 
E + C = E_T, 
\end{equation}
for positive constants $X_T$ and $E_T$.
In conjunction with the constraints~\eqref{25Mar10_XT},  the following system of ordinary differential equations 
can be used to model reaction~\eqref{MM}
\begin{align}\label{25mar10_oneDimEq}
\frac{dX}{dt} &= -k_1X(E_T-C) +k_{-1}C,          &X(0) &= X_T, \notag \\
\frac{dC}{dt} &= k_1X(E_T-C) - (k_{-1}+k_2)C, & C(0)  &= 0.  
\end{align}

\subsection{The total quasi steady state assumption (tQSSA)}
\label{S:intro_one}

Under the standard quasi steady state assumption (sQSSA), the concentration of the substrate--bound enzyme, $C$, equilibrates quickly, which allows 
system~\eqref{25mar10_oneDimEq} to be reduced by one dimension.     Sufficient conditions under
which the sQSSA is valid have been studied extensively~\citep{GK,segel,maini}. However, it has also  been  observed that
the sQSSA is too restrictive~\citep{BorghansBoerSegel1996,tzafriri}.   

To obtain a reduction that is valid for a wider range of parameters, define  $ \bar{X} := X+C$. Eq.~(\ref{25mar10_oneDimEq}) can then be rewritten as
\begin{subequations}\label{26mar10_A}
\begin{align}
\frac{d\bar{X}}{dt} &= -k_2C, 
&\bar{X}(0) &= X_T, \label{26mar10_Aa} \\ 
\frac{dC}{dt} &= k_1[\bar{X}E_T-(\bar{X}+E_T  + k_m)C +C^2], &C(0) &= 0,\label{26mar10_Ab}
\end{align}
\end{subequations}
where $k_m = (k_{-1}+k_2)/k_1$ is the \textit{Michaelis--Menten constant}. 

The tQSSA posits that $C$ equilibrates quickly compared to $\bar{X}$~\citep{BorghansBoerSegel1996,tzafriri}.
Under this assumption we obtain the following differential--algebraic system 
\begin{subequations}\label{27mar10_A}
\begin{align}
\frac{d\bar{X}}{dt} &= -k_2C, &\qquad
\bar{X}(0) = X_T,\label{27mar10_Aa} \\
                  0 &= k_1[\bar{X}E_T-(\bar{X}+E_T  + k_m)C +C^2].\label{27mar10_Ab} 
\end{align}
\end{subequations}
Solving Eq.~(\ref{27mar10_Ab}) and noting that only the negative branch of solutions is stable,  we can express $C$ in terms of $\bar{X}$ to obtain a closed, first order differential equation for $\bar{X}$,
\begin{equation}\label{24May10_barX}
 \frac{d \bar{X}}{dt} = -k_2\frac{
(\bar{X}+E_T  + k_m) - \sqrt{(\bar{X}+E_T  + k_m)^2- 4\bar{X}E_T} } {2}
, \qquad
\bar{X}(0) = X_T.
\end{equation}
Although the reduced equation is given in the $\bar{X}, C$ coordinates, it is easy to revert to the original variables $X,C$.
Therefore, from Eq.~(\ref{24May10_barX}) one can recover an approximation to the solution of Eq.~(\ref{25mar10_oneDimEq}). 

\subsection{Extension of the tQSSA}
\label{S:extension}

An essential step in the tQSSA reduction is the solution of the quadratic equation~(\ref{27mar10_Ab}). A direct  extension  of this approach to networks of chemical reaction typically leads to coupled system of quadratic equations~\citep{CilibertoFabrizioTyson2007,PedersenBersaniBersaniCortese2008,PedersenBersaniBersani2008}. 
The solution of this system may not be unique, and generally needs to be obtained numerically.  
However, an approach introduced by Bennett, et al.~\citep{BennettVolfsonTsimringHasty2007}, can be used to obtain the desired solution from a system of linear equations.  

In particular, we keep the
tQSSA, but look for a reduced equation in the original 
coordinates, $X, C$.  
Using  $ \bar{X} = X + C$ to eliminate $\bar{X}$ from Eq.~(\ref{27mar10_Ab}), we obtain
\begin{equation}\label{28april10_C}
 0 = k_1\left(X(E_T-C) - k_mC\right).
\end{equation}
Eq.~(\ref{28april10_C}) and Eq.~(\ref{27mar10_Ab}) 
are equivalent, but  Eq.~(\ref{28april10_C}) is linear in $C$, and leads to 
\[
 C = \frac{XE_T}{k_m+X},  \text{ and } \bar{X} = X + \frac{XE_T}{k_m+X}.
\]
Using these expressions formulas in Eq.~(\ref{27mar10_Aa}), and applying the chain rule gives
\begin{subequations}
\begin{equation} 
\frac{\partial}{\partial X} \left( X + \frac{XE_T}{k_m+X}\right)\frac{dX}{dt} = -k_2 \frac{XE_T}{k_m+X} 
\quad \Longrightarrow \quad
\frac{dX}{dt} = -k_2 \left( 1 + \frac{k_mE_T}{(k_m+X)^2}\right)^{-1} \frac{XE_T}{k_m+X}. \label{27mar10_dxdt}
\end{equation}

The reduced Eq.~\eqref{27mar10_dxdt} was obtained  under the assumption that there is no significant change in $\bar{X} = X + C$ during the  rapid equilibration. After equilibration,  $C = XE_T/(k_m+X)$ (See Fig.~\ref{fig:initialValue}). Therefore, the initial value for Eq.~(\ref{27mar10_dxdt}), denote by  $\hat{X}(0)$, can be obtained from the initial values  $X(0), C(0)$ using
\begin{equation}\label{june0810_initialVal}
 \hat{X}\left(0\right) + \frac{E_T\hat{X}\left(0\right)}{\hat{X}\left(0\right)+k_m} = X(0)+ C(0)  =  X_T.
\end{equation}
\end{subequations}

Fig.~\ref{fig:initialValue}c)  shows that the solutions of  the full system~(\ref{25mar10_oneDimEq})  and the reduced system~(\ref{27mar10_dxdt}) are close when initial conditions are mapped correctly.

\clearpage
\begin{figure}[t] 
\begin{center}
\includegraphics[scale=.165]{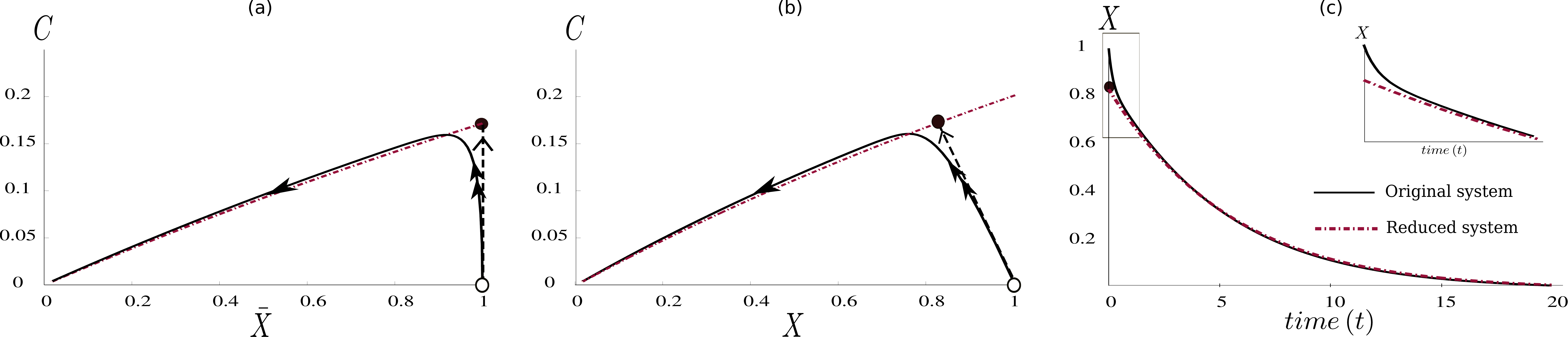}
\end{center}
\caption{\footnotesize{
Proper choice of the initial values of the reduced system. The empty circle at $ \bar{X} = 1,\, C = 0,$ represents the initial value for  the full system. The solid dot is the initial value of the reduced system.  The dash-dotted (red) line represents the attracting \emph{slow manifold}. \emph{(a)}  The solid curve represents the numerical solution of Eq.~(\ref{26mar10_A}).     The solution rapidly converges to the manifold, and evolves slowly along the manifold after this transient. The dashed line satisfies $\bar{X} = X_T$. The solid dot at the intersection of the dashed line and the slow manifold represents the projection of the initial condition onto the slow manifold given by Eq.~(\ref{27mar10_Ab}).  Thus $\bar{X}(0) = X_T$ is the proper initial condition for the reduced system~\eqref{24May10_barX}.  \emph{(b)}  The solid line represents the numerical solution of Eq. (\ref{25mar10_oneDimEq}).  After a quick 
transient, the solution again converges to the slow manifold.  However, since the initial
transient is not orthogonal to the $X$ axis, the initial conditions do not project vertically onto the slow manifold.  Instead, the initial transient follows the line $X + C = X_T$ (dashed), and   the intersection of this  line and the slow manifold represents the proper choice of the initial value for Eq.~(\ref{27mar10_dxdt}). 
\emph{(c)}
Comparison of solutions of Eq.~(\ref{25mar10_oneDimEq}) and the reduced system~(\ref{27mar10_dxdt}). The graph in the inset offers a magnified view of the boxed region, showing the quick transient to the slow manifold. 
We used: $X_T = E_T = k_1 = k_2 = 1, \, k_{-1} = 3$, which, using 
 Eq.~(\ref{june0810_initialVal}), gives the initial condition for the reduced system, $\hat{X}(0) = 0.83$. } }
 \label{fig:initialValue}
\end{figure}
\clearpage

The tQSSA  implies that Eq.~\eqref{26mar10_A} can be approximated by Eq.~\eqref{27mar10_A}.  
Therefore, to explore the conditions under which
 Eq.~(\ref{27mar10_dxdt}) is a valid  reduction of Eq.~(\ref{25mar10_oneDimEq}) we need to 
provide the asymptotic limits under which the transition from  Eq.~\eqref{26mar10_A} to Eq.~\eqref{27mar10_A} is justified. Different sufficient conditions for the tQSSA have been obtained   using self-consistency arguments~\citep{BorghansBoerSegel1996,tzafriri}.  We follow the ideas of  \i{pairwise balance} to look for a proper non-dimensionalisation of variables~\citep{segelAndSlemrod,maini}. Although this method gives a weaker result than the one obtained in~\citep{tzafriri}, it is 
easier to extend to networks of reactions. 

\subsection{Review of Geometric singular perturbation (GSPT)}
\label{S:fenichel}

Since geometric singular perturbation theory (GSPT) is essential in our 
reduction of the equations  describing
coupled enzymatic reactions, we here provide a very brief overview of the theory.  Further details can be found in~\citep{Fenichel1979,wiggins1994,jones1995,kaper,hek2010}.  Readers familiar with
GSPT can skip to section~\ref{tQSSA_valid}. 

Consider a system of ordinary differential equation of the form
\begin{subequations}\label{E:group}
\begin{align}\label{10may10_GSPT}
 \epsilon \frac{du}{dt} &= f(u,v,\epsilon), & u(0) &= u_0, \nonumber \\
          \frac{dv}{dt} &= g(u,v,\epsilon), & v(0) &= v_0, 
\end{align}
where $u \in \mathbb{R}^k $ and $v \in \mathbb{R}^l $ with $k,l \ge 1$, and $u_0 \in \mathbb{R}^k $, $v_0 \in \mathbb{R}^l $ are initial values. The parameter $\epsilon$ is assumed to be small and positive $(0 < \epsilon \ll 1)$, the  functions $f$ and $g$  smooth, 
\begin{equation} \label{E:cond1}
f(u,v,0) \not\equiv 0,   g(u,v,0) \not\equiv 0, \quad \text{and} \quad \lim_{\epsilon \rightarrow 0} \epsilon g(u,v,\epsilon) \equiv 0.
\end{equation}
\end{subequations}
The variable $u$ is termed the \emph{fast} variable, and $v$  the \emph{slow} variable.

Assume that $\mathcal{M}_0 := \{(u,v) \in \mathbb{R}^{k+l} \, |\, f(u,v,0) = 0 \}$ is a compact, smooth manifold with inflowing boundary. Suppose further that the eigenvalues $\lambda_i$ of the Jacobian $\frac{\partial f}{\partial u}(u,v,0)|_{\mathcal{M}_0}$ all satisfy $Re(\lambda_i) < 0$,  so that $\mathcal{M}_0$ is \emph{normally hyperbolic}.
Then, for $\epsilon$ sufficiently small, the solutions of Eq.~(\ref{10may10_GSPT}) follow an initial transient, which can be approximated by 
\begin{align}\label{may3010_GSPT_Breakb}
          \frac{du}{ds} &= f(u,v,0) , & u(0) = u_0, \nonumber \\
          \frac{dv}{ds} &= 0        , & v(0) = v_0,
\end{align}
where $t = \epsilon s$.  After this transient,  the solutions are $\mathcal{O}(\epsilon)$ close to the solutions of the reduced system
\begin{align}
                     0  &= f(u,v,0),     \nonumber  \\
          \frac{dv}{dt} &= g(u,v,0), \quad v(0) = v_0.  \label{may3010_GSPT_Breaka}
\end{align} 
 More precisely  there is an invariant, slow manifold $\mathcal{M}_\epsilon$,  $\mathcal{O}(\epsilon)$ close to $\mathcal{M}_0$. Solutions of Eq.~(\ref{10may10_GSPT}) are attracted to $\mathcal{M}_0$ exponentially fast,  and can be approximated by concatenating the fast transient described by
 Eq.~\eqref{may3010_GSPT_Breakb}, and the solution of the reduced Eq.~\eqref{may3010_GSPT_Breaka}.

 The slow manifold, $\mathcal{M}_0$, consists of the fixed points of Eq.~(\ref{may3010_GSPT_Breakb}). The condition that the eigenvalues, $\lambda_i$,  of the Jacobian $\frac{\partial f}{\partial u}(u,v,0)|_{\mathcal{M}_0}$ all satisfy $Re(\lambda_i) < 0$ implies that these fixed points  are stable. 

\subsection{Validity of the tQSSA}\label{tQSSA_valid}
\label{S:validity1}

We next show that GSPT can be applied to Eq.~(\ref{26mar10_A}), after a suitable rescaling of  variables~\citep{segelAndSlemrod,maini}.
Let
\begin{equation}\label{newVariables}
\tau = \frac{t}{T_{\bar{X}}}, \quad \bar{x}(\tau) = \frac{\bar{X}(t)}{X_T}, \quad c(\tau) = \frac{C(t)}{\beta}. 
\end{equation}
We have some freedom in defining $\beta$ and $T_{\bar{X}}$.  Using 
 the method of \emph{pairwise balance}~\citep{maini,segelAndSlemrod}, we let
\begin{equation}\label{may2610_betaAndTxBar}
 \beta = \frac{X_TE_T}{X_T+E_T  + k_m}, \quad \text{and} \quad  T_{\bar{X}} = \frac{X_T}{k_2 \beta}.
\end{equation}
In the rescaled variables, Eq.~(\ref{26mar10_A}) 
takes the form
\begin{subequations}\label{8may10_B}
\begin{align}
\frac{d\bar{x}}{d\tau} &= -  c, 
&\bar{x}(0)&=1,\label{8may10_Ba} \\
\frac{k_2}{k_1}\frac{E_T}{(E_T+X_T+k_m)^2} \frac{ dc }{d\tau} &= \bar{x}-\frac{X_T\bar{x}+E_T  + k_m}{X_T+E_T  + k_m} c +\frac{X_TE_T}{(E_T+X_T+k_m)^2}c^2, &c(0)&=0. \label{8may10_Bb}
\end{align}
\end{subequations}
Define the parameter
\begin{equation}\label{27mar10_eps}
 \epsilon := \frac{\beta}{k_1T_{\bar{X}}X_TE_T}  = \frac{k_2}{k_1}\frac{E_T}{(E_T+X_T+k_m)^2}  .
\end{equation}
For small $\epsilon$,  Eq.~\eqref{8may10_B} is singularly perturbed and has the form given in Eq.~(\ref{10may10_GSPT}). Indeed, we can apply GSPT to Eq.~\eqref{8may10_B} directly since in
the limit $\epsilon \rightarrow 0$ the right hand side of Eq.~(\ref{8may10_B}) remains ${\mathcal O}(1)$.  Indeed, the requirement $0 < \epsilon \ll 1$, is  equivalent to the sufficient condition for the validity of the tQSSA  derived in~\citep{BorghansBoerSegel1996}. 

GSPT implies that for small $\epsilon$,  solutions of 
Eq.~(\ref{8may10_B}) are close to those of the reduced system 
\begin{subequations} \label{may2610_rescaled}
\begin{align}
\frac{d\bar{x}}{d\tau} &= -c,&\bar{x}(0)&=1, \label{may2610_rescaled_a}\\
                     0 &=  \, \bar{x}-\frac{X_T\bar{x}+E_T  + k_m}{X_T+E_T  + k_m} c +\frac{X_TE_T}{(E_T+X_T+k_m)^2}c^2 \label{may2610_rescaled_b}.
\end{align}
\end{subequations}

The normal hyperbolicity and stability of the manifold defined by Eq.~(\ref{may2610_rescaled_b}) can be verified directly, and also follow from the results of section~\ref{sec:TheGeneralProblem}.   It follows that GSPT can be applied to conclude that GSPT implies that Eq.~(\ref{may2610_rescaled}) is a reduction of Eq.~(\ref{8may10_B}). 


The validity of the reduction in these rescaled equations implies its validity in the original coordinates:   Eq.~(\ref{8may10_B}) is equivalent to  Eq.~(\ref{26mar10_A}) via the scaling given in Eq.~(\ref{newVariables}).    Hence, Eq.~(\ref{27mar10_A}) and Eq.~(\ref{may2610_rescaled}) are related by the same scaling relationship.  We note that choosing the initial values of the intermediate complexes to be zero, implies that solutions of~\eqref{8may10_B} remain ${\mathcal O}(1)$ for small $\epsilon$ (see section \ref{Sect.ZeroInitComplex} for a detailed discussion).
It follows that
Eq.~(\ref{27mar10_A}) is a valid reduction of Eq.~(\ref{26mar10_A}) when $\epsilon$ is  sufficiently small.
Hence, for $\epsilon$ in the same range, Eq.~(\ref{27mar10_dxdt}), with initial values satisfying Eq.~(\ref{june0810_initialVal}), is a valid reduction of Eq.~(\ref{25mar10_oneDimEq}).

Lemma~\ref{lemma1} in the Appendix shows that $\epsilon$ is always smaller than $1/4$.  Although
this is suggestive, GSPT only guarantees the validity of the reduced equations in some unspecified range
of $\epsilon$ values.

\section{Analysis of a two protein network}\label{sec:may3010_TwoProtein}

We next show how the reduction described in the previous section extends to a network of MM reactions.  Here the substrate of one reaction acts as an enzyme in another reaction. To illustrate the main ideas used in reducing the corresponding equations,
we start with a concrete example of two interacting proteins.  

\clearpage
\begin{figure}[t]
\begin{center}
\includegraphics[scale=.19]{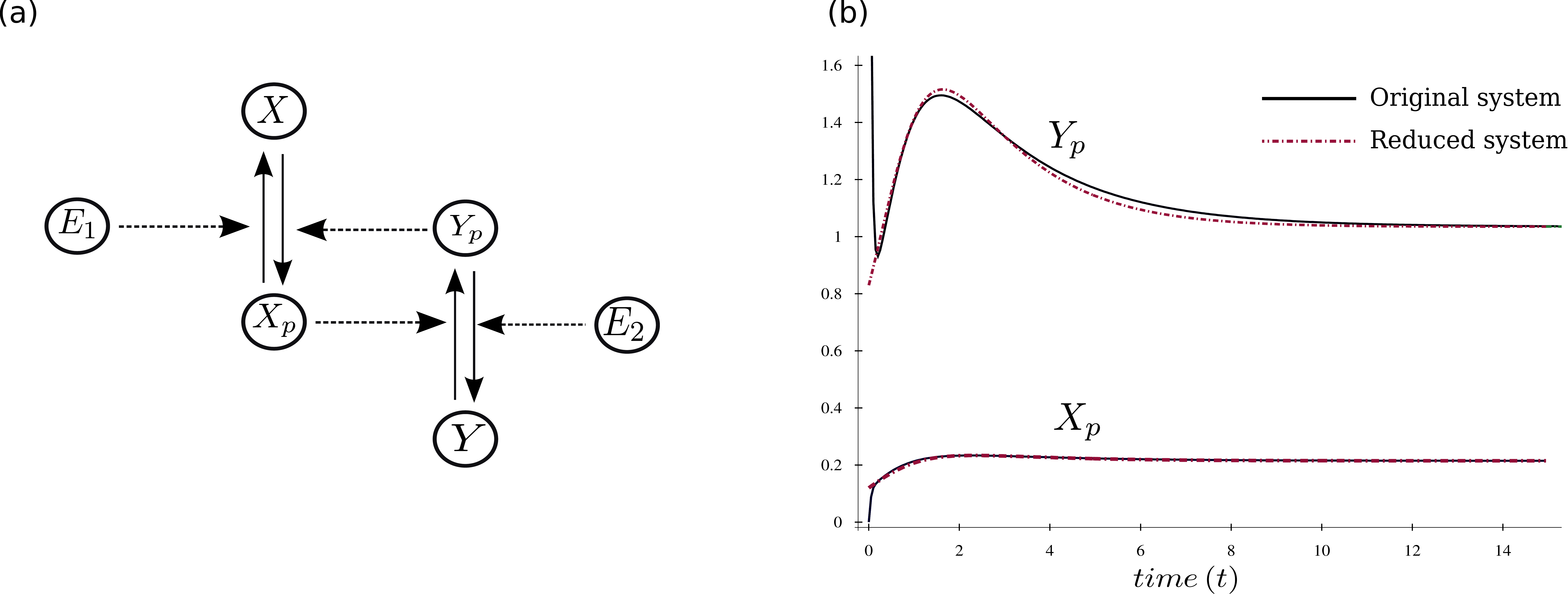}
\end{center}
\caption{\footnotesize{ A simplified description of interactions between two regulators of the G2-to-mitosis phase (G2/M) transition in the eukaryotic cell cycle \citep{NovakTyson1993} (See text). \emph{(a)} $X$ and $Y$ phosphorylate and deactivate each other. For instance, the protein
  $X$ exists in a \textit{phosphorylated} $X_p$ and \textit{unphosphorylated} $X$  state, and the conversion $X $ to $ X_p$ is catalyzed by $Y_p$.   The conversion of $X_p$ to $X$ is catalyzed by the phosphatase $E_1$.   \emph{(b)} 
Comparison of the numerical solution of Eq.~(\ref{27feb10_ode6dim}) and  Eq.~(\ref{02Mar10_finalSolution}). Here  $k_1 = 5, k_{-1}=1, k_{2} = 1, E_1^T = 10, E_2^T = 2, X_T = 10, Y_T = 10.1$ as in~\citep{CilibertoFabrizioTyson2007}. The initial values for  Eq.~(\ref{27feb10_ode6dim}) are $X(0) = 10, Y(0) = 1.1, X_p(0) = 0, Y_p(0) = 9, C_x(0) = 0, C_y(0) = 0, C_x^e(0) = 0, C_y^e(0) = 0, E_1(0) = 10, E_2(0) = 2$. The initial values of the reduced system, $\hat{X}_p(0) = 0.12, \hat{Y}_p(0) = 0.83$ are obtained by the projection onto the slow manifold defined by Eq.~\eqref{E:projection}.
} }
\label{25feb10_fig1}
\end{figure}
\clearpage

Fig.~\ref{25feb10_fig1}a)   is a simplified depiction of the interactions between two regulators of the G2-to-mitosis phase (G2/M) transition in the eukaryotic cell cycle \citep{NovakTyson1993}. Here, $Y$ represents \emph{MPF} (M-phase promoting factor, a dimer of \emph{Cdc2} and cyclin B) and $X$ represents \emph{Wee1} (a kinase that phosphorylates and deactivates \emph{Cdc2}). The proteins
  exist in a \textit{phosphorylated} state, $X_p,Y_p$, and an \textit{unphosphorylated} state, $X,Y$, with the phosphorylated state being less active.  The proteins $X$ and $Y$ deactivate each other, and hence act as antagonists. In this network $E_1$ and $E_2$ represent phosphatases that catalyze the conversion of $X_p$ and $Y_p$ to $X$ and $Y,$ respectively. Each dotted arrow  in Fig.~\ref{25feb10_fig1}a) is associated with exactly one  MM type reaction in the list of reactions given below. The sources of the arrows act as enzymes.  Therefore, Fig. ~\ref{25feb10_fig1}a) represents the following network of reactions
\begin{eqnarray*}
Y_p+X \overset{k_1}{\underset{k_{-1}}{\rightleftarrows}} C_x \overset{k_2}{\longrightarrow}   X_p +Y_p, && 
E_1+X_p \overset{k_1}{\underset{k_{-1}}{\rightleftarrows}} C_x^e \overset{k_2}{\longrightarrow}  X+E_1 , \\
X_p+Y \overset{k_1}{\underset{k_{-1}}{\rightleftarrows}} C_y \overset{k_2}{\longrightarrow} Y_p+ X_p, &&
E_2+Y_p \overset{k_1}{\underset{k_{-1}}{\rightleftarrows}} C_y^e \overset{k_2}{\longrightarrow}   Y+E_2.
\end{eqnarray*}
To simplify the exposition, we have assumed some homogeneity in the rates.  
Since the total concentration of proteins and enzymes is assumed  fixed, the 
system obeys the following set of  constraints 
\begin{align*}
 X_T &= X(t) + X_p(t) + C_x(t) + C_y(t)+C_x^e(t) ,  
&E_1^T &=  C_x^e(t) + E_1(t) , \\
Y_T &=  Y(t) + Y_p(t) + C_x(t) + C_y(t)+ C_y^e(t) , 
&E_2^T  &=  C_y^e(t) + E_2(t),
\end{align*}
where $X_T, Y_T, E_1^T, E_2^T$ are constant and represent the total concentrations of the respective 
 proteins and enzymes.   Along with these constraints the concentrations
 of  the ten species
in the reaction evolve according to
 \begin{align}
\frac{dX_p}{dt} 
&= -k_1\underbrace{(Y_T-Y_p-C_x-C_y-C_y^e)}_{= Y}X_p-k_1X_p\underbrace{(E_1^T-C_x^e)}_{= E_1} 
+k_{-1}C_x^e+(k_{-1}+k_2)C_y+k_2C_x,
\notag
\\
\frac{dY_p}{dt} 
&= -k_1\underbrace{(X_T-X_p-C_x-C_y-C_x^e)}_{=X}Y_p-k_1Y_p\underbrace{(E_2^T-C_y^e)}_{=E_2} 
+k_{-1}C_y^e+(k_{-1}+k_2)C_x+k_2C_y, \notag
\\
\frac{dC_x}{dt} &= k_1\underbrace{(X_T-X_p-C_x-C_y-C_x^e)}_{=X}Y_p-(k_{-1}+k_2)C_x,  \label{27feb10_ode6dim}\\
\frac{dC_y}{dt} &=k_1\underbrace{(Y_T-Y_p-C_x-C_y-C_y^e)}_{= Y}X_p-(k_{-1}+k_2)C_y, \notag \\
\frac{dC_x^e}{dt} &= k_1X_p\underbrace{(E_1^T-C_x^e)}_{= E_1}-(k_{-1}+k_2)C_x^e,  \notag\\
\frac{dC_y^e}{dt} &=k_1Y_p\underbrace{(E_2^T-C_y^e)}_{=E_2}-(k_{-1}+k_2)C_y^e, \notag
\end{align}
with initial values 
\begin{equation} \label{E:initial1}
 C_x(0) = 0, \quad C_y(0) = 0,\quad C_x^e(0) = 0,\quad C_y^e(0) = 0.
\end{equation}
The initial values of $X_p$  and $Y_p$ are arbitrary. 

Following the approach in the previous section, we  reduce Eq.~\eqref{27feb10_ode6dim} to a two dimensional system.  Assuming the validity of the tQSSA, we obtain an approximating differential--algebraic system.  Solving the algebraic equations, which are linear in the original coordinates, leads to a closed, reduced system of ODEs. We  end by discussing the validity of the tQSSA.
    
\subsection{New coordinates and reduction under the tQSSA}
\label{S:intro_two}

To extend the tQSSA we define a new set of variables by adding the concentration of the free state of a species to the concentrations of all intermediate complexes formed by that particular species as reactant~\citep{CilibertoFabrizioTyson2007}, 
\begin{equation}\label{08feb10_slowVar2}
 \begin{array}{ccc} 
\bar{X}_p &:=& X_p  + C_y + C_x^e, \vspace{1mm}\\
\bar{Y}_p &:=& Y_p  + C_x + C_y^e.
 \end{array}
\end{equation}

%
%
Under the tQSSA, the 
intermediate complexes equilibrate quickly compared to the variables
$\bar{X}_p$ and
$\bar{Y}_p$. In the coordinates defined by Eq.~\eqref{08feb10_slowVar2}, Eq.~\eqref{27feb10_ode6dim} takes the form
\begin{subequations}\label{03mar10_TQSSA}
\begin{align}
\frac{d\bar{X}_p}{dt} &= k_2C_x -k_2C_x^e, \label{03mar10_TQSSAa}\\
\frac{d\bar{Y}_p}{dt} &=k_2C_y - k_2C_y^e, \label{03mar10_TQSSAb}\\
0 &= k_1(X_T-\bar{X}_p-C_x)(\bar{Y}_p-C_x -C_y^e)-(k_{-1}+k_2)C_x,  \label{03mar10_TQSSAc}\\
0 &=k_1(Y_T-\bar{Y}_p-C_y)(\bar{X}_p-C_y -C_x^e)-(k_{-1}+k_2)C_y, \label{03mar10_TQSSAd} \\
0 &= k_1(\bar{X}_p-C_y -C_x^e)(E_1^T-C_x^e)-(k_{-1}+k_2)C_x^e,  \label{03mar10_TQSSAe}\\
0 &=k_1(\bar{Y}_p-C_x -C_y^e)(E_2^T-C_y^e)-(k_{-1}+k_2)C_y^e. \label{03mar10_TQSSAf}
\end{align}
\end{subequations}
Solving the coupled system of quadratic equations~(\ref{03mar10_TQSSAc}-\ref{03mar10_TQSSAf}) in terms of $\bar{X}_p,\bar{Y}_p$ appears to be possible only numerically, as it is equivalent to 
finding the roots of a degree 16 polynomial~\citep{CilibertoFabrizioTyson2007}. 
However, since we are interested in the dynamics of $X_p $ and  $ Y_p$, we can 
proceed as in the previous section: Using Eq.~ (\ref{08feb10_slowVar2}) in    (\ref{03mar10_TQSSAc}-\ref{03mar10_TQSSAf})  gives a linear system in $C_x,C_y,C_x^e,C_y^e$.  Defining $k_m := (k_{-1}+k_2)/k_1$, this 
system  can be written in matrix form  as
\begin{equation}\label{2mar10_manifoldMatrix2}
 \left[\begin{array}{cccc}
        Y_p+k_m   &   Y_p          & Y_p          &         0        \\
        X_p           &   X_p+k_m  &      0        &    X_p         \\
              0         &         0        & X_p+k_m &         0        \\
              0         &         0        &       0        &   Y_p+k_m
       \end{array}
 \right]
\left[\begin{array}{c}
        C_x \\ C_y \\C_x^e \\ C_y^e
       \end{array}
 \right]
=
\left[\begin{array}{c}
        Y_p(X_T-X_p) \\ X_p(Y_T-t_p) \\ X_pE_1^T \\ Y_pE_2^T
       \end{array}
 \right].
\end{equation}
The coefficient matrix above is invertible and  
 Eq.~(\ref{2mar10_manifoldMatrix2}) can be solved to obtain $C_x,C_y,C_x^e,C_y^e$ as functions of $X_p,Y_p$.  Denoting  the resulting solutions as  $C_x(X_p,Y_p),$ $ C_y(X_p,Y_p),$ $C_x^e(X_p,Y_p),$ $C_y^e(X_p,Y_p)$ and using them  in Eqs.(\ref{03mar10_TQSSAa}-\ref{03mar10_TQSSAb}) we obtain the closed system of equations
\begin{eqnarray}
 \frac{d}{dt}\left[\begin{array}{c} 
                    \bar{X}_p \\ \bar{Y}_p
                   \end{array}
 \right] &=& k_2\left[\begin{array}{c} 
                    C_x(X_p,Y_p) - C_x^e(X_p,Y_p) \\ C_y(X_p,Y_p)-C_y^e(X_p,Y_p)
                   \end{array}
 \right]. \notag 
 \end{eqnarray}
Reverting to the original coordinates, $X_p $ and $ Y_p$, and using the chain rule gives
\begin{eqnarray}
\frac{d}{dt}\left[\begin{array}{c} 
                    X_p+C_y(X_p,Y_p)+C_x^e(X_p,Y_p) \\ Y_p+C_x(X_p,Y_p)+C_y^e(X_p,Y_p)
                   \end{array}
 \right] &=& k_2\left[\begin{array}{c} 
                    C_x(X_p,Y_p) - C_x^e(X_p,Y_p) \\ C_y(X_p,Y_p)-C_y^e(X_p,Y_p)
                   \end{array}
 \right] \Longrightarrow  \notag \\
\left[\begin{array}{cc}
       1 + \frac{\partial C_y }{\partial X_p }+\frac{\partial C_x^e  }{\partial X_p } & \frac{\partial C_y }{\partial Y_p } + \frac{\partial C_x^e}{\partial Y_p} \\
  \frac{\partial C_x }{\partial X_p } +  \frac{\partial C_y^e }{\partial X_p } &1 + \frac{\partial C_x}{\partial Y_p } +\frac{\partial C_y^e }{\partial Y_p}
      \end{array}
 \right]
\frac{d}{dt}\left[\begin{array}{c} 
                    X_p \\ Y_p
                   \end{array}
 \right] &=& k_2\left[\begin{array}{c} 
                    C_x(X_p,Y_p) - C_x^e(X_p,Y_p) \\ C_y(X_p,Y_p)-C_y^e(X_p,Y_p)
                   \end{array}
 \right]. \label{may2810_final}
\end{eqnarray}

The initial values of Eq.~(\ref{may2810_final}) are determined by projecting the initial values, given by  Eq.~(\ref{E:initial1}),  onto the slow manifold. Unfortunately, they can be expressed only implicitly. The reduction from Eq.~(\ref{27feb10_ode6dim}) to Eq.~(\ref{may2810_final}) was obtained under the assumption that $\bar{X}_p = X_p + C_y + C_x^e$ and $\bar{Y}_p = Y_p + C_x + C_y^e$ are slow variables, and hence constant during the transient to the slow manifold. Therefore the projections of the initial conditions onto the slow manifold, $\hat{X}_p(0)$ and
$\hat{Y}_p(0)$, are related to the original initial conditions
as
\begin{equation} \label{E:projection}
\begin{array}{ccccccc}
\hat{X}_p(0)  + C_y(\hat{X}_p(0) ,\hat{Y}_p(0)) + C_x^e(\hat{X}_p(0),\hat{Y}_p(0))\quad &=&  X_p(0) + C_y(0) + C_x^e(0) \quad&=& X_T, \\
\hat{Y}_p(0) + C_x(\hat{X}_p(0),\hat{Y}_p(0)) + C_y^e(\hat{X}_p(0),\hat{Y}_p(0))\quad &=&  Y_p(0) + C_x(0) + C_y^e(0) \quad&=& Y_T.
\end{array}
\end{equation}
 
 We have therefore shown that,  if the tQSSA holds, and if the coefficient matrix on the left hand side of Eq.~(\ref{may2810_final}) is invertible, then 
\begin{eqnarray}\label{02Mar10_finalSolution}
\frac{d}{dt}\left[\begin{array}{c} 
                    X_p \\ Y_p
                   \end{array}
 \right] &=& k_2 \left[\begin{array}{cc}
       1 + \frac{\partial C_y }{\partial X_p }+\frac{\partial C_x^e  }{\partial X_p } & \frac{\partial C_y }{\partial Y_p } + \frac{\partial C_x^e}{\partial Y_p} \\
  \frac{\partial C_x }{\partial X_p } +  \frac{\partial C_y^e }{\partial X_p } &1 + \frac{\partial C_x}{\partial Y_p } +\frac{\partial C_y^e }{\partial Y_p} 
      \end{array}
 \right]^{-1}\left[\begin{array}{c} 
                    C_x(X_p, Y_p) - C_x^e(X_p, Y_p) \\ C_y(X_p, Y_p)-C_y^e(X_p, Y_p)
                   \end{array}
 \right],   
\end{eqnarray}
with initial value obtained by solving Eq.~(\ref{E:projection}), 
is a valid approximation of Eq.~(\ref{27feb10_ode6dim}).
Fig.~\ref{25feb10_fig1}b) shows that the solutions of the two systems are indeed close, after an initial transient.

\subsection{Validity of the tQSSA for two interacting proteins}
\label{S:validity2}

To reveal the asymptotic limits for which the tQSSA holds, we again 
rescale  the original equations. In particular,  $\bar{X}_p$ and $\bar{Y}_p$ are scaled by the total concentration of the respective proteins. To scale the intermediate complexes, each MM reaction in this network is treated as isolated. The scaling factors are then obtained analogously to $\beta$ in Eq.~(\ref{may2610_betaAndTxBar}). Let
\begin{align*}
\alpha_x &:= \frac{X_TY_T}{X_T+Y_T+k_m} ,  &\alpha_y &:=\frac{X_TY_T}{X_T+Y_T+k_m},      \\
\beta_x^e &:= \frac{X_TE_1^T}{X_T+E_1^T+k_m} , &\beta_y^e &:=\frac{Y_TE_2^T}{Y_T+E_2^T+k_m},
\end{align*}
and
\begin{equation*}
 T_s := \text{max} \left\{\frac{X_T}{k_2\alpha_x},\frac{X_T}{k_2\beta_x^e} ,\frac{Y_T}{k_2\alpha_y},\frac{Y_T}{k_2\beta_y^e} \right\}.
\end{equation*}
Therefore, $T_s$ is obtained analogously to $T_{\bar{X}}$ in Eq.~(\ref{may2610_betaAndTxBar}).
The reason for choosing the maximum will become evident shortly.
The rescaled variables are now defined as
\begin{eqnarray}\label{11may10_newVar}
 \tau := \frac{t}{T_s},\quad \bar{x}_p(\tau) &:=& \frac{\bar{X}_p(t)}{X_T}, \quad \bar{y}_p(\tau) := \frac{\bar{Y}_p(t)}{Y_T}, \notag\\
c_x(\tau) := \frac{C_x(t)}{\alpha_x} ,\quad  c_y(\tau) := \frac{C_y(t)}{\alpha_y}, &\quad&
c_x^e(\tau) := \frac{C_x^e(t)}{\beta_x^e} , \quad c_y^e(\tau) := \frac{C_y^e(t)}{\beta_y^e}.  
\end{eqnarray}

Using Eq.~(\ref{08feb10_slowVar2}) in the Eq.~(\ref{27feb10_ode6dim}) to eliminate $X_p, Y_p$, and then 
applying the rescaling, defined by Eq.~ (\ref{11may10_newVar}), to the new ODE  we obtain 
\begin{subequations}\label{27feb10_ode6dimNewVarRescaled}
\begin{align} 
\frac{d\bar{x}_p}{d\tau} 
&=
 \frac{k_2\alpha_xT_s}{X_T}c_x -\frac{k_2\beta_x^eT_s}{X_T}c_x^e, \label{27feb10_ode6dimNewVarRescaleda}
\\
\frac{d\bar{y}_p}{d\tau} 
&=
 \frac{k_2\alpha_yT_s}{Y_T}c_y -\frac{k_2\beta_y^eT_s}{Y_T}c_y^e, \label{27feb10_ode6dimNewVarRescaledb}
\\
\underbrace{\frac{\alpha_x}{k_1X_TY_TT_s}}_{\le \epsilon_x}\frac{dc_x}{d\tau} 
&=\begin{array}{l} \big[
\bar{y}_p - \bar{x}_p \bar{y}_p - \frac{\alpha_x }{X_T} c_x \bar{y}_p - \frac{\alpha_x }{Y_T}c_x - \frac{\beta_y^e }{Y_T}c_y^e + \frac{ \alpha_x }{Y_T}c_x \bar{x}_p 
+ \frac{\beta_y^e }{Y_T}c_y^e \bar{x}_p + \frac{\alpha_x^2 }{X_T Y_T}c_x^2 
\\
+ \frac{ \alpha_x \beta_y^e }{X_T Y_T}c_x c_y^e - \frac{\alpha_x  k_m}{X_T Y_T}c_x \big], \end{array} \label{27feb10_ode6dimNewVarRescaledc} 
\\
\underbrace{\frac{\alpha_y}{k_1X_TY_TT_s}}_{\le \epsilon_y}\frac{dc_y}{d\tau} 
&= \begin{array}{l} \big[
\bar{x}_p - \frac{\beta_x^e }{X_T}c_x^e - \frac{\alpha_y }{X_T}c_y - \bar{x}_p \bar{y}_p + \frac{\beta_x^e }{X_T}c_x^e \bar{y}_p + \frac{ \alpha_y }{X_T}c_y \bar{y}_p 
- \frac{\alpha_y }{Y_T}c_y \bar{x}_p + \frac{\alpha_y \beta_x^e }{X_T Y_T}c_x^e c_y 
\\
+ \frac{ \alpha_y^2 }{X_T Y_T}c_y^2 - \frac{\alpha_y  k_m}{X_T Y_T}c_y \big], \end{array} \label{27feb10_ode6dimNewVarRescaledd} 
\\
\underbrace{\frac{\beta_x^e}{k_1X_TE_1^TT_s}}_{\le \epsilon_x^e}\frac{dc_x^e}{d\tau} 
&=
\bar{x}_p - \frac{\beta_x^e }{E_1^T}c_x^e \bar{x}_p - \frac{\beta_x^e }{X_T}c_x^e - \frac{\alpha_y }{X_T}c_y + \frac{(\beta_x^e)^2 }{ E_1^T X_T}(c_x^e)^2 
+ \frac{\alpha_y \beta_x^e }{E_1^T X_T}c_x^e c_y - \frac{\beta_x^e  k_m}{E_1^T X_T}c_x^e ,
 \label{27feb10_ode6dimNewVarRescalede}
\\
\underbrace{\frac{\beta_x^e}{k_1E_2^TY_TT_s}}_{\le \epsilon_y^e}\frac{dc_y^e}{d\tau} 
&=
\bar{y}_p - \frac{\beta_y^e }{E_2^T}c_y^e \bar{y}_p - \frac{\alpha_x }{Y_T}c_x -\frac{\beta_y^e }{Y_T}c_y^e + \frac{\alpha_x \beta_y^e }{ E_2^T Y_T}c_x c_y^e  
+ \frac{(\beta_y^e)^2 }{E_2^T Y_T}(c_y^e)^2 - \frac{\beta_y^e  k_m}{E_2^T Y_T}c_y^e ,
\label{27feb10_ode6dimNewVarRescaledf}
\end{align}
\end{subequations}
where\begin{align*}
 \epsilon_x &:= \frac{k_2}{k_1} \frac{Y_T}{(X_T + Y_T + k_m)^2}, &\epsilon_y &:= \frac{k_2}{k_1} \frac{X_T}{(Y_T + X_T + k_m)^2}, \\
 \epsilon_x^e &:= \frac{k_2}{k_1} \frac{E_1^T}{(X_T + E_1^T + k_m)^2},  &\epsilon_y^e &:=  \frac{k_2}{k_1} \frac{E_2^T}{(Y_T + E_2^T + k_m)^2}.
 \end{align*}
The bounds on these coefficients follow from the definition of $T_s$. Since
$
({1}/{T_s} )\le ({k_2 \alpha_x}/{X_T}),
$

\[
\frac{\alpha_x}{k_1X_TY_TT_s} 
\le
\frac{k_2}{k_1} \frac{\alpha_x^2}{X_T^2 Y_T} = \frac{k_2}{k_1} \frac{1}{X_T^2 Y_T} \left(\frac{X_TY_T}{X_T+Y_T+k_m}\right)^2 = \epsilon_x.
\]
Similarly, 
\[
\frac{\alpha_y}{k_1X_TY_TT_s} \le \epsilon_y, \quad
\frac{\beta_x^e}{k_1X_TE_1^TT_s} \le \epsilon_x^e, \quad
\text{and}
\quad 
\frac{\beta_x^e}{k_1E_2^TY_TT_s} \le \epsilon_y^e.
\]
Finally, we define 
\begin{equation}\label{4may10_epsilon}
 \epsilon := \max\left\{ \epsilon_x,  \epsilon_y,\epsilon_x^e, \epsilon_y^e 
 \right\}.
\end{equation}
The definitions of scaling factors in (\ref{11may10_newVar}) imply that all the coefficients on the right hand side of (\ref{27feb10_ode6dimNewVarRescaledc}--\ref{27feb10_ode6dimNewVarRescaledf}) are $\mathcal{O}(1)$. Therefore, in the asymptotic limit $\epsilon \rightarrow 0$, Eq.~(\ref{27feb10_ode6dimNewVarRescaled}) defines a singularly perturbed system. Since the two equations 
are related by the scaling given in Eq.~(\ref{11may10_newVar}), we can conclude that in the limit $\epsilon \rightarrow 0$, the tQSSA is valid.  If additionally the slow manifold is normally hyperbolic, then  Eq.~(\ref{03mar10_TQSSA}) is a valid reduced model of the network's dynamics.  The normal hyperbolicity and stability of the slow manifold will be proved in a general setting in section~\ref{sec:TheGeneralProblem} .
%

\section{The general problem}\label{sec:TheGeneralProblem} 

We next describe how to obtain reduced equations describing the dynamics of a large class of protein interaction networks~\citep{huangFerrell1996,novakPatakiCilibertoTyson2001,novakPatakiCilibertoTyson2003,UriAlon2007,davitichBornholdt2008,stadman1977,NovakTyson1993,goldbeter91}.
We again assume that the proteins interact via MM type reactions, and that a generalization of the tQSSA holds~\citep{CilibertoFabrizioTyson2007}. We will follow the steps that lead to the reduced systems in the previous two sections: After describing the model and the conserved quantities, we recast the equations in terms of the ``total'' protein concentrations (\emph{cf.} sections~\ref{S:intro_one} and \ref{S:intro_two}).    Under a generalized tQSSA, these equations can be reduced 
to an algebraic-differential system.  We show that the algebraic part of the system is linear  in the 
original coordinates (\emph{cf.} sections~\ref{S:extension} and~\ref{S:intro_two}), so that the reduced system can be described by a differential equation with dimension equal to the
number of interacting proteins.  We next show that this reduction is justified by proving that the singularly perturbed system we examine satisfies the conditions of GSPT (\emph{cf.} section~\ref{S:fenichel}).  Finally, we describe  
the asymptotic conditions under which the system is singularly perturbed,
following the arguments in sections~\ref{S:validity1} and~\ref{S:validity2}.

\subsection{Description of the network}\label{june2110_setup}

We start by defining the nodes and edges of a general protein interaction  network. The nodes in this network represent enzymes as well as proteins, while the edges represent the catalytic effect  one species has on another. Proteins are assumed to come in two states, phosphorylated and unphosphorylated.  Both states are represented by a single node in this network.  Fig.~\ref{fig:june1510_nNode} and the following description make these definitions 
precise.

\clearpage
\begin{figure}[t]
\begin{center}
\includegraphics[scale = .15]{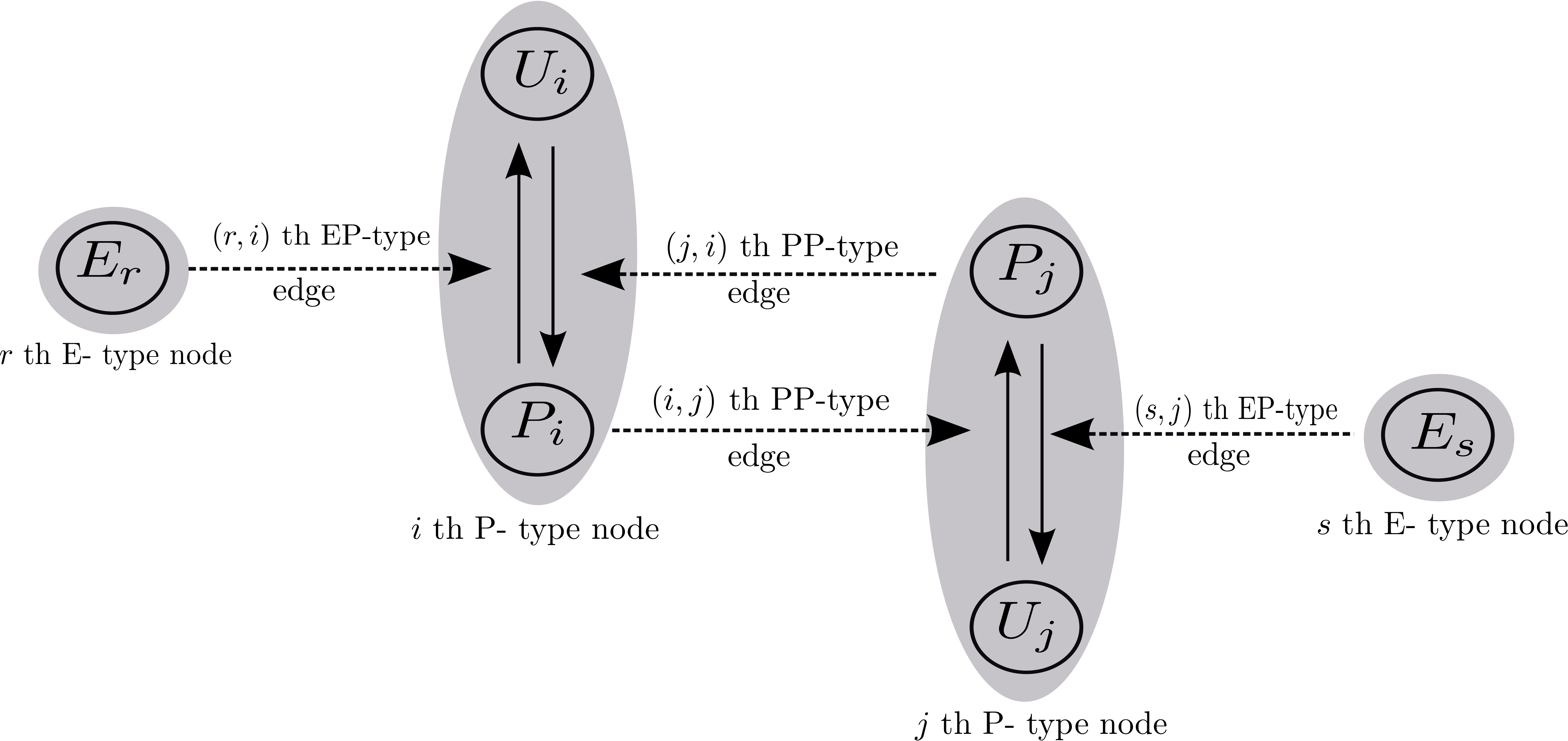}
\end{center}
\caption{ \footnotesize{ A simple example illustrating the terminology used in describing protein interaction networks. The shaded regions represent nodes and encompass either an enzyme or a single protein that is part of an MM type reaction.  Each dotted arrow represents an edge in the network. The  solid arrows represent transitions within the nodes, and do not define an edge in the network. } }
\label{fig:june1510_nNode}
\end{figure}
\clearpage

In a network of  $n$ interacting proteins, and $n$ associated enzymes, we define the following: 

\textbf{Nodes:} The two types of nodes in this network represent proteins (P-type nodes)  and  enzymes (E-type nodes).  Each protein can exist in either an \emph{active} or \emph{inactive} form. The \emph{inactive} form of the $i$th protein is denoted by $U_i$, and the \emph{active} form by $P_i$.  The $i$th P-type node is formed by grouping together  $U_i$ and $P_i$. In addition there  are $n$ species of enzymes, $E_i$, which exist in only one state. 

\textbf{Edges:} All edges in the network are \emph{directed}, and represent the catalytic effect of a species in a MM type reaction. There are two types of edges: \i{PP-type} edges connect two P-type nodes, while \i{EP-type}  edges connect E-type nodes to P-type nodes.  
In particular, a PP-type edge from node $i$ to node $j$ represents the following MM type reaction in which $P_i$ catalyzes the conversion of $U_j$ to the \emph{active} form $P_j$,
\begin{subequations}
\begin{equation}\label{may3010_PPij}
P_i+U_j \overset{k^{1}_{ij}}{\underset{k^{-1}_{ij}}{\rightleftarrows}} C^U_{ij} \overset{k^2_{ij}}{\longrightarrow} P_j + P_i.
\end{equation}
Note that autocatalysis is possible. 
 The rate constants  $k^{1}_{i,j},k^{-1}_{i,j},k^{2}_{i,j}$,  associated to each edge, can be grouped into weighted``connectivity matrices''
\begin{equation*}
\begin{array}{ccccccccc}
K_1 &=& \left[k^1_{ij}\right]_{n\times n},&
K_{-1} &=& \left[k^{-1}_{ij}\right]_{n\times n}, &
K_2 &=& \left[k^2_{ij}\right]_{n\times n}\end{array}.
\end{equation*}
In the absence of an edge, that is, when
$P_i$ does not catalyze the phosphorylation of  $U_j$, the corresponding
$(i,j)$-th entry in $K_1,K_{-1},$ and $K_2$ is set to zero.

EP-type edges are similar to PP-type edges, with  enzymes acting as catalysts. To  each pair of enzyme, $E_i$, and protein, $P_j$, we associate three rate constants $l^{1}_{i,j},l^{-1}_{i,j},l^{2}_{i,j}$ of the corresponding reaction in which $E_i$  is a catalyst in the conversion of $P_j$ into $U_j$,
\begin{equation}\label{may3010_EPij}
E_i+P_j\overset{l^{1}_{ij}}{\underset{l^{-1}_{ij}}{\rightleftarrows}} C^E_{ij} \overset{l^2_{ij}}{\longrightarrow} U_j +E_i.
\end{equation}
\end{subequations}
  The rate constants can again be arranged into matrices
\begin{equation*}
\begin{array}{ccccccccc}
L_1 &=& \left[l^1_{ij}\right]_{n\times n},&
L_{-1} &=& \left[l^{-1}_{ij}\right]_{n\times n}, &
L_2 &=& \left[l^2_{ij}\right]_{n\times n},
\end{array}
\end{equation*}
with zero entries again denoting the absence of interactions. 

These definitions  imply that the active form of one protein always
catalyzes the production of the active form of another  protein. 
This assumption excludes certain interactions (see section~\ref{sec:discussion} for an example).  
However, the reduction is easiest to describe under these assumptions, and 
we discuss generalizations in the Discussion. 

For notational convenience we define 
$U = [ U_1,  U_2, \ldots,  U_n ]^t, 
P = [ P_1,  P_2, \ldots,  P_n ]^t,$ and 
$E =[ E_1,  E_2, \ldots,  E_n ]^t$,  
and arrange intermediate complexes into matrices,
\begin{equation*}\label{20mar10_cxcy}
C_U = \left[\begin{array}{cccc} C_{11}^U \\ C_{21}^U \\ \vdots \\ C_{n1}^U      
            \end{array}
\begin{array}{cccc} C_{12}^U \\ C_{22}^U \\ \vdots \\ C_{n2}^U      
            \end{array}
 \ddots 
\begin{array}{cccc} C_{1n}^U \\ C_{2n}^U \\ \vdots \\ C_{nn}^U      
            \end{array}
\right],
\quad
C_E = \left[\begin{array}{cccc} C_{11}^E \\ C_{21}^E \\ \vdots \\ C_{n1}^E      
            \end{array}
\begin{array}{cccc} C_{12}^E \\ C_{22}^E \\ \vdots \\ C_{n2}^E      
            \end{array}
 \ddots 
\begin{array}{cccc} C_{1n}^E \\ C_{2n}^E \\ \vdots \\ C_{nn}^E      
            \end{array}
\right].
\end{equation*}
Initially all intermediate complexes are assumed to start at zero concentration.  Therefore, 
any intermediate complex corresponding to a reaction that has zero rates, will remain at zero
concentration for all time. 

\begin{quote}
For instance, in the two protein example analyzed in section~\ref{sec:may3010_TwoProtein},  we have
\end{quote}
\[\
 C_U =  \left[ \begin{array}{c} 0 \\ C_y \end{array} 
               \begin{array}{c}C_y  \\ 0  \end{array}
               \right], \quad
C_E =  \left[ \begin{array}{c} C_x^e \\ 0 \end{array} 
               \begin{array}{c} 0 \\ C_y^e  \end{array}
               \right], \quad
 U = \left[ \begin{array}{c}X \\ Y \end{array} \right], \quad
P = \left[ \begin{array}{c}X_p \\ Y_p \end{array} \right], \quad
 \text{ etc.}
\]


Assuming that the system is isolated from the environment implies that the total concentration of each enzyme, $E_i^T$,  remains constant. Therefore,
\begin{subequations}\label{may3110_constraints}
\begin{equation}\label{20mar10_ET}
 E_i + \sum_{s=1}^n{C_{is}^E} = E_i^T ,\qquad i \in \{1,2,...,n \}.
\end{equation}
Similarly, for each protein the total concentration,  $U_i^T$, of its \emph{inactive} and \emph{active} form, and the intermediate complexes is constant, 
\begin{equation}\label{20mar10_UT}
 U_i + P_i + \left( \sum_{s=1}^nC_{is}^U + \sum_{r=1}^nC_{ri}^U  - C^U_{ii} \right) + \sum_{r=1}^nC_{ri}^E  = U_i^T, \qquad i \in \{1,2,...,n \}.
\end{equation}
\end{subequations}
Let 
$$
V_n = \underbrace{[\begin{array}{cccc}1 & 1 &  \ldots & 1 \end{array}]^t}_{n \text{ times}}, \quad 
E_T  = [ \begin{array}{cccc}E^T_1 & E^T_2 & \ldots & E^T_n \end{array}]^t,  \quad \text{and} \quad
U_T = [ \begin{array}{cccc} U^T_1 &  U^T_2 & \ldots & U^T_n \end{array}]^t,
$$
and denote  the $n \times n$ identity matrix by $I_n$.
In addition, we use  the \textit{Hadamard product} of matrices, denoted by $*$,  to simplify notation\footnote{For instance, the Hadamard product of matrices $A = \left[\begin{array}{cc}a & b \\ c &d \end{array}\right],$ and $ B = \left[\begin{array}{cc}e & f \\ g &h \end{array}\right],\text{ is }  A*B = \left[\begin{array}{cc}ae & bf \\ cg &dh \end{array}\right].$}. Constraints~(\ref{may3110_constraints}) can now be written concisely in matrix form 
\begin{align*}
E_T &= E+ C_E V_n ,                                            \\
U_T &= U + P +C_U V_n + C_U^t V_n- (I_n*C_U)V_n + C_E^t V_n.
\end{align*}
Applying the law of mass action to the system of reactions described by~(\ref{may3010_PPij}-\ref{may3010_EPij}) yields a $(2n^2+n)$ dimensional dynamical system, 
\begin{align}
 \frac{dP_i}{dt} &= \sum_{s=1}^n \bigg(-k_{is}^1 P_iU_s + ( k_{is}^{-1}+ k_{is}^{2})C_{is}^U \bigg) +\sum_{r=1}^n\bigg( k_{ri}^2C_{ri}^U 
  - l_{ri}^1 E_r P_i +  l_{ri}^{-1} C_{ri}^E\bigg),
& P_i(0) &= p_{i}^0,
\nonumber \\
 \frac{dC_{ij}^U}{dt} &= \, k_{ij}^1 P_iU_j-(k_{ij}^{-1}+k_{ij}^{2})C_{ij}^U,
& C_{ij}^U(0) &= 0. 
\label{nov0809_mainODE} \\
\frac{dC_{ij}^E}{dt} &= \, l_{ij}^1E_iP_j -(l_{ij}^{-1}+l_{ij}^{2})C_{ij}^E, 
& C_{ij}^E(0) &= 0, 
\notag
\end{align}
Due to the constraints~(\ref{20mar10_ET},\ref{20mar10_UT}),  $U_i, E_i $, are  \textit{affine linear} function of $P_i,C_{ij}^U, C_{ij}^E$ and can be used to close  Eq.~(\ref{nov0809_mainODE}). Our aim is to  reduce this $2n^2+n$ dimensional system to an $n$ dimensional system involving only  $P_i$.

\subsection{The total substrate coordinates}
\label{S:general_coordinates}

In this section we generalize the  change of variables to the ``total`` protein concentrations, introduced in Eq.~(\ref{08feb10_slowVar2}).  Let
\begin{equation}\label{may3110_PiBarTermWise}
 \bar{P}_i := P_i + \sum_{s=1}^nC_{is}^U + \sum_{r=1}^nC_{ri}^E, \quad i \in \{1,2,...,n\},
\end{equation}
so that Eq.~(\ref{nov0809_mainODE}) takes the form
\begin{subequations}\label{may3010_ODEtermWise}
\begin{align}
 \frac{d\bar{P}_i}{dt} =&  \sum_{r=1}^n k_{ri}^{2} C_{ri}^U  -\sum_{r=1}^nl_{ri}^2 C_{ri}^E,   \label{may3010_ODEtermWisea}\\
\frac{dC_{ij}^U}{dt} =& \, k_{ij}^1P_iU_j-(k_{ij}^{-1}+k_{ij}^{2})C_{ij}^U,  \label{may3010_ODEtermWisec} \\
\frac{dC_{ij}^E}{dt} =&\, l_{ij}^1E_iP_j -(l_{ij}^{-1}+l_{ij}^{2})C_{ij}^E. \label{may3010_ODEtermWiseb} 
\end{align}
\end{subequations}
To close this system we use Eqs.~(\ref{20mar10_ET},\ref{20mar10_UT}) with Eq.~(\ref{may3110_PiBarTermWise}), to obtain
\begin{eqnarray}\label{21mar_xieiyi}
U_i &=& U_i^T - P_i - \sum_{s=1}^nC_{is}^U - \sum_{r=1}^n\left(C_{ri}^U + C_{ri}^E \right) + C_{ii}^U \nonumber \\
    &=& U_i^T - \bar{P}_i -   \sum_{r=1}^nC_{ri}^U+ C_{ii}^U, \nonumber \\
E_i &=& E_i^T -\sum_{s=1}^n{C_{is}^E}, \\
P_i &=& \bar{P}_i - \sum_{s=1}^nC_{is}^U - \sum_{r=1}^nC_{ri}^E. \nonumber
\end{eqnarray}
Defining $\bar{P} := (\bar{P}_1,\bar{P}_2,...,\bar{P}_n)^t$,  Eq.~(\ref{may3110_PiBarTermWise}) can be written in vector form as $ \bar{P} = P+C_UV_n+C_E^tV_n$, and 
Eqs.~(\ref{may3010_ODEtermWise}) and (\ref{21mar_xieiyi}) can be written in matrix form as
\begin{subequations}\label{09mar10_matrixForm}
\begin{align}
\frac{d\bar{P}}{dt} =& (K_2* C_U)^t V_n - (L_2* C_E)^tV_n, \label{09mar10_matrixForma} \\
\frac{dC_U}{dt} =& K_1* (P U^t) -(K_{-1}+K_2)* C_U, \label{09mar10_matrixFormc} \\
\frac{dC_E}{dt} =& L_1* (EP^t)- (L_{-1}+L_2)* C_E, \label{09mar10_matrixFormb} 
\end{align}
\end{subequations}
where
\begin{subequations}\label{09mar10_closedform}
\begin{align}
U  =&\,  U_T- P -C_U V_n - C_U^t V_n - C_E^t V_n + (I_n*C_U)V_n  \nonumber \\
=&\, U_T - \bar{P} - C_U^t V_n+ (I_n*C_U)V_n,  \label{09mar10_closedformb} \\
E =&\, E_T- C_E V_n, \label{09mar10_closedforma} \\
P =&\,  \bar{P} -C_UV_n-C_E^tV_n.  \label{09mar10_closedformc}
\end{align}
\end{subequations}

\subsection{The  tQSSA and the resulting reduced equations}
\label{S:general_reduced}


The general form of the tQSSA states that the intermediate complexes, $C_U$ and $C_E$, equilibrate
faster than  $\bar{P}$. This assumption implies that, after a fast transient, Eq.~(\ref{09mar10_matrixForm}) can be approximated by the differential-algebraic system
\begin{subequations}\label{09mar10_matrixFormDEL}
\begin{align}
\frac{d\bar{P}}{dt} =& (K_2* C_U)^t V_n - (L_2* C_E)^tV_n, \label{09mar10_matrixFormDELa} \\
0 =& K_1* (P U^t) -(K_{-1}+K_2)* C_U, \label{09mar10_matrixFormDELc} \\
0 =& L_1* (EP^t)- (L_{-1}+L_2)* C_E. \label{09mar10_matrixFormDELb}
\end{align}
\end{subequations}
In particular,  according to GSPT (see section~\ref{S:fenichel}), if the  \emph{slow manifold}
\begin{equation}\label{20june10_slowManifold}
\mathcal{M}_0 = \left\{\left(\bar{P},C_U,C_E \right )  \, \bigg| \, 
\begin{array}{ccl}
0 &=& K_1* (P U^t) -(K_{-1}+K_2)* C_U; \\
0 &=& L_1* (EP^t)- (L_{-1}+L_2)* C_E 
\end{array}
\right\}
\end{equation}
is normally hyperbolic and stable, then the solutions of Eq.~(\ref{09mar10_matrixForm}) are
attracted to and shadow solutions on $\mathcal{M}_0$.

If we consider the system~(\ref{09mar10_matrixFormDEL}b,c) entry-wise then it consists of $2n^2$  coupled quadratic equations in $2n^2 + n$ variables, namely the entries of $\bar{P},C_U,C_E$ (note that $U,E$ are functions of $\bar{P},C_U,C_E$). 
As described in section~\ref{S:intro_two}, we can avoid solving coupled quadratic equations by seeking a solution in terms of $P$ instead of $\bar{P}$. Using Eq.~(\ref{09mar10_closedform}a,b) we eliminate $E,U$ from Eqs.~(\ref{09mar10_matrixFormDEL}b,c) to obtain
\begin{subequations}\label{10mar10_linear}
\begin{align}
K_1* \left[P \left(  V_n^tC_U^t + V_n^tC_U- V_n^t (I_n * C_U)\right) +  PV_n^t C_E  \right] +(K_{-1}+K_2)* C_U 
&=\quad K_1* \left[P \left( U_T^t- P^t\right)\right], \label{10mar10_lineara}\\
L_1* \left(C_E \left(V_n  P^t\right)\right)+ (L_{-1}+L_2)* C_E  &= \quad L_1* \left(E_TP^t\right). 
\label{10mar10_linearb}
\end{align}
\end{subequations}
Although complicated, Eq.~(\ref{10mar10_linear}) is linear in $C_U$ and $C_E$. The following Lemma, proved in ~\ref{june0110_solveLinearEquation}, shows that the equations are also solvable.
\begin{lemma}\label{21june10_linerSolve}
Suppose 
$K_1 = [k_{ij}^{1}],$ $K_{-1}= [k_{ij}^{-1}],$ $K_2= [k_{ij}^{2}],$ $L_1= [l_{ij}^{1}],$ $L_{-1}= [l_{ij}^{-1}],$ $L_2= [l_{ij}^{2}]$ $\in \mathbb{R}^{n \times n}$ are real matrices with non-negative entries.  Furthermore,  assume that  for any pair $i,j \in \{1,2,...,n\} $  either $k_{ij}^{1} = k_{ij}^{-1} = k_{ij}^{2} = 0 $, or all these coefficients are positive, and similarly for the coefficients $l_{ij}^{1},  l_{ij}^{-1}$, and $ l_{ij}^{2}$.
If $U_T,E_T, P $ $\in\mathbb{R}^{n \times 1}_+ $ are real vectors with positive entries, and $V_n = [1\,1\,\cdots\,1]^t$ is a vector of size $n$, then Eq.~(\ref{10mar10_linear}) has a unique solution for $C_U, C_E \in \mathbb{R}^{n \times n} $ in terms of $P$.
\end{lemma}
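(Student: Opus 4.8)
The plan is to treat Eq.~\eqref{10mar10_linear} as a square linear system in the unknown entries of $C_U$ and $C_E$ and to show that the associated coefficient matrix has trivial kernel, so that existence and uniqueness follow together. The first structural observation is that the system is \emph{block triangular}: Eq.~\eqref{10mar10_linearb} involves only $C_E$, whereas Eq.~\eqref{10mar10_lineara} involves both $C_U$ and $C_E$. I would therefore solve for $C_E$ first, substitute it into Eq.~\eqref{10mar10_lineara}, and then solve for $C_U$. Before that, I would use the hypothesis — for each pair $(i,j)$ the triple $(k_{ij}^1,k_{ij}^{-1},k_{ij}^2)$ is either identically zero or entrywise positive, and likewise for the $l$'s — to trim the bookkeeping: the Hadamard products with $K_1$ and $L_1$ kill the equations at the ``inactive'' pairs, where the complex is set to zero, so only the ``active'' entries must be solved. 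On an active entry I may divide by $k_{ij}^1>0$ (resp.\ $l_{ij}^1>0$) and introduce the Michaelis constants $m_{ij}^U=(k_{ij}^{-1}+k_{ij}^2)/k_{ij}^1>0$ and $m_{ij}^E=(l_{ij}^{-1}+l_{ij}^2)/l_{ij}^1>0$. This is exactly the step where the all-or-nothing hypothesis is needed: it guarantees that the coefficient of $C_{ij}^U$ (resp.\ $C_{ij}^E$) is nonzero precisely on the active entries.

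For $C_E$, writing Eq.~\eqref{10mar10_linearb} entrywise shows that it \emph{decouples across the row index} $i$, since $\left[C_E(V_n P^t)\right]_{ij}=P_j\sum_k C_{ik}^E$ couples only entries sharing row $i$. Setting $S_i:=\sum_k C_{ik}^E$, each active equation reads $P_j S_i+m_{ij}^E C_{ij}^E=E_i^T P_j$, so $C_{ij}^E$ is affine in the single scalar $S_i$; summing over $j$ closes a scalar equation $S_i(1+A_i)=E_i^T A_i$ with $A_i:=\sum_j P_j/m_{ij}^E\ge 0$. As $1+A_i>0$, this determines $S_i$, hence all $C_{ij}^E$, uniquely; in particular the homogeneous version forces $S_i=0$ and $C_E=0$.

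The substantive step is $C_U$. Expanding Eq.~\eqref{10mar10_lineara} entrywise, using the contractions $(V_n^t C_U)_j=\sum_k C_{kj}^U$, $(V_n^t C_U^t)_j=\sum_k C_{jk}^U$, and $\big(V_n^t(I_n*C_U)\big)_j=C_{jj}^U$, the active $(i,j)$ equation becomes $P_i G_j+m_{ij}^U C_{ij}^U=P_i r_j$, where $G_j:=\rho_j+\sigma_j-d_j$ aggregates the row sum $\rho_j=\sum_k C_{jk}^U$, the column sum $\sigma_j=\sum_k C_{kj}^U$, and the diagonal $d_j=C_{jj}^U$, while $r_j=U_j^T-P_j-D_j$ depends only on $j$ (here $D_j:=\sum_k C_{kj}^E$ is known from the previous step). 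Solving each entry as $C_{ij}^U=\tfrac{P_i}{m_{ij}^U}(r_j-G_j)$ and resubstituting into the definition of $G_j$ yields a closed linear system $MG=b$ for $G=(G_j)$, with $M_{jj}=1+a_j$, $a_j:=\sum_i P_i/m_{ij}^U$, and $M_{jk}=P_j/m_{jk}^U\ge 0$ for active $k\ne j$. I would then establish that $M$ is nonsingular by strict diagonal dominance \emph{along columns}: the off-diagonal entries in column $j$ sum to $a_j-b_j$, where $b_j:=P_j/m_{jj}^U\ge 0$ is the $i=j$ term already counted in $a_j$, so $M_{jj}-\sum_{i\ne j}M_{ij}=1+b_j\ge 1>0$. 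By the Levy--Desplanques theorem (equivalently Gershgorin applied to $M^t$), $M$ is invertible, $G$ is unique, and the $C_{ij}^U$ are thereby determined.

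I expect the main obstacle to be precisely this final nonsingularity argument. The reduced matrix $M$ has nonnegative entries and positive diagonal but is \emph{not} an M-matrix and is generally not row-diagonally dominant, so the naive estimates fail; the key point is that the correct dominance is \emph{columnwise}, because the off-diagonal column sums reproduce the quantity $a_j$ that sits on the diagonal, up to the nonnegative autocatalytic term $b_j$. Identifying $G_j$ — the total amount of protein $j$ sequestered in $U$-complexes — as the right aggregated unknown is what makes this cancellation visible. The remaining ingredients (the entrywise expansion of the Hadamard and $V_n$ contractions, and the careful accounting of the diagonal $C_{jj}^U$ so that the $-b_j$ term appears) are routine but must be handled precisely to land the dominance gap at exactly $1+b_j$.
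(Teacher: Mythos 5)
Your proof is correct, and it reaches the same essential mechanism as the paper's by a noticeably more elementary route. The paper (in \ref{june0110_solveLinearEquation}) vectorizes Eq.~(\ref{10mar10_linear}) with the $\mathrm{vec}$/Kronecker calculus and proves two invertibility lemmas: Lemma~\ref{invertCoeff} for the $C_E$ block, via an eigenvalue computation showing the rank-one-per-block operator $A(YV_n^t\otimes I_n)$ has only the eigenvalues $0$ and $\sum_i y_iA_{i_k}>0$ and hence cannot have $-1$ as an eigenvalue; and Lemma~\ref{17may10_invertCoefficient} for the $C_U$ block, where a putative kernel vector is shown to satisfy $v_{ij}=\lambda_i/b_{ij}$ with $-\lambda_i=\sum_s v_{is}+\sum_{r\neq i}v_{ri}$, and the resulting $n\times n$ system for the $\lambda_i$ is column-diagonally dominant, forcing $\lambda_i=0$. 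Your aggregated unknown $G_j=\rho_j+\sigma_j-d_j$ is exactly the paper's $-\lambda$ quantity (row sum plus off-diagonal column sum), and your strict column dominance gap $1+b_j$ is the same cancellation the paper exploits; the difference is that you work entrywise from the start, solve the inhomogeneous system explicitly (obtaining closed formulas $C_{ij}^E=P_j(E_i^T-S_i)/m_{ij}^E$ and $C_{ij}^U=\tfrac{P_i}{m_{ij}^U}(r_j-G_j)$ rather than only kernel triviality), replace the paper's Kronecker eigenvalue argument for $C_E$ by the simple row-decoupling $S_i(1+A_i)=E_i^TA_i$, and absorb the sparse case by restricting to active pairs throughout instead of the paper's separate post-hoc reduction to principal submatrices. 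What your version buys is transparency and explicit solvability; what the paper's buys is the reusable matrix-calculus formalism that it then needs anyway for the Jacobian computation in section~\ref{20june10_StabilityOfSlowManifold}. One caveat you share with the paper: at inactive pairs the $(i,j)$ equation degenerates to $0=0$, so uniqueness holds only under the convention (stated in section~\ref{june2110_setup} and adopted by both you and \ref{Section:Sparse}) that the corresponding complexes are identically zero; your proposal makes this convention explicit, which is the right reading of the lemma.
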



We denote the solution of Eq.~(\ref{10mar10_linear}) described in  Lemma~\ref{21june10_linerSolve} by  $\tilde{C}_U(P), \tilde{C}_E(P)$.
This solution can be used to close Eq.~(\ref{09mar10_matrixFormDELa}), by using Eq.~\eqref{09mar10_closedformc} to obtain
\begin{eqnarray}\label{19mar10_B}
\frac{d\bar{P}}{dt} &=&  \frac{dP}{dt} + \frac{d}{dt}\left(\tilde{C}_U(P)V_n\right) + \frac{d}{dt}\left(\tilde{C}_E(P)^tV_n\right) \nonumber\\
&=&\left[I +\frac{\partial}{\partial P}\left(\tilde{C}_U(P)V_n\right) + \frac{\partial}{\partial P}\left(\tilde{C}_E(P)^tV_n\right) \right]\frac{dP}{dt}
\end{eqnarray}
With Eq.~(\ref{09mar10_matrixFormDELa}), this leads to a closed system in $P$,
\begin{equation}\label{19mar10_conclusion}
 \left[I +\frac{\partial}{\partial P}\left(\tilde{C}_U(P)V_n\right) + \frac{\partial}{\partial P}\left(\tilde{C}_E(P)^tV_n\right) \right]\frac{dP}{dt}
\; = \; (K_2* \tilde{C}_U(P))^t V_n - (L_2* \tilde{C}_E(P))^tV_n.
\end{equation}
The initial value of Eq.~(\ref{19mar10_conclusion}), denoted by $\hat{P}(0)$, must be chosen as the projection  of the initial value $P(0)$ of Eq.~(\ref{nov0809_mainODE}), onto the manifold $\mathcal{M}_0$. The reduction is obtained under the assumption that during the initial transient there has not been any significant change in $\bar{P} = P + C_UV_n + C_E^tV_n$.
Therefore the projection, $\hat{P}(0)$, of the initial conditions onto the slow manifold is related to the original initial conditions, $U(0), P(0), C_U(0), C_E(0),$ by
\begin{equation*}
\hat{P}(0) + \tilde{C}_U(\hat{P}(0))V_n + \tilde{C}_E^t(\hat{P}(0))V_n = P(0) + C_U(0)V_n + C_E(0)V_n = P(0).
\end{equation*}

In summary, if tQSSA is valid,  then Eq.~(\ref{19mar10_conclusion}) is a  reduction of Eq.~(\ref{nov0809_mainODE}). We next study the stability of the slow manifold $\mathcal{M}_0$ defined by Eq.~\eqref{20june10_slowManifold}. This is a necessary step in showing that GSPT can be used to justify the validity of the reduction obtained under the generalized tQSSA.

\subsection{Stability of the slow manifold}\label{20june10_StabilityOfSlowManifold} 


We start by introducing several definitions and some notation to simplify the computations
involved in showing that the slow manifold $\mathcal{M}_0$, defined by Eq.~(\ref{20june10_slowManifold}), is normally hyperbolic and stable. The results also apply to the slow manifolds discussed in sections~\ref{IsolatedMichaelisMentenReaction} and~\ref{sec:may3010_TwoProtein}, as those are particular examples of $\mathcal{M}_0$.


Suppose that $A$ and $B$ are matrices of dimensions $n \times k$ and $n \times l$, respectively.  We denote by $[A \,:\, B]$ the $n \times (k+l)$ matrix obtained by adjoining $B$ to  $A$. 
We use this definition to combine the different coefficient matrices, and let
\[
 C := [C_U\,:\,C_E^t], \quad
 Q_1 := [K_1\,:\,L_1^t], \quad Q_2 := [K_{-1}+K_2\,:\,L_{-1}^t+L_2^t].
\]
We also define
\[
 Z := \left[\begin{array}{c} U \\ E  \end{array} \right], \qquad
 \bar{Z} := \left[\begin{array}{c} U_T - \bar{P} \\ E_T   \end{array} \right], \qquad
I_{2n}^n := \left[\begin{array}{c} I_n  \\  0 \end{array} \right], \qquad \text{and} \qquad
V_{2n} =
\underbrace{ \left[\begin{array}{cccc}1 & 1 & \ldots & 1                \end{array} \right]^t
           }_{2n \text{ times}}. 
\]
Using this notation the right hand side of Eqs.~(\ref{09mar10_closedformb}-\ref{09mar10_closedforma}) can be written  as
\begin{eqnarray*}
 Z = \left[\begin{array}{c} U \\ E   \end{array}\right] 
&=&
\left[\begin{array}{c} U_T - \bar{P} \\ E_T   \end{array} \right] -
\left[\begin{array}{c} C_U^t V_n  \\  C_EV_n  \end{array} \right] +
\left[\begin{array}{c} (I_n*C_U^t) V_n  \\  0 \end{array} \right] \\
 &=&
\bar{Z} - C^tV_n + \left(\left[\begin{array}{c} I_n  \\  0 \end{array} \right]*C^t \right)V_n \\
&=&
\bar{Z} - (C^t - I_{2n}^n*C^t)V_n,
\end{eqnarray*}
and Eq.~(\ref{09mar10_closedformc}) can be written as
$
 P = \bar{P} - CV_{2n}.
$
Therefore, Eqs.~(\ref{09mar10_matrixFormc}-\ref{09mar10_matrixFormb}) can be merged to obtain
\begin{equation}\label{may3110_dcdt}
 \frac{dC}{dt} = \underbrace{Q_1*(PZ^t)-Q_2*C}_{:= F(C)}.
\end{equation}
The manifold $\mathcal{M}_0$ is defined by
\begin{equation*} 
 \mathcal{M}_0 = \left\{C \in \mathbb{R}^{n \times 2n} \,\,   \big| \,\,
Q_1*(PZ^t)-Q_2*C =  F(C) = 0 \right\}.
\end{equation*}

To show that $\mathcal{M}_0$ is stable and normally hyperbolic  we need to show that the Jacobian, 
$ \displaystyle{\frac{\partial F}{\partial C},}$ evaluated at $ {\mathcal{M}_0}$
has eigenvalues with only negative real parts. We will show that $ \displaystyle{\frac{\partial F}{\partial C}}$ has eigenvalues with negative real parts  everywhere, and hence at all points of $ {\mathcal{M}_0},$ \emph{a fortiori}. 

The mapping $F : \mathbb{R}^{n \times 2n} \rightarrow \mathbb{R}^{n \times 2n}$ is a matrix valued function of the matrix variables $C$.  Therefore
$ \displaystyle{\frac{\partial F}{\partial C}}$ represents differentiation with respect to a matrix.
This operation is defined by ``flattening'' a $m \times n$ matrix to a 
$mn \times 1$ vector and taking the gradient.  More precisely, suppose $M = [M_{.1} : M_{.2} : \ldots :  M_{.n} ]$ is a $m \times n$ matrix, where $M_{.j}$ is the $j$th column of $M$. Then define 
\begin{equation}\label{june2110_vecAndHatDefinition}
 \,\mathrm{vec}\,(M) := \left[ \begin{array}{c} M_{.1} \\ M_{.2} \\ \vdots \\ M_{.n} \end{array} \right]  \quad \in \quad\mathbb{C}^{mn \times 1}, \qquad \text{and}  \qquad
\widehat{M} := \, \diag(\,\mathrm{vec}\,(M))  \quad \in \quad \mathbb{C}^{mn \times mn}.
\end{equation}
Therefore, $\,\mathrm{vec}\,(M)$ is obtained  by stacking the columns of $M$ on top of each other, and $\widehat{M}$ is the $mn \times mn$ diagonal matrix whose diagonal entries are given by $\,\mathrm{vec}\,(M)$. 

Suppose $G : \mathbb{C}^{p \times q} \rightarrow \mathbb{C}^{m \times n}$ is a matrix valued function with $ X \in \mathbb{C}^{p \times q} \mapsto G(X) \in \mathbb{C}^{m \times n}$. Then the derivative of $G$ with respect to $X$ is defined as 
\begin{equation}\label{june2110_derivativeDefinition}
 \frac{\partial G}{\partial X} :=  \frac{\partial \,\mathrm{vec}\,(G)}{\partial \,\mathrm{vec}\,(X)},
\end{equation}
where the right hand side is the Jacobian~\citep{neudecker}.  In the appendix we list some important properties of these operators which will be used subsequently (see ~\ref{20june10_diffWrtMatrix}).


A direct application of Theorem~\ref{may3110_HadamardProductRule} stated in ~\ref{20june10_diffWrtMatrix} yields  
\begin{eqnarray*}
\frac{\partial  \,F}{\partial \,C} =
\frac{\partial \,\mathrm{vec}\,(F)}{\partial \,\mathrm{vec}\,(C)} 
&=&
\widehat{Q}_1
\frac{\partial \,\mathrm{vec}\,(PZ^t)}{\partial \,\mathrm{vec}\, (C)}
-\widehat{Q}_2
\frac{\partial \,\mathrm{vec}\,(C)}{\partial \,\mathrm{vec}\,(C)}. 
\end{eqnarray*}

We first assume that all the entries in the connectivity matrices are positive, so that all entries in the matrix $C$ are \emph{actual variables}.  At the end of ~\ref{september1510_stableManifold} we show how to remove this assumption.

Replacing ${\partial \,\mathrm{vec}\,(C)}/{\partial \,\mathrm{vec}\,(C)}$ with the identity matrix,  $I_{2n^2}$, 
  adding $\widehat{Q}_2$ to both side, using Theorems~\ref{may3110_vecABC}, \ref{may3110_hadamardproduct},\ref{may3110_simpleProtuctRule},  \ref{may3110_HadamardProductRule},  and treating $\bar{P}$ and $\bar{Z}$ as independent of $C$ we obtain
\begin{eqnarray*}
\widehat{Q}_2 +\frac{\partial \,\mathrm{vec}\,(F)}{\partial \,\mathrm{vec}\,(C)}
&=&
\widehat{Q}_1 \left[
\left(Z \otimes I_n \right)\frac{\partial \,\mathrm{vec}\,( P)}{\partial \,\mathrm{vec}\, (C) } 
+
\left(I_{2n} \otimes P \right)
\frac{\partial \,\mathrm{vec}\,( Z^t)}{\partial \,\mathrm{vec}\, (C) } 
\right] \\
&=&
\widehat{Q}_1 \left[ 
-
\left(Z \otimes I_n \right)\frac{\partial \,\mathrm{vec}\, (CV_{2n})}{\partial \,\mathrm{vec}\, (C) } 
- 
\left(I_{2n} \otimes P \right)
\frac{\partial \,\mathrm{vec}\,\left(\left(\left(C^t - I_{2n}^n*C^t \right)V_n \right)^t\right)}{\partial \,\mathrm{vec}\, (C) } 
\right] \\
&=&
\widehat{Q}_1 \left[ 
-
\left(Z \otimes I_n \right)\frac{\partial \,\mathrm{vec}\, (CV_{2n})}{\partial \,\mathrm{vec}\, (C) } 
- 
\left(I_{2n} \otimes P \right)
\frac{\partial \,\mathrm{vec}\, \left(V_n^tC - V_n^t(\left({I_{2n}^n}\right)^t*C)\right) }{\partial \,\mathrm{vec}\, (C) } 
\right] \\
&=&
\widehat{Q}_1 \Bigg[ 
-
\left(Z \otimes I_n \right)(V_{2n}^t \otimes I_n)\frac{\partial \,\mathrm{vec}\, (C)}{\partial \,\mathrm{vec}\, (C) } \\
&&-
\left(I_{2n} \otimes P \right) 
\left\{
\frac{\partial \,\mathrm{vec}\, (V_n^tC ) }{\partial \,\mathrm{vec}\, (C) }
-\frac{\partial \,\mathrm{vec}\, ( V_n^t(\left({I_{2n}^n}\right)^t*C)) }{\partial \,\mathrm{vec}\, (C) }
 \right\}
\Bigg] \\
&=&
\widehat{Q}_1 \left[ 
-
\left(ZV_{2n}^t \otimes I_n \right)
-
\left(I_{2n} \otimes P \right) 
\left\{
\left(I_{2n} \otimes V_n^t \right) 
-\left(I_{2n} \otimes V_n^t \right) \widehat{({I_{2n}^n})^t}
 \right\}
\right] \\
&=&
\widehat{Q}_1 \left[ 
-
\left(ZV_{2n}^t \otimes I_n \right)
-
\left(I_{2n} \otimes PV_n^t \right) 
+\left(I_{2n} \otimes PV_n^t \right) \widehat{({I_{2n}^n})^t}
\right] \\
&=&
-\widehat{Q}_1 \left[ 
\left(ZV_{2n}^t \otimes I_n \right)
+
\left(I_{2n} \otimes PV_n^t \right)\left(I_{2n^2}- \widehat{({I_{2n}^n})^t} \right)
\right]. 
\end{eqnarray*}
Here
$
 \widehat{({I_{2n}^n})^t}  
$ is the matrix obtained by applying the operator defined in Eq.~(\ref{june2110_vecAndHatDefinition}) to the transpose of $I_{2n}^n$. 

This computation shows that the Jacobian matrix of interest has the form
\begin{equation}\label{1april10_jacobian}
 J := 
\frac{\partial  \,F}{\partial \,C}
=
-\widehat{Q}_1 \left[ 
\left(ZV_{2n}^t \otimes I_n \right)
+
\left(I_{2n} \otimes PV_n^t \right)\left(I_{2n^2}- \widehat{\left({I_{2n}^n}\right)^t} \right)
\right] - \widehat{Q}_2.
\end{equation}
The following Lemma, proved in the~\ref{september1510_stableManifold}, shows that this Jacobian matrix always has eigenvalues with negative real part.

\begin{lemma}\label{15mar10_stableMatrix}
Suppose $Z \in \mathbb{R}^{2n \times 1}_+$ is a $2n$ dimensional vector with positive entries,  $Y \in \mathbb{R}^{n \times 1}_+$ is an $n$ dimensional vector with positive entries, $\Lambda, \Gamma \in \mathbb{R}^{2n^2 \times 2n^2}$ are diagonal matrices with positive entries on the diagonal. Further assume that $R_{n} $ and $ R_{2n}$ are row vectors of size $n$ and $2n$ respectively with all entries equal to $1$. Then the $2n^2 \times 2n^2$ matrix
\begin{equation}\label{15mar10_jacobian}
J = \Lambda \left[(ZR_{2n} \otimes I_n) + (I_{2n}\otimes YR_n)\left(I_{2n^2}- \widehat{\left({I_{2n}^n}\right)^t} \right) \right] + \Gamma 
\end{equation}
has eigenvalues with strictly positive real parts.
\end{lemma}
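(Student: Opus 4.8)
The plan is to prove the equivalent (and slightly stronger) statement that $J-\mu I$ is nonsingular for every $\mu\in\mathbb{C}$ with $\operatorname{Re}\mu\le 0$; since $J$ is real this is exactly the claim that all eigenvalues have strictly positive real part. Write $\Lambda=\diag(\mathrm{vec}(\lambda))$ and $\Gamma=\diag(\mathrm{vec}(\gamma))$ for positive matrices $\lambda=[\lambda_{pq}],\,\gamma=[\gamma_{pq}]\in\mathbb{R}^{n\times 2n}$, and identify a vector with the matrix $W\in\mathbb{C}^{n\times 2n}$ for which it is $\mathrm{vec}(W)$. Using the identity $\mathrm{vec}(AWB)=(B^t\otimes A)\mathrm{vec}(W)$ (see~\ref{20june10_diffWrtMatrix}), the two Kronecker blocks act as $(ZR_{2n}\otimes I_n)\mathrm{vec}(W)=\mathrm{vec}(\rho(W)Z^t)$ and $(I_{2n}\otimes YR_n)\mathrm{vec}(W)=\mathrm{vec}(Y(R_nW))$, where $\rho_p(W)=\sum_q W_{pq}$ collects the row sums; the factor $I_{2n^2}-\widehat{(I_{2n}^n)^t}$ merely annihilates the ``diagonal'' entries $W_{qq}$, $q\le n$, before the column sum is formed. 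Hence $JW=\mu W$ reads, entrywise, $(\gamma_{pq}-\mu)W_{pq}=-\lambda_{pq}(\rho_p Z_q+Y_p\sigma_q)$, where $\sigma_q=\sum_p W_{pq}$ for $q>n$ and $\sigma_q=\sum_{p\ne q}W_{pq}$ for $q\le n$. Because $\operatorname{Re}\mu\le 0<\gamma_{pq}$, the scalar $\gamma_{pq}-\mu$ is nonzero, so $W_{pq}=-a_{pq}(\rho_p Z_q+Y_p\sigma_q)$ with $a_{pq}:=\lambda_{pq}/(\gamma_{pq}-\mu)$, and crucially $\operatorname{Re}a_{pq}=\lambda_{pq}\operatorname{Re}(\gamma_{pq}-\mu)/|\gamma_{pq}-\mu|^2>0$. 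Write $\mathcal{A}=[a_{pq}]$, and $\hat{\mathcal{A}}$ for $\mathcal{A}$ with its diagonal ($a_{qq}$, $q\le n$) deleted.

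Next I would collapse this $2n^2$-dimensional relation onto the aggregated unknowns $\rho\in\mathbb{C}^n$ and $\sigma\in\mathbb{C}^{2n}$, since the right-hand side depends on $W$ only through them. Summing the entrywise relation over $q$ gives $(1+A_p)\rho_p=-Y_p(\mathcal{A}\sigma)_p$ with $A_p=\sum_q a_{pq}Z_q$, and summing over $p$ (dropping the diagonal term) gives $(1+\eta_q)\sigma_q=-Z_q(\hat{\mathcal{A}}^t\rho)_q$ with $\eta_q=(\hat{\mathcal{A}}^tY)_q$. Since $\operatorname{Re}A_p>0$ and $\operatorname{Re}\eta_q\ge 0$, the factors $1+A_p$ and $1+\eta_q$ are nonzero, so this is a closed homogeneous linear system for $(\rho,\sigma)$; moreover $\rho=\sigma=0$ forces $W=0$. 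It therefore suffices to prove this reduced system has only the trivial solution, i.e.\ that $I-G\mathcal{A}H\hat{\mathcal{A}}^t$ is nonsingular, where $G=\diag(Y_p/(1+A_p))$ and $H=\diag(Z_q/(1+\eta_q))$.

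To do this I would run an energy estimate tailored to the weights $Y$ and $Z$: multiply the $\rho$-equation by $\overline{\rho_p}/Y_p$ and sum, the $\sigma$-equation by $\overline{\sigma_q}/Z_q$ and sum, add the first to the conjugate of the second, and take real parts. Off-diagonal pairs then combine into perfect squares, for each off-diagonal $(p,q)$ producing $\operatorname{Re}(a_{pq})\bigl|\sqrt{Z_q/Y_p}\,\rho_p+\sqrt{Y_p/Z_q}\,\sigma_q\bigr|^2\ge 0$ (using $\operatorname{Re}a_{pq}\ge 0$), and one is left with the identity
\[
0=\sum_{(p,q)\,\text{off-diag}}\operatorname{Re}(a_{pq})\,\Bigl|\sqrt{\tfrac{Z_q}{Y_p}}\,\rho_p+\sqrt{\tfrac{Y_p}{Z_q}}\,\sigma_q\Bigr|^2+\sum_{q\le n}\operatorname{Re}(a_{qq})\tfrac{Z_q}{Y_q}|\rho_q|^2+\sum_p\tfrac{|\rho_p|^2}{Y_p}+\sum_q\tfrac{|\sigma_q|^2}{Z_q}+\sum_{q\le n}\operatorname{Re}\bigl(a_{qq}\sigma_q\overline{\rho_q}\bigr).
\]
Every term except the last is nonnegative, and the two ``$+1$'' sums are strictly positive unless $\rho=\sigma=0$; so once the final ``autocatalytic'' cross term is absorbed, the contradiction $\rho=\sigma=0$ closes the argument.

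The main obstacle is exactly that last sum $\sum_{q\le n}\operatorname{Re}(a_{qq}\sigma_q\overline{\rho_q})$, the sole footprint of the projection $I_{2n^2}-\widehat{(I_{2n}^n)^t}$, i.e.\ of the self-loops $C^U_{ii}$ subtracted in the conservation law~\eqref{20mar10_UT}. It is not sign-definite, and a term-by-term Young/Cauchy--Schwarz bound against the local positive terms provably fails, since it would demand $|a_{qq}|^2Y_qZ_q\lesssim 1+\operatorname{Re}(a_{qq})Z_q$, which need not hold when the forward rates dominate the off-rates; one must instead spend the off-diagonal squares \emph{globally}. That enough positivity is genuinely available is cleanest to see at $\mu=0$, where all $a_{pq}=\lambda_{pq}/\gamma_{pq}>0$: then $G\mathcal{A}H\hat{\mathcal{A}}^t\ge 0$ entrywise, and the test vector $x=Y>0$ satisfies $G\mathcal{A}H\hat{\mathcal{A}}^tY<Y$ componentwise (because each $\eta_q/(1+\eta_q)<1$), so by the Collatz--Wielandt characterization $G\mathcal{A}H\hat{\mathcal{A}}^t$ has spectral radius $<1$ and $I-G\mathcal{A}H\hat{\mathcal{A}}^t$ is a nonsingular M-matrix, forcing $\rho=0$. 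I would extend this to all $\mu$ with $\operatorname{Re}\mu\le 0$ by carrying the phases of the $a_{pq}$ through: they all lie in a common half-plane about $\mathbb{R}_{>0}$, which keeps $|1+A_p|\ge|A_p|$ and $|1+\eta_q|\ge|\eta_q|$ and lets the ``$+1$'' slack dominate the autocatalytic term. Handling these phase factors uniformly is the delicate part of the proof.
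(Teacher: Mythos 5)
You have reproduced, correctly, the first half of the paper's argument: passing from the eigenvalue equation $JW=\mu W$ to the entrywise relation $W_{pq}=-a_{pq}(\rho_p Z_q+Y_p\sigma_q)$ and then to a closed homogeneous $3n$-dimensional system for the row sums $\rho$ and the (diagonal-excluded) column sums $\sigma$ is precisely what the paper does in Lemma~\ref{5may10_stableMatrix}, where your $\rho_p,\sigma_q$ are the $R_i,C_j$ and your $a_{pq}$ appear in reciprocal form in Eq.~(\ref{19may10_A}). The gap is that the proposal never finishes the second half. The invertibility of the reduced system is proved only at $\mu=0$; for general $\mu$ with $\operatorname{Re}\mu\le 0$ you explicitly defer ``handling these phase factors'' as the delicate part, and the route you sketch does not obviously close: the Collatz--Wielandt step requires entrywise modulus bounds such as $|1+\eta_q|\ge\sum_p|a_{pq}|Y_p$, which fail when the $a_{pq}$ carry large imaginary parts of opposite sign (then $|\eta_q|$ stays bounded while the moduli $|a_{pq}|$ blow up), and your energy identity, though computed correctly, leaves the unsigned term $\sum_{q\le n}\operatorname{Re}(a_{qq}\sigma_q\overline{\rho_q})$ unabsorbed. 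As it stands this is a plan with the hardest step missing, not a proof.

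The device the paper uses to finish — and which your pairing misses — is to complete squares in the reciprocals. Writing the entrywise relation as $v_{ij}=\tfrac{1}{a_{ij}}\bigl(\rho_i/y_i+\sigma_j/z_j\bigr)$ and aggregating, the paper's reduced coefficient matrix, after right-multiplication by $\diag(y,z)$, becomes a complex \emph{symmetric} matrix $X$ whose off-diagonal entries are the $\tfrac{1}{a_{ij}}$ and whose diagonal entries are $-y_i+\sum_j\tfrac{1}{a_{ij}}$, resp.\ $-z_j+\sum_i\tfrac{1}{a_{ij}}$. Since $\operatorname{Re}\mu\le 0$ forces $\operatorname{Re}(1/a_{ij})<0$ for every $i,j$, the Hermitian part $S=\tfrac{1}{2}(X+X^{*})$ is real, strictly diagonally dominant with negative diagonal, hence negative definite, so $\operatorname{Re}\langle Xu,u\rangle=\langle Su,u\rangle<0$ for $u\neq 0$ and $X$ is invertible — all phases are handled at once because only the uniformly negative real parts of the $1/a_{ij}$ enter the dominance estimate. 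One point in your favour: the ``autocatalytic'' diagonal term you isolate is a genuine consequence of the projection $I_{2n^2}-\widehat{({I_{2n}^n})^t}$, and the paper's Eq.~(\ref{18may10_Bb}) quietly sums over all $i$ including $i=j$, which is exactly what makes $X$ symmetric; carried out strictly, the same unsigned diagonal contribution resurfaces there and still needs to be argued away. So you have located the one delicate point of the lemma accurately, but you have not resolved it, and the symmetrization-plus-diagonal-dominance mechanism in the $1/a_{ij}$ normalization is the missing ingredient.
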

This Lemma applies to connectivity matrices with strictly positive entries. In \ref{Section:Sparse} we show how
to generalize the Lemma to the case when the connectivity matrices contain zero entries. In this case only the principal submatrix of the Jacobian, $J$, corresponding to the positive entries of the connectivity matrices needs to be examined. Since any principal submatrix of $J$ inherits the stability properties of $J$, the result follows.   We therefore obtain the following corollary.

\begin{corollary}
The manifold $\mathcal{M}_0$ defined in Eq.~\eqref{20june10_slowManifold} is normally hyperbolic and stable.
\end{corollary}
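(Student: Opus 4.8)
The plan is to reduce the corollary, via the GSPT framework of section~\ref{S:fenichel}, to a single spectral statement about the fast Jacobian $\partial F/\partial C$, and then to invoke Lemma~\ref{15mar10_stableMatrix}. Recall that $\mathcal{M}_0$ is cut out by $F(C)=0$ with $F$ as in Eq.~\eqref{may3110_dcdt}, and that $C$ is the fast variable. Normal hyperbolicity together with stability is exactly the assertion that every eigenvalue of $\left.\partial F/\partial C\right|_{\mathcal{M}_0}$ has strictly negative real part: negativity of the real parts forces the fast fibers to contract onto $\mathcal{M}_0$ (stability) and keeps the spectrum off the imaginary axis (normal hyperbolicity). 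As a bonus, nonvanishing real parts make $\partial F/\partial C$ invertible on $\mathcal{M}_0$, so the implicit function theorem guarantees that $\mathcal{M}_0$ is a smooth manifold of the expected dimension. Since section~\ref{20june10_StabilityOfSlowManifold} computes $\partial F/\partial C$ \emph{everywhere}, not just on $\mathcal{M}_0$, it suffices to analyze the explicit matrix $J$ in Eq.~\eqref{1april10_jacobian} at an arbitrary point.

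The heart of the argument is to match $J$ with Lemma~\ref{15mar10_stableMatrix}. Flipping the sign in Eq.~\eqref{1april10_jacobian}, the matrix $-J = \widehat{Q}_1\left[(ZV_{2n}^t\otimes I_n)+(I_{2n}\otimes PV_n^t)(I_{2n^2}-\widehat{(I_{2n}^n)^t})\right]+\widehat{Q}_2$ has precisely the form of Eq.~\eqref{15mar10_jacobian} under the substitutions $\Lambda=\widehat{Q}_1$, $\Gamma=\widehat{Q}_2$, $Y=P$, $R_n=V_n^t$ and $R_{2n}=V_{2n}^t$. The Lemma then yields that $-J$ has eigenvalues with strictly positive real parts, hence $J$ has eigenvalues with strictly negative real parts, which is exactly what we need.

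It remains to check that the hypotheses of Lemma~\ref{15mar10_stableMatrix} hold. The diagonal matrices $\widehat{Q}_1=\diag(\mathrm{vec}(Q_1))$ and $\widehat{Q}_2=\diag(\mathrm{vec}(Q_2))$ have positive diagonal entries because, under the assumption made just before Eq.~\eqref{1april10_jacobian} that every connectivity entry is positive, all the rate constants assembled into $Q_1=[K_1:L_1^t]$ and $Q_2=[K_{-1}+K_2:L_{-1}^t+L_2^t]$ are positive. For the vectors, I would argue that $P$ and the stacked vector $Z$ of $U$ and $E$ have strictly positive entries on the relevant part of $\mathcal{M}_0$: mass-action kinetics keeps all concentrations nonnegative and leaves the interior of the positive orthant invariant, so the species concentrations $U_i,E_i,P_i$ appearing in $Z$ and $P$ stay strictly positive over the compact, inflowing region on which GSPT is applied. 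With all hypotheses verified, Lemma~\ref{15mar10_stableMatrix} applies.

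Finally, I would discharge the standing assumption that all connectivity entries are positive. When some rate constants vanish, the associated entries of $C$ are identically zero and are not genuine dynamical variables; as explained in~\ref{Section:Sparse}, the true fast subsystem is then governed by the principal submatrix of $J$ indexed by the surviving (positive-rate) entries. Such a principal submatrix is again of the structural form~\eqref{15mar10_jacobian}, now with $Z$, $P$ restricted to the relevant components and $\widehat{Q}_1,\widehat{Q}_2$ replaced by their diagonal restrictions, so Lemma~\ref{15mar10_stableMatrix} applies to it directly and it inherits strictly negative real parts. Since a principal submatrix inherits the stability of $J$, the conclusion follows in the sparse case as well. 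The main obstacle I anticipate is not the spectral computation — that is packaged into Lemma~\ref{15mar10_stableMatrix} — but the bookkeeping that (i) verifies $U,E,P$ are genuinely positive on the portion of $\mathcal{M}_0$ of interest and (ii) confirms that restricting to a sparse connectivity really does reduce to a principal submatrix of the \emph{same} structural type, so that the Lemma can be reused verbatim.
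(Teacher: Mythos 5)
Your proposal follows essentially the same route as the paper: compute the fast Jacobian $\partial F/\partial C$ everywhere, identify $-J$ with the matrix of Lemma~\ref{15mar10_stableMatrix} via $\Lambda=\widehat{Q}_1$, $\Gamma=\widehat{Q}_2$, $Y=P$, $Z=[U^t\,E^t]^t$, conclude strictly negative real parts, and then handle sparse connectivity by restricting to the principal submatrix of $J$ indexed by the nonzero rates, exactly as in~\ref{Section:Sparse}. The only cosmetic difference is that you justify the reduction to a principal submatrix by claiming it retains the structural form of Eq.~\eqref{15mar10_jacobian}, whereas the paper argues that the relevant principal submatrix is independent of the zero entries and inherits the spectrum bound from the all-positive case; both arguments land in the same place.
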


\subsection{Validity of tQSSA in the general setup}\label{ValidityOfTQSSAInGeneralSetup}

We next investigate the asymptotic limits under which the tQSSA is valid in the general setting described at the beginning of this section.  We follow the approach given in the previous sections
to obtain a suitable rescaling of the variables.  While this rescaling does not change the 
stability of the slow manifold, $\mathcal{M}_0$,   it allows us to more easily describe the 
asymptotic limits in which the timescales are separated, and the system is singularly perturbed.

Recall that  Eq.~(\ref{09mar10_matrixForm}) and Eq.~(\ref{may3010_ODEtermWise}) are equivalent. The concise form given in Eq.~(\ref{09mar10_matrixForm}) was useful in obtaining a reduction and checking the stability of the  slow manifold. However, to obtain sufficient conditions for the validity of the tQSSA, we will work with  Eqs.~(\ref{may3010_ODEtermWise}) and (\ref{21mar_xieiyi}).  

Let $l_{ij}^m := (l_{ij}^{-1}+l_{ij}^2)/l_{ij}^1$, $k_{ij}^m := (k_{ij}^{-1}+k_{ij}^2)/k_{ij}^1$ denote the MM constants.  Then the following scaling factors are natural generalizations of those introduced in section~\ref{sec:may3010_TwoProtein}, 
\begin{equation*}
\beta_{ij} := \frac{E_i^TU_j^T}{E_i^T+U_j^T+l_{ij}^m}, 
\quad
\alpha_{ij} := \frac{U_i^TU_j^T}{U_i^T+U_j^T+k_{ij}^m}, 
\quad \quad \quad
i,j \in \{1,2,...,n\}.
\end{equation*}
 Note that for each pair $(i,j)$ either all of $k_{ij}^1, k_{ij}^{-1}, k_{ij}^2$ are all zero or all nonzero.  In the case that $k_{ij}^1 = k_{ij}^{-1} = k_{ij}^2 = 0$  we define $k_{ij}^m := 0$. Similarly, if $l_{ij}^1 = l_{ij}^{-1} = l_{ij}^2 = 0$ then $l_{ij}^m := 0$. Let
\begin{equation*}
T_{\bar{U}} := \, \max \left\{\, \underset{i,j}{\max}\left\{ \frac{U_j^T}{l_{ij}^2 \beta_{ij}}\right\},\, \underset{i,j}{\max}\left\{ \frac{U_j^T}{k_{ij}^2 \alpha_{ij}}\right\} \right\}  
= \frac{U_{j_0}^T}{l_{i_0j_0}^2 \beta_{i_0j_0}}, \quad \text{for some } i_0,j_0 \in \{1,2,...,n\}.
\end{equation*}
We next define the following dimensionless rescaling of the variables in 
Eq.~(\ref{may3010_ODEtermWise}) 
\begin{equation}\label{may3110_rescaling}
 \tau = \frac{t}{T_{\bar{U}}}, \quad \text{and} \quad 
\bar{p}_i(\tau) = \frac{\bar{P}_i(t)}{U_i^T}, \quad 
c_{ij}^u(\tau)= \frac{C_{ij}^U(t)}{\alpha_{ij}}, \quad 
c_{ij}^e(\tau) = \frac{C_{ij}^E(t)}{\beta_{ij}}, \quad i,j \in \{1,2,...,n\}.
\end{equation}
After rescaling,  Eqs.~(\ref{may3010_ODEtermWise}) take the form 
\begin{subequations}\label{21mar10_rescaled}
\begin{equation}
\frac{d\bar{p}_i}{d\tau} 
= \sum_{r=1}^n\bigg(   \frac{k_{ri}^{2}\alpha_{ri}U_{j_0}^T}{l_{{i_0j_0}}^2 \beta_{{i_0j_0}} U_i^T}c_{rj}^u - \frac{l_{ri}^2 \beta_{ri} U_{j_0}^T}{l_{{i_0j_0}}^2 \beta_{{i_0j_0}} U_i^T} c_{rj}^e\bigg), 
\label{21mar10_rescaleda} 
\end{equation} 
%
%
\addtocounter{equation}{1}
\begin{eqnarray}
\bigg(\frac{ \beta_{ij}  }{l_{ij}^1  E_i^TU_j^T T_{\bar{U}}}\bigg)
\frac{dc_{ij}^e}{d\tau} 
&=& 1-c_{ij}^e  -\left[ \sum_{\underset{s\neq j}{s=1}}^n\frac{\beta_{is}}{E_i^T}c_{is}^e \right]
    \left[ 1 -\bar{x}_j-\frac{1}{U_j^T}\sum_{\underset{r\neq i}{r=1}}^n\left(\beta_{rj}c_{rj}^e+\sum_{\underset{s\neq j}{s=1}}^n\alpha_{js}c_{js}^u\right)  \right]  \nonumber   \\
&&-\frac{1}{U_j^T} 
    \left[  U_j^T\bar{x}_j+\sum_{\underset{r\neq i}{r=1}}^n\beta_{rj}c_{rj}^e+\sum_{\underset{s\neq j}{s=1}}^n\alpha_{js}c_{js}^u  \right]\nonumber \\
&& -\frac{1}{U_j^T}\left[  U_j^T\bar{x}_j +\sum_{\underset{r\neq i}{r=1}}\beta_{rj}c_{rj}^e+\sum_{\underset{s\neq j}{s=1}}^n\alpha_{js}c_{js}^u + \sum_{\underset{s\neq i}{s=1}}^n\beta_{is}c_{is}^e \right]\frac{\beta_{ij}c_{ij}^e}{E_i^T}  +  \frac{(\beta_{ij}c_{ij}^e)^2}{E_i^TU_j^T}.
\nonumber \\
 \label{21mar10_rescaledb}
\end{eqnarray}
\end{subequations} 
The rescaled form of Eq.~(\ref{may3010_ODEtermWisec}) is similar to the rescaled form of  Eq.~(\ref{may3010_ODEtermWiseb}), and we therefore omit it.
If we define 
\[
 \epsilon_{ij} := \frac{k^2_{ij}}{k^1_{ij}}\frac{U^T_i}{(U^T_i + U^T_j + k^m_{ij})^2}, 
\quad
 \epsilon^e_{ij} := \frac{l^2_{ij}}{l^1_{ij}}\frac{E^T_i}{(E^T_i + U^T_j + l^m_{ij})^2}, 
\]
and let
\begin{equation}\label{june2110_epsilon}
 \epsilon := \text{max} \left\{ \underset{i,j}{\text{max}} \left\{ \epsilon_{ij} \right\}, \underset{i,j}{\text{max}} \left\{  \epsilon^e_{ij}\right\} \right\},
\end{equation}
then  the following theorem defines the conditions under which  Eq.~(\ref{21mar10_rescaled}) defines a singularly perturbed system and, hence, conditions under which  GSPT is applicable. 

\begin{theorem}\label{20june10_TQSSA}
If for all non-zero $k_{ij}^1,k_{ij}^2,k_{ij}^{-1}$ and for all non zero $l_{ij}^1,l_{ij}^2,l_{ij}^{-1}$ and for all $U_i^T,E_i^T$ 
\[
\begin{array}{ccccccc}
 \mathcal{O}\left(\frac{k^{1}_{ij}}{k^{1}_{rs}}\right)
 &=& 
\mathcal{O}\left(\frac{l^{1}_{ij}}{k^{1}_{rs}}\right)
 &=&
\mathcal{O}\left(\frac{l^{1}_{ij}}{l^{1}_{rs}}\right) 
&=& \mathcal{O}(1), \\
\mathcal{O}\left(\frac{k^{2}_{ij}}{k^{2}_{rs}}\right)
 &=& 
\mathcal{O}\left(\frac{l^{2}_{ij}}{k^{2}_{rs}}\right)
 &=&
\mathcal{O}\left(\frac{l^{2}_{ij}}{l^{2}_{rs}}\right) 
&=& \mathcal{O}(1), \\
\mathcal{O}\left(\frac{k^{-1}_{ij}}{k^{-1}_{rs}}\right)
 &=& 
\mathcal{O}\left(\frac{l^{-1}_{ij}}{k^{-1}_{rs}}\right)
 &=&
\mathcal{O}\left(\frac{l^{-1}_{ij}}{l^{-1}_{rs}}\right) 
&=& \mathcal{O}(1), \\
\mathcal{O}\left(\frac{X_i^T}{X_j^T}\right)
 &=& 
\mathcal{O}\left(\frac{X_i^T}{E_j^T}\right)
 &=& 
\mathcal{O}\left(\frac{E_i^T}{E_j^T}\right) 
&=& \mathcal{O}(1),
\end{array}
\qquad 
1 \le i,j,r,s \le n,
\]
in the limit $\epsilon \rightarrow 0$, then Eq.~(\ref{21mar10_rescaled}) is a singularly perturbed system with the structure of Eq.~(\ref{E:group}).  In particular, the $\bar{p}_i$ are the slow variables,  and the $c_{ij}$ and $c_{ij}^e$ are the fast variables. 
\end{theorem}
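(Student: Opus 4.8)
The plan is to display the rescaled system~\eqref{21mar10_rescaled} in the standard fast--slow form~\eqref{E:group} by taking $v=(\bar p_1,\dots,\bar p_n)$ as the slow variable and the collection of all $c_{ij}^u,c_{ij}^e$ as the fast variable $u$, factoring a single small parameter $\epsilon$ out of the complex equations, and then verifying the three structural requirements in~\eqref{E:cond1}. The substantive content is entirely \emph{uniform order-of-magnitude bookkeeping}: I must show that (i) every prefactor multiplying a $dc/d\tau$ on the left of~\eqref{21mar10_rescaledb} is $\Theta(\epsilon)$ with $\epsilon$ as in~\eqref{june2110_epsilon}; (ii) every coefficient on the right-hand sides of~\eqref{21mar10_rescaleda} and~\eqref{21mar10_rescaledb} stays $\mathcal O(1)$ as $\epsilon\to0$; and (iii) the leading terms of $f$ and $g$ do not vanish identically.

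For step~(i) I would first reproduce, for each index pair, the upper bound already obtained in the two-protein case in section~\ref{S:validity2}. Since $T_{\bar U}\ge U_j^T/(l_{ij}^2\beta_{ij})$ for every $(i,j)$ by the maximum defining it, the chain
\[
\frac{\beta_{ij}}{l^1_{ij}E_i^T U_j^T T_{\bar U}}\;\le\;\frac{l^2_{ij}}{l^1_{ij}}\,\frac{\beta_{ij}^2}{E_i^T (U_j^T)^2}\;=\;\frac{l^2_{ij}}{l^1_{ij}}\,\frac{E_i^T}{(E_i^T+U_j^T+l^m_{ij})^2}\;=\;\epsilon^e_{ij}\;\le\;\epsilon
\]
bounds each EP-prefactor from above, and the analogous computation handles the PP-prefactors. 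For the matching \emph{lower} bound I would note that at the index $(i_0,j_0)$ realizing $T_{\bar U}$ the first inequality is an equality, so that prefactor equals $\epsilon^e_{i_0j_0}$ exactly; every other prefactor differs from it only by the factor $\tfrac{\beta_{ij}}{l^1_{ij}E_i^TU_j^T}\big/\tfrac{\beta_{i_0j_0}}{l^1_{i_0j_0}E_{i_0}^TU_{j_0}^T}$, which the hypotheses force to be $\mathcal O(1)$. Invoking the rate and concentration ratio hypotheses once more to conclude that all the $\epsilon_{ij},\epsilon^e_{ij}$ are mutually comparable, hence each comparable to their maximum $\epsilon$, yields $\text{prefactor}_{ij}=\epsilon\,d_{ij}$ with $d_{ij}$ bounded away from $0$ and $\infty$. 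Collecting the fast equations then gives $\epsilon\,\diag(d)\,\tfrac{d}{d\tau}(\text{complexes})=(\text{RHS})$, and absorbing $\diag(d)^{-1}$ into the right-hand side produces a single clean factor of $\epsilon$, i.e.\ the $f$ of~\eqref{E:group}.

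Step~(ii) splits in two. The coefficients appearing inside~\eqref{21mar10_rescaledb}, such as $\beta_{is}/E_i^T$, $\alpha_{js}/U_j^T$ and $(\beta_{ij})^2/(E_i^TU_j^T)$, are in fact bounded by $1$ outright, since $\alpha_{ij}\le\min\{U_i^T,U_j^T\}$ and $\beta_{ij}\le\min\{E_i^T,U_j^T\}$ follow directly from their definitions, so no hypothesis is needed there. The work is in the slow equation~\eqref{21mar10_rescaleda}, whose coefficients have the shape $\tfrac{k^2_{ri}\alpha_{ri}U_{j_0}^T}{l^2_{i_0j_0}\beta_{i_0j_0}U_i^T}$; here I would use that $\alpha$ and $\beta$ are homogeneous of the same degree in the concentrations, so that their ratio is $\mathcal O(1)$ whether the Michaelis constants dominate the concentrations or not, together with $\mathcal{O}(k^2_{ri}/l^2_{i_0j_0})=\mathcal{O}(1)$ and $\mathcal{O}(U_{j_0}^T/U_i^T)=\mathcal{O}(1)$, to conclude each such coefficient is $\mathcal O(1)$. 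The structural conditions~\eqref{E:cond1} are then immediate: the constant term $1$ in each complex equation keeps $f(u,v,0)\not\equiv0$; the linear-in-complex terms keep $g(u,v,0)\not\equiv0$; and since $g$ carries no factor of $\epsilon$ and is $\mathcal O(1)$, $\epsilon g\to0$ trivially.

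The main obstacle is the difficulty hidden in steps~(i) and~(ii): the hypotheses compare rate constants only to rate constants and concentrations only to concentrations, never a rate to a concentration, yet the scaling factors $\alpha_{ij},\beta_{ij}$ and the Michaelis constants $k^m_{ij},l^m_{ij}$ mix the two. The crux is therefore to show that, because $\alpha_{ij}$ and $\beta_{ij}$ share the identical ``product-over-sum'' form, every ratio of two of them, and every prefactor ratio, is $\mathcal O(1)$ \emph{uniformly across the admissible regimes} in which $k^m,l^m$ are small, comparable to, or large relative to the $U_i^T,E_i^T$. Carrying this uniformity through all index pairs, rather than estimating regime-by-regime, is the only genuinely delicate part; once it is in place, assembling~\eqref{21mar10_rescaled} into the form~\eqref{E:group} and reading off~\eqref{E:cond1} is mechanical.
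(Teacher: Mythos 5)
Your proposal follows essentially the same route as the paper: the same upper-bound chain $\text{prefactor}_{ij}\le\epsilon^e_{ij}\le\epsilon$ via the definition of $T_{\bar U}$, the same observation that the definitions of $\alpha_{ij},\beta_{ij}$ force the right-hand-side coefficients of the fast equations to be at most $1$ with at least one equal to $1$ (so $f(u,v,0)\not\equiv 0$), and the same use of the ratio hypotheses to keep the slow equation's coefficients $\mathcal{O}(1)$ and its right-hand side not identically zero. You are in fact somewhat more careful than the paper on two points it leaves implicit --- the matching lower bound making each prefactor $\Theta(\epsilon)$ so that a single $\epsilon$ can legitimately be factored out into the form of Eq.~(\ref{E:group}), and the uniform comparability of the mixed rate/concentration quantities $\alpha_{ij},\beta_{ij},k^m_{ij},l^m_{ij}$ across regimes --- while omitting only the paper's closing one-line remark that the previously established stability of the slow manifold is unaffected by the nonsingular rescaling.
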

\begin{proof}
For each $i$ there always exist indices $r,s$ such that $k_{ri}^2 \neq 0 \neq k_{si}^2 $. Hence, the the right hand side of Eq.~(\ref{21mar10_rescaleda}) is not identically zero for any $i \in \{1,2,...,n\}$. Furthermore, by assumption all coefficients on the right hand side of Eq.~(\ref{21mar10_rescaleda}) are $\mathcal{O}(1)$ as $\epsilon \rightarrow 0$. This implies that $\epsilon$ times the right hand side of Eq.~(\ref{21mar10_rescaleda}) is identically zero, in the limit $\epsilon \rightarrow 0$.

Secondly, the definition of $\beta_{ij}$ implies that all coefficients on the right hand side of Eq.~(\ref{21mar10_rescaledb}) are less than or equal to 1. Also, by definition, at least one coefficient has value exactly equal to 1. Hence, the right hand side of Eq.~(\ref{21mar10_rescaledb}) is not identically zero in the limit $\epsilon \rightarrow 0$. 

The definitions of $\epsilon, \alpha_{ij}, \beta_{ij}, T_{\bar{U}}$ imply that coefficients of $\frac{dc^e_{ij}}{d\tau}$ in Eq.~(\ref{21mar10_rescaledb}) are less than or equal to $\epsilon$. For example
\[
\frac{ \beta_{ij}  }{l_{ij}^1  E_i^TU_j^T} \frac{1}{ T_{\bar{U}}}
\le
\frac{ \beta_{ij}  }{l_{ij}^1  E_i^TU_j^T} \frac{l^2_{ij} \beta_{ij}}{U_j^T }
=
\epsilon_{ij}^e
\le
\epsilon.
\]
Hence, in the limit $\epsilon \rightarrow 0$, the left hand side of Eq.~(\ref{21mar10_rescaledb}) vanishes while the right hand side does not. To conclude the proof we only need to show the stability of the slow manifold in rescaled coordinates. But we have already shown that for unscaled coordinates in section~\ref{20june10_StabilityOfSlowManifold} and a non-singular scaling of variable, as in Eq.~(\ref{may3110_rescaling}), will not affect the eigenvalues of the Jacobian.
\end{proof}
Hence, under the assumptions of the above theorem, Eq.~(\ref{21mar10_rescaled}) has the form of Eq.~\ref{10may10_GSPT}. Hence, switching back to unscaled variables  we conclude that in the limit $\epsilon \rightarrow 0$, tQSSA is valid, \emph{i.e.} the reduction from Eq.~(\ref{09mar10_matrixForm}) to Eq.~(\ref{09mar10_matrixFormDEL}) is valid.

\subsection{The assumption of zero initial concentrations of intermediate complexes and the choice of scaling}\label{Sect.ZeroInitComplex} 

Before concluding, we discuss the significance of zero initial concentrations of intermediate complexes and the benefit of the choice of scaling we used to verify the asymptotic limits in which the system is singularly perturbed. Proposition~\ref{june2110_invariantHypercube} below proves that if the reaction  starts with zero initial concentration of intermediate complexes then the solution of both Eqs.~(\ref{09mar10_matrixForm}) and (\ref{21mar10_rescaled}) are trapped in an  $\mathcal{O}(1)$ neighborhood of the origin. Hence, separation of time scale in Eq.(\ref{21mar10_rescaled}), implied by Theorem~\ref{20june10_TQSSA} can be used to obtain the reduction of Eq.~(\ref{09mar10_matrixForm}) given by Eq.~(\ref{09mar10_matrixFormDEL}).
This is important, since GSPT would not be applicable if the rescaling were to send $\mathcal{O}(1)$ solutions of Eq.~(\ref{09mar10_matrixForm}) to  solutions  of Eq.~(\ref{21mar10_rescaled}) that are unbounded 
as $\epsilon \rightarrow 0$.

\begin{proposition}\label{june2110_invariantHypercube}
The $2n^2+n$ dimensional hypercube $\Omega$ defined by
\[
 \Omega := \left\{\{\bar{p}_i\},\{c_{ij}^u\},\{c_{ij}^e\}\, |\, 
 0\le\bar{p}_i \le 1,\,
 0 \le c_{ij}^u\le 2,\,
 0 \le c_{ij}^e\le 2,\,
 \forall \,  i,j \in \{1,2,...,n\}\right\},
\]
is invariant under the flow of Eq.~(\ref{21mar10_rescaled}).
\end{proposition}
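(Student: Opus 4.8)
The standard tool for proving positive invariance of a region under a flow is to check the vector field along the boundary: on each face of $\Omega$ the vector field must point inward (or be tangent). The plan is therefore to examine each of the $2(2n^2+n)$ bounding hyperplanes of the hypercube in turn, and verify that the corresponding component of the right-hand side of Eq.~(\ref{21mar10_rescaled}) has the correct sign there. Because the faces come in three groups --- those where some $\bar p_i$ hits $0$ or $1$, those where some $c_{ij}^u$ hits $0$ or $2$, and those where some $c_{ij}^e$ hits $0$ or $2$ --- it suffices to treat one representative face in each group, the others being identical by symmetry of the construction.

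\textbf{The $c^e$ and $c^u$ faces.} Consider first a face where $c_{ij}^e = 0$ while all other variables remain in $\Omega$. From the rescaled dynamics, the right-hand side of Eq.~(\ref{21mar10_rescaledb}) reduces on this face to its constant term plus terms that vanish when $c_{ij}^e=0$; inspecting Eq.~(\ref{21mar10_rescaledb}) I expect the value there to be $1 - (\text{nonnegative quantity})$ where the nonnegative part is the scaled total substrate already bound, which is bounded by the available substrate. The key point is that when $c_{ij}^e=0$ the quadratic term $(\beta_{ij}c_{ij}^e)^2/(E_i^TU_j^T)$ and the cubic-coupling terms drop out, and the surviving expression must be $\ge 0$, so $dc_{ij}^e/d\tau \ge 0$ and the flow does not exit through $c_{ij}^e=0$. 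At the opposite face $c_{ij}^e = 2$, I would show that the scaling was chosen precisely so that the loss term dominates: since $\beta_{ij}\le E_i^T$ (from the definition of $\beta_{ij}$ as a harmonic-type mean), the enzyme-consumption term $-c_{ij}^e$ together with the negative bilinear terms forces $dc_{ij}^e/d\tau \le 0$ when $c_{ij}^e=2$. The same argument applies verbatim to the $c_{ij}^u$ faces using $\alpha_{ij}$ and Eq.~(\ref{may3010_ODEtermWisec}) in rescaled form.

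\textbf{The $\bar p$ faces.} On $\bar p_i = 0$ I would argue that any active or inactive protein and all complexes are nonnegative, so the total $\bar p_i$ cannot decrease below zero; concretely, when $\bar p_i = 0$ the physical constraint forces $P_i$ and the associated complexes to zero, killing the negative (loss) terms in Eq.~(\ref{21mar10_rescaleda}) and leaving the production terms $k^2_{ri}\alpha_{ri}c^u_{ri}\ge 0$, so $d\bar p_i/d\tau\ge 0$. On $\bar p_i=1$, equivalently $\bar P_i = U_i^T$, the substrate $U_i$ available for further activation is exhausted, so the production flux into $\bar p_i$ shuts off while the dephosphorylation flux (the $c^e$ terms) remains, giving $d\bar p_i/d\tau\le 0$.

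\textbf{Main obstacle and conclusion.} The genuinely delicate step is the $\bar p$-faces: unlike the complex equations, Eq.~(\ref{21mar10_rescaleda}) does not display the needed sign manifestly, because $\bar p_i$ is a \emph{total} variable and the individual complex concentrations $c^u_{ri}, c^e_{ri}$ entering it are not independently constrained by $\bar p_i$ alone. To make the boundary signs rigorous I would return to the physical (unscaled) constraints Eq.~(\ref{20mar10_UT}) and argue that on $\mathcal{M}$-consistent configurations $\bar P_i=0$ forces every nonnegative summand $P_i, C^U_{is}, C^E_{ri}$ in Eq.~(\ref{may3110_PiBarTermWise}) to vanish, and dually $\bar P_i=U_i^T$ exhausts the free substrate $U_i$; translating these back through the scaling Eq.~(\ref{may3110_rescaling}) yields the inward-pointing inequalities. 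Once the correct sign is established on every face, positive invariance of the compact convex set $\Omega$ follows from the standard Nagumo/tangency criterion for ODEs, completing the proof.
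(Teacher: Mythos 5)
Your overall strategy --- nonnegativity of mass-action concentrations plus the conservation law~(\ref{20mar10_UT}) for the $\bar p_i$ bounds and the lower bounds on the complexes, and a boundary sign check only on the upper faces $c^u_{ij}=2$, $c^e_{ij}=2$ --- matches the paper's proof in outline. The genuine gap is at that sign check, which is the heart of the proposition. Your justification that the loss term dominates at $c^e_{ij}=2$ ``since $\beta_{ij}\le E_i^T$'' does not establish the required inequality: $2\beta_{ij}$ need not be bounded by $\min(E_i^T,U_j^T)$, and the unbinding/product terms must be compared quantitatively against the binding term $l^1_{ij}E_iP_j$. What actually closes the argument (and explains why the constant $2$ appears in the definition of $\Omega$) is the following computation, carried out in unscaled variables in the paper: at $C^E_{ij}=2\beta_{ij}$, nonnegativity of the remaining species together with the conservation laws gives $E_i\le E_i^T-2\beta_{ij}$ and $P_j\le U_j^T-2\beta_{ij}$, so
\[
\frac{dC^E_{ij}}{dt}\;\le\; l^1_{ij}\Bigl[\bigl(E_i^T-2\beta_{ij}\bigr)\bigl(U_j^T-2\beta_{ij}\bigr)-2\,l^m_{ij}\beta_{ij}\Bigr]\;=\;l^1_{ij}\bigl(4\beta_{ij}^2-E_i^TU_j^T\bigr)\;\le\;0,
\]
where the middle equality uses the definition $\beta_{ij}=E_i^TU_j^T/(E_i^T+U_j^T+l^m_{ij})$ and the final inequality is AM--GM, $2\beta_{ij}\le\sqrt{E_i^TU_j^T}$. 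The identical identity with $\alpha_{ij}$ handles the faces $c^u_{ij}=2$. Without this step the proof does not go through.

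A secondary point: your proposed sign check on the face $\bar p_i=1$ (``the production flux shuts off'') is not quite right when autocatalysis is allowed, since $\bar P_i=U_i^T$ forces $U_i=0$ and $C^U_{ri}=0$ only for $r\neq i$, so the production term $k^2_{ii}C^U_{ii}$ in $d\bar P_i/dt$ need not vanish. This does not matter in the end, because --- as you note in your fallback, and as the paper does --- the bound $\bar P_i\le U_i^T$ follows directly from the conservation law~(\ref{20mar10_UT}) and nonnegativity of all species, with no boundary analysis of Eq.~(\ref{21mar10_rescaleda}) required.
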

\begin{proof}
By the  construction of the differential equations from the law of mass action, all the species concentration variables can take only non negative values. This together with 
the conservation constraints
(\ref{20mar10_UT})  force the $\bar{P_i}$ to take values between $0$ and $U_i^T$. Therefore $0\le\bar{p}_i(\tau) \le 1,\, \forall \, \tau >0$, provided the initial conditions are chosen in $\Omega$.


Positivity of variables also implies that $c_{ij}^u(\tau) \ge 0,\,c_{ij}^e(\tau) \ge 0$ if the flow starts inside $\Omega$. So we only need to show that $c_{ij}^u(\tau) \le 2$ and $ \,c_{ij}^e(\tau) \le 2$. It is sufficient to show that 
$
 \frac{d c_{ij}^u} {d \tau} \bigg |_{c_{ij}^u = 2} \le 0,$ and
 $
 \frac{d c_{ij}^e} {d \tau} \bigg |_{c_{ij}^e = 2} \le 0,$
or equivalently that $ \frac{d C_{ij}^U} {d t } \bigg |_{C_{ij}^U = 2\alpha_{ij}} \le 0,$ and $
 \frac{d C_{ij}^E} {d t } \bigg |_{C_{ij}^E = 2\beta_{ij}} \le 0.
$
But 
\begin{eqnarray*}
  \frac{d C_{ij}^U} {d t } \bigg |_{C_{ij}^U = 2\alpha_{ij}} 
&=&
  k_{ij}^1\left[P_i U_j  - (k_{ij}^{-1}+k_{ij}^{2}) C_{ij}^U \right]\big |_{C_{ij}^U = 2\alpha_{ij}} \\
&=&
k_{ij}^1\Bigg[\left(\bar{P}_i - \sum_{s=1}^nC_{is}^U - \sum_{r=1}^nC_{ri}^E \right)  \left( U_i^T - \bar{P}_i -   \sum_{r=1}^nC_{ri}^U\right)  \\
&& \hspace{7cm}
  - (k_{ij}^{-1}+k_{ij}^{2}) C_{ij}^U \Bigg]\Bigg |_{C_{ij}^U = 2\alpha_{ij}} \\
&\le& 
  k_{ij}^1\left(P_i^T-2\alpha_{ij} \right) \left(U_j - 2\alpha_{ij}\right) - (k_{ij}^{-1}+k_{ij}^{2}) 2\alpha_{ij}  \\ 
&=&
  k_{ij}^1 \left [\left(P_i^T-2\alpha_{ij} \right) \left(U_j^T - 2\alpha_{ij}\right) - k_{ij}^{m} 2\alpha_{ij} \right] \\
&\le& 0. 
\end{eqnarray*}
Similarly we can show that $C_{ij}^E$ is decreasing when $C_{ij}^E = \beta_{ij}$. This concludes the proof.
 \end{proof}

 From this we conclude that the assumptions of Theorem~\ref{20june10_TQSSA} and the zero initial values of intermediate complexes together imply the tQSSA.

Finally, we combine the results of section~\ref{20june10_StabilityOfSlowManifold} with Theorem~\ref{20june10_TQSSA} and Proposition~\ref{june2110_invariantHypercube} to obtain the main result of this study.

\begin{theorem}\label{june2110_mainTheorem}
If the parameters of  Eq.~(\ref{nov0809_mainODE}) are such that assumptions of Theorem~\ref{20june10_TQSSA} are satisfied and the initial values of intermediate complexes are zero, then the tQSSA holds.  For $\epsilon$ defined by Eq.~(\ref{june2110_epsilon}), there exists an $\epsilon_0$ such that for all $0 < \epsilon< \epsilon_0$, the solutions of  Eq.~(\ref{09mar10_matrixForm}) are
$\mathcal{O}(\epsilon)$ close to the solutions of Eq.~(\ref{09mar10_matrixFormDEL}) after an exponentially fast transient.  Eq.~(\ref{nov0809_mainODE}) can therefore be reduced to the $n$ dimensional Eq.~(\ref{19mar10_conclusion}) involving only the protein concentrations, $P_i$.
\end{theorem}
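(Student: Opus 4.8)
The plan is to assemble the machinery already developed in the preceding subsections into a single application of the Fenichel theory recalled in section~\ref{S:fenichel}, since the genuinely hard analytic work has been packaged into Theorem~\ref{20june10_TQSSA}, Proposition~\ref{june2110_invariantHypercube}, Lemma~\ref{15mar10_stableMatrix}, and Lemma~\ref{21june10_linerSolve}. First I would invoke Theorem~\ref{20june10_TQSSA}: under the stated $\mathcal{O}(1)$ balance conditions on the rate constants and total concentrations, the rescaled system Eq.~(\ref{21mar10_rescaled}) has exactly the singularly perturbed form Eq.~(\ref{E:group}), with the $\bar{p}_i$ as slow variables, the complexes $c_{ij}^u,c_{ij}^e$ as fast variables, and small parameter $\epsilon$ given by Eq.~(\ref{june2110_epsilon}). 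This identifies the fast/slow splitting and verifies the structural conditions Eq.~(\ref{E:cond1}).

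Next I would check the remaining geometric hypotheses. Compactness of the critical manifold and the inflowing-boundary condition follow from Proposition~\ref{june2110_invariantHypercube}: because every intermediate complex starts at zero, the initial data lie in the invariant hypercube $\Omega$, the flow never leaves $\Omega$, and all trajectories of interest remain $\mathcal{O}(1)$ as $\epsilon \to 0$; restricting $\mathcal{M}_0$ to $\Omega$ therefore yields a compact piece with inflowing boundary. Smoothness of $\mathcal{M}_0$ comes from Lemma~\ref{21june10_linerSolve}, which realizes it as the graph $C=(\tilde{C}_U(P),\tilde{C}_E(P))$ of a map obtained by solving a nonsingular linear system with entries depending rationally, hence smoothly, on $P$. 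Normal hyperbolicity and stability are exactly the content of the Corollary following Lemma~\ref{15mar10_stableMatrix}, namely that $\partial F/\partial C$ has eigenvalues with strictly negative real part everywhere on $\mathcal{M}_0$. With all hypotheses in place, Fenichel's theorem produces an $\epsilon_0>0$ such that for $0<\epsilon<\epsilon_0$ there is an invariant slow manifold $\mathcal{M}_\epsilon$ that is $\mathcal{O}(\epsilon)$-close to $\mathcal{M}_0$, to which solutions are attracted exponentially fast and along which they shadow the reduced flow to within $\mathcal{O}(\epsilon)$.

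I would then transfer the conclusion back through the rescaling Eq.~(\ref{may3110_rescaling}). Under the assumptions of Theorem~\ref{20june10_TQSSA} the scaling factors $T_{\bar{U}},\,U_i^T,\,\alpha_{ij},\,\beta_{ij}$ are all bounded above and below by $\mathcal{O}(1)$ quantities, so $\mathcal{O}(\epsilon)$-closeness in the rescaled variables is equivalent to $\mathcal{O}(\epsilon)$-closeness in the original variables. This gives directly that, after the exponentially fast transient, solutions of Eq.~(\ref{09mar10_matrixForm}) are $\mathcal{O}(\epsilon)$-close to solutions of the differential--algebraic system Eq.~(\ref{09mar10_matrixFormDEL}), which is precisely the statement that the tQSSA holds. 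To close the loop to the $n$-dimensional description, I would use Lemma~\ref{21june10_linerSolve} to solve the algebraic constraints uniquely for the complexes as $\tilde{C}_U(P),\tilde{C}_E(P)$, substitute into Eq.~(\ref{09mar10_matrixFormDELa}), and apply the chain-rule computation of section~\ref{S:general_reduced} to obtain the closed system Eq.~(\ref{19mar10_conclusion}), with initial value fixed by projecting $P(0)$ onto $\mathcal{M}_0$.

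I expect the main obstacle to lie not in the GSPT application, which is essentially bookkeeping once the lemmas are cited, but in the very last step: passing from the DAE in the slow coordinate $\bar{P}$ to the reduced equation Eq.~(\ref{19mar10_conclusion}) in $P$ requires that $I+\frac{\partial}{\partial P}(\tilde{C}_U(P)V_n)+\frac{\partial}{\partial P}(\tilde{C}_E(P)^tV_n)=\partial\bar{P}/\partial P$ be invertible, so that the graph of $\mathcal{M}_0$ over $\bar{P}$ can be reparametrized by $P$ (the two-protein case in section~\ref{S:intro_two} flagged this as a hypothesis). I would establish invertibility on $\Omega$ from the structure of the mass-action terms, showing that the off-identity block is a structured, diagonally subdominant perturbation of the identity inherited from the monotone dependence of the equilibrium complexes on $P$; a secondary technical point is to verify that the zero-complex initial data project onto $\mathcal{M}_\epsilon$ within $\mathcal{O}(\epsilon)$ of the nominal projection used to initialize Eq.~(\ref{19mar10_conclusion}).
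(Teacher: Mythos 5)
Your proposal matches the paper's own proof, which is likewise just the assembly of Theorem~\ref{20june10_TQSSA}, Proposition~\ref{june2110_invariantHypercube}, and the stability corollary of section~\ref{20june10_StabilityOfSlowManifold} into a single application of the GSPT of section~\ref{S:fenichel}, followed by Lemma~\ref{21june10_linerSolve} and the chain-rule computation of section~\ref{S:general_reduced} to reach Eq.~(\ref{19mar10_conclusion}). The one obstacle you flag at the end --- invertibility of $I+\frac{\partial}{\partial P}(\tilde{C}_U(P)V_n)+\frac{\partial}{\partial P}(\tilde{C}_E(P)^tV_n)$ --- is also left as an unverified hypothesis in the paper, so your plan is, if anything, slightly more complete.
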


\section{Discussion}\label{sec:discussion}

We obtained sufficient condition for the validity of tQSSA in non-isolated Michaelis-Menten type reactions.  We therefore significantly generalized  previous approaches that extended  the MM scheme to small networks of 
reactions~\citep{PedersenBersaniBersaniCortese2008}, and provided a theoretical justification of the numerical results obtained in~\citep{CilibertoFabrizioTyson2007}. 

 We noted that the  direct application of the tQSSA to equations modeling networks of reactions produces a reduction that contains coupled quadratic equations. However, for the class of networks discussed here  we were able to circumvent this problem by solving and equivalent linear system.  Moreover,  we obtained a closed form equation in terms of protein concentrations only.  A direct application of the tQSSA leads to a reduced system that involves the concentration of proteins and intermediate complexes.  It was also shown that the slow manifold used in the system reduction is always attracting.

MM type reactions are often used in  models of signaling networks.  In such models it is frequently assumed that the reduced equation describing the dynamics of a single, isolated protein can be used to study interactions in networks.  It has been noted that this use of MM differential equations is not necessarily justified~\citep{CilibertoFabrizioTyson2007}. The present approach provide an alternative approximation that was proved to be valid. 

Recently, a general reduction procedure for multiple timescale chemical reaction networks has been proposed~\citep{othmerLee2010}. That study considered a general chemical interaction network, with a pre--determined set of fast and slow reactions.  We deal with a more restrictive class of equations, which makes it unnecessary to start with a prior knowledge of fast and slow reactions. Moreover, we are able to  show the normal hyperbolicity of the slow manifold
in our reduction, something that was not possible in the more general setting described in~\citep{othmerLee2010}.


We  end by pointing out a couple of limitations of this work. Firstly,
not all enzymatic networks belong to the class we have considered here. For example, our full reduction scheme does not work for the network depicted in Fig. \ref{22april10_notReducable}. 

\clearpage
\begin{figure}[t]
\begin{center}
\includegraphics[scale=.2]{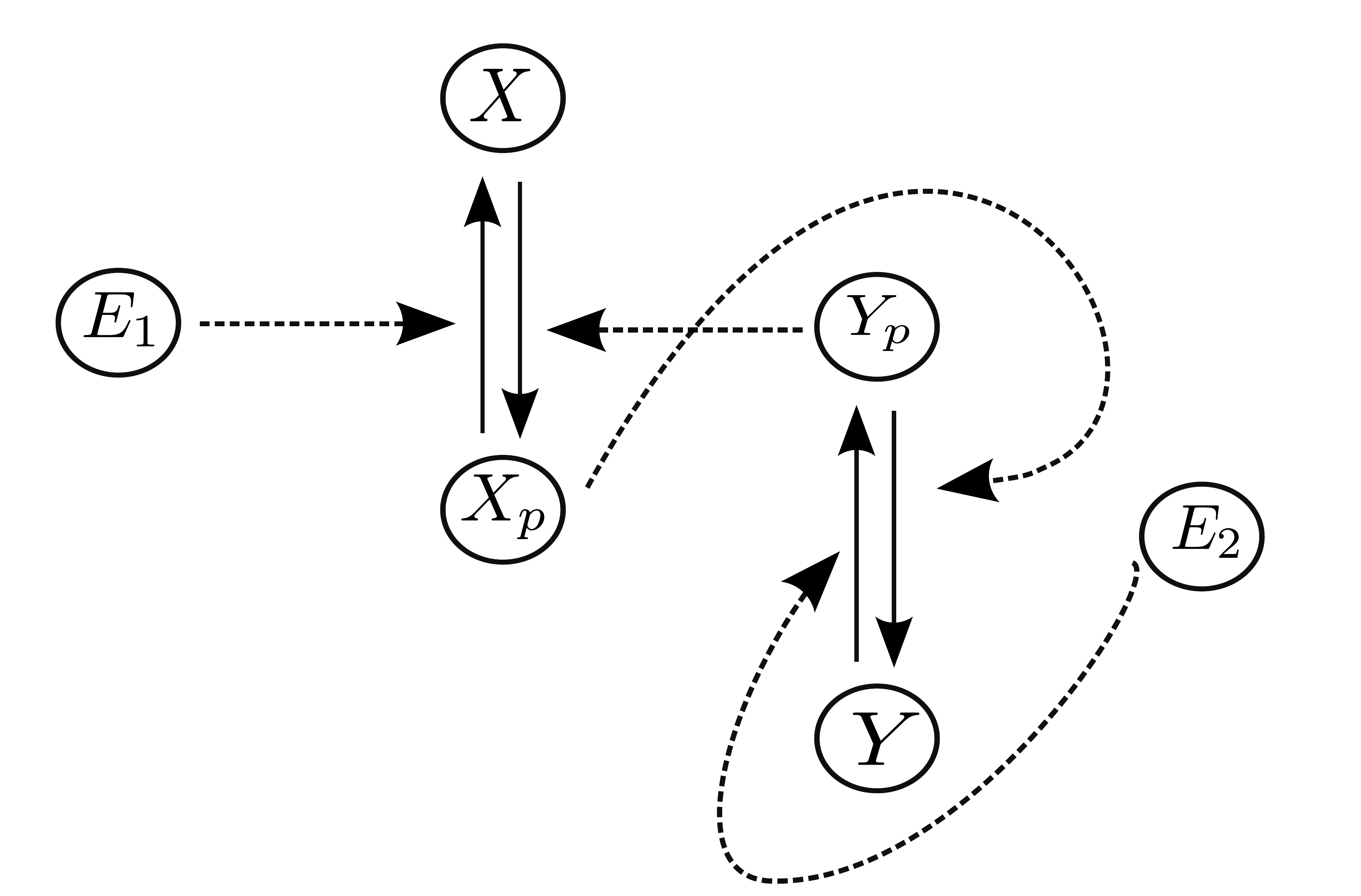}
\end{center}
\caption{\footnotesize{A hypothetical network for which the reduction described in sections~\ref{S:general_coordinates}--\ref{S:general_reduced} leads to a differential--algebraic system of equations.  The concentrations of the intermediate complexes appear in a nonlinear way
in the resulting algebraic equations.  A further reduction to a form involving only the protein concentrations
is therefore not apparent. } }
\label{22april10_notReducable}
\end{figure}
\clearpage

This network is a slight modification of the network in  Fig.~\ref{25feb10_fig1}a). Although the tQSSA can be justified, the algebraic part of the reduced equations cannot be solved using our approach. These equations have  the form
\begin{eqnarray*}
 0 &=& \underbrace{(X_T - X_p - C_x^e - C_x - C_y)}_{= X}   
 \underbrace{(Y_T - Y - C_y - C_x - C_y^e )}_{ = Y_p}                 - k_mC_x, \\
 0 &=& X_p \underbrace{(Y_T - Y - C_y - C_x - C_y^e )}_{ = Y_p}              - k_mC_y, \\
 0 &=& (E_1^T -C_x^e )X_p                - k_mC_x^e, \\
 0 &=& (E_2^T -C_y^e ) Y                - k_mC_y^e, 
\end{eqnarray*}
which has to be solved for $C_x ,C_y,C_x^e,C_y^e $ in terms of $X_p,Y $. Immediately we run into problems because the first equation in the above algebraic system is quadratic in the unknown variables. 

We also note that  no approximation theory is truly complete unless error bounds are investigated. Although GSPT guarantees that the derived approximations are  $\mathcal{O}(\epsilon)$ close to 
the true solutions,  a more precise description of the error terms may be desired.

{\bf Acknowledgements:} We thank Patrick de Leenheer, Paul Smolen and Antonios Zagaris
for helpful discussions and comments on earlier version of the manuscript.   This work was supported by NSF Grants 
DMS-0604429 and DMS-0817649 and a Texas ARP/ATP award.

\appendix
\section{Bound on the expression for $\epsilon$ as obtained in Eq.~(\ref{27mar10_eps})}

\begin{lemma}\label{lemma1}
(Bound on $\epsilon$): If $k_1,k_2,k_{-1},e,x \,\in \mathbb{R}_+$, then  
\[
 \epsilon := \frac{k_2}{k_1}\frac{e}{(e+x+\frac{k_{-1}+k_2}{k_1})^2} \le \frac{ k_1e\,k_2}{(k_1e+k_2)^2} \le \frac{1}{4}.
\]
\end{lemma}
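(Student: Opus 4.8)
The plan is to establish the two inequalities separately, reducing each to an elementary estimate. For the first inequality I would bound the denominator of $\epsilon$ from below by discarding the nonnegative contributions of $x$ and $k_{-1}$. Since the hypothesis places every constant in $\mathbb{R}_+$, we have $x \ge 0$ and $k_{-1} \ge 0$, so
\[
e + x + \frac{k_{-1}+k_2}{k_1} \;\ge\; e + \frac{k_2}{k_1} \;=\; \frac{k_1 e + k_2}{k_1}.
\]
Both sides are positive, so squaring preserves the inequality, and substituting the resulting lower bound for the denominator into the definition of $\epsilon$ gives
\[
\epsilon \;=\; \frac{k_2}{k_1}\,\frac{e}{\left(e+x+\frac{k_{-1}+k_2}{k_1}\right)^2} \;\le\; \frac{k_2}{k_1}\,\frac{e}{\left(\frac{k_1 e + k_2}{k_1}\right)^2} \;=\; \frac{k_1 k_2 e}{(k_1 e + k_2)^2},
\]
which is exactly the middle expression in the claimed chain.

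For the second inequality I would set $a := k_1 e$ and $b := k_2$, both strictly positive, so that the assertion becomes $ab/(a+b)^2 \le 1/4$. Clearing the positive denominator, this is equivalent to $4ab \le (a+b)^2$, i.e.\ to $0 \le a^2 - 2ab + b^2 = (a-b)^2$, which holds trivially; equivalently, it is the AM--GM inequality $\sqrt{ab} \le (a+b)/2$ applied to $a$ and $b$. Concatenating this with the first estimate yields the full chain $\epsilon \le \frac{k_1 k_2 e}{(k_1 e + k_2)^2} \le \frac14$.

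I expect no genuine obstacle in this argument; it is purely a matter of elementary estimation. The only points requiring care are checking that each term dropped from the denominator is truly nonnegative (guaranteed by the standing assumption $k_1,k_2,k_{-1},e,x \in \mathbb{R}_+$) and verifying that the positivity of the various quantities justifies squaring and clearing denominators. It is also worth remarking, though not needed for the statement, that both bounds are sharp: the first is approached as $x,k_{-1} \to 0$, and the second is attained precisely when $k_1 e = k_2$, so the constant $1/4$ cannot be improved.
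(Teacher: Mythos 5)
Your proposal is correct and follows essentially the same route as the paper: the paper likewise drops the nonnegative terms $x$ and $k_{-1}/k_1$ from the denominator to obtain the middle expression, and then bounds $k_1ek_2/(k_1e+k_2)^2$ by $1/4$ via the elementary inequality $s+1/s\ge 2$ (equivalent to your AM--GM step). Your remark on sharpness also matches the paper's observation that $\epsilon\to 1/4$ as $k_{-1},x\to 0$ with $k_1e=k_2$.
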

\begin{proof}
Since $k_1,k_2,k_{-1},e,x$ are all positive,
\begin{eqnarray*}
 \frac{k_2}{k_1}\frac{e}{(e+x+\frac{k_{-1}+k_2}{k_1})^2} 
&\le& \frac{k_2}{k_1}\frac{e}{(e+\frac{k_2}{k_1})^2} 
= \frac{ k_1e\,k_2}{(k_1e+k_2)^2}.
\end{eqnarray*}
Since for any positive number $s$, $s+1/s \ge 2$, we obtain
\begin{eqnarray*}
\frac{ k_1e\,k_2}{(k_1e+k_2)^2} \le \frac{ 1}{\left(\sqrt{\frac{k_1e}{k_2}}+\sqrt{\frac{k_2}{k_1e}}\,\right)^2 } \le \frac{1}{4}.
\end{eqnarray*}
\end{proof}
This bound is sharp because for $k_1 = 1$, $k_2 = 1 $, $k_{-1} \rightarrow 0$, $e =1, x \rightarrow 0$ we obtain $\epsilon \rightarrow 1/4$.

\section{Differentiation with respect to a matrix}\label{20june10_diffWrtMatrix}
The theory of differentiation with respect to a matrix  is described in~\citep{neudecker}. We already introduced the  $\mathrm{ vec }$ and  \emph{hat}  operators and the definitions of differentiation with respect to a matrix variable in Eqs.~(\ref{june2110_vecAndHatDefinition}) and (\ref{june2110_derivativeDefinition}).  Below we list some important properties of these operators as they relate to differentiation with respect to a matrix. Proofs can be found in~\citep{neudecker}.
\begin{theorem}[\citep{roth1934,Neudecker1969Some}]\label{may3110_vecABC}
For any three matrices $A,B$ and $C$ such that the matrix product $ABC$ is defined,
\[
 \mathrm{ vec }\, (ABC) = (C^t \otimes A) \,\mathrm{vec}\, (B).
\]
\end{theorem}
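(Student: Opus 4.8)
The plan is to prove the identity by direct computation, comparing both sides block-by-block after stacking columns. The two structural facts I would lean on are that $\mathrm{vec}$ simply stacks the columns of a matrix, and that a Kronecker product $C^t \otimes A$ has a natural block-partitioned form whose action on a stacked vector can be read off one block at a time.

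First I would fix dimensions: take $A$ to be $m \times p$, $B$ to be $p \times q$, and $C$ to be $q \times r$, so that $ABC$ is $m \times r$ and both sides of the claimed identity are vectors of length $mr$. Next I would compute the left-hand side column by column. Writing $C_{\cdot k}$ for the $k$th column of $C$, the $k$th column of $ABC$ is $(ABC)_{\cdot k} = A B\, C_{\cdot k}$; expanding $C_{\cdot k} = \sum_{j} C_{jk}\, e_j$ in the standard basis and using linearity gives $(ABC)_{\cdot k} = \sum_{j=1}^{q} C_{jk}\, A B_{\cdot j}$, where $B_{\cdot j}$ denotes the $j$th column of $B$. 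Stacking these expressions over $k = 1, \dots, r$ produces $\mathrm{vec}(ABC)$.

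Then I would analyze the right-hand side. I would partition $C^t \otimes A$ into an $r \times q$ array of $m \times p$ blocks, so that the $(k,j)$ block equals $(C^t)_{kj}\,A = C_{jk}\,A$, while the factor $\mathrm{vec}(B)$ is the stack of the columns $B_{\cdot 1}, \dots, B_{\cdot q}$. Reading off the $k$th block of length $m$ of the product then yields exactly $\sum_{j=1}^{q} C_{jk}\, A B_{\cdot j}$, which coincides with the $k$th column of $ABC$ computed above. Since this holds for every $k$, the two stacked vectors agree entrywise, establishing the identity.

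The step requiring the most care is the bookkeeping in the block decomposition of $C^t \otimes A$: one must track that the transpose sends the $(k,j)$ entry of $C^t$ to $C_{jk}$, so that the summation index appearing in the matrix-vector product matches the index summed over in the column expansion of $ABC$. There is no genuine analytic obstacle here, since the statement is purely algebraic; the entire difficulty lies in aligning the two indexings consistently. As an equally short alternative, I would first verify the rank-one case $\mathrm{vec}(x y^t) = y \otimes x$ for vectors $x, y$, expand $B$ in the standard matrix basis $e_i e_j^t$, and then invoke the mixed-product property $(C^t \otimes A)(e_j \otimes e_i) = (C^t e_j) \otimes (A e_i)$; I would fall back on this route if the block computation became cumbersome.
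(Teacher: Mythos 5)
Your proof is correct. Note that the paper itself does not prove this statement: Theorem~\ref{may3110_vecABC} is quoted as a known result, with the proof deferred to the cited references of Roth and Neudecker, so there is no in-paper argument to compare against. Your block computation is the standard one: with $A$ of size $m\times p$, $B$ of size $p\times q$, $C$ of size $q\times r$, the $k$th column of $ABC$ is $\sum_{j} C_{jk}\,AB_{\cdot j}$, and the $k$th block row of $C^t\otimes A$ applied to the stacked columns of $B$ gives exactly the same sum because the $(k,j)$ block of $C^t\otimes A$ is $(C^t)_{kj}A = C_{jk}A$; the index bookkeeping you flag is handled correctly. Your fallback via the rank-one identity $\mathrm{vec}(xy^t)=y\otimes x$ and the mixed-product property is an equally valid and arguably cleaner route, since it reduces the whole identity to linearity over the standard basis $e_ie_j^t$ of the middle factor.
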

\begin{theorem}[\citep{neudecker}]\label{may3110_hadamardproduct}
For any two matrices $A$ and $B$ of equal size
\[
 \,\mathrm{vec}\, (A * B) = \widehat{A} \,\mathrm{vec}\,(B) = \widehat{B} \,\mathrm{vec}\,(A).
\]
\end{theorem}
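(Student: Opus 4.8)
The plan is to establish both equalities by reducing everything to entrywise statements, using the fact that $\,\mathrm{vec}\,$ is nothing more than a bijective reindexing of the entries of a matrix and that the Hadamard product acts entrywise in both the matrix and the vectorized pictures. First I would fix the common size of $A$ and $B$, say $m \times n$, and isolate the two elementary observations that carry the whole argument. The first is that for any two column vectors $u$ and $v$ of equal length, left multiplication by the diagonal matrix built from $u$ scales $v$ entrywise, that is $\diag(u)\,v = u * v$; this is immediate, since the $k$-th component of each side equals $u_k v_k$. The second is that for matrices $A$ and $B$ of equal size one has $\,\mathrm{vec}\,(A) * \,\mathrm{vec}\,(B) = \,\mathrm{vec}\,(A * B)$, which holds because column stacking sends the $(i,j)$ entry to a fixed position that does not depend on the matrix, so entrywise multiplication commutes with this reindexing.

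With these two facts available, the theorem collapses to a short chain of identities. Recalling that $\widehat{A} = \diag(\,\mathrm{vec}\,(A))$ by the definition in Eq.~\eqref{june2110_vecAndHatDefinition}, I would write
\[
\widehat{A}\,\,\mathrm{vec}\,(B) = \diag(\,\mathrm{vec}\,(A))\,\,\mathrm{vec}\,(B) = \,\mathrm{vec}\,(A) * \,\mathrm{vec}\,(B) = \,\mathrm{vec}\,(A * B),
\]
which is the first asserted equality. The remaining equality $\widehat{A}\,\,\mathrm{vec}\,(B) = \widehat{B}\,\,\mathrm{vec}\,(A)$ then follows at once from the commutativity of the entrywise product, since $\,\mathrm{vec}\,(A) * \,\mathrm{vec}\,(B) = \,\mathrm{vec}\,(B) * \,\mathrm{vec}\,(A) = \widehat{B}\,\,\mathrm{vec}\,(A)$.

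Should a fully explicit verification be preferred, the identical argument can be run in coordinates: under the column-stacking convention the $(i,j)$ entry of an $m \times n$ matrix sits in position $k = (j-1)m + i$ of its vectorization, so the $k$-th component of $\widehat{A}\,\,\mathrm{vec}\,(B)$ is $(\,\mathrm{vec}\,(A))_k\,(\,\mathrm{vec}\,(B))_k = a_{ij} b_{ij}$, which is precisely the $k$-th component of $\,\mathrm{vec}\,(A * B)$, and interchanging the roles of $A$ and $B$ yields the second equality. I do not expect any genuine obstacle in this proof; the identity is elementary, and the only point demanding care is the bookkeeping of the index correspondence $k \leftrightarrow (i,j)$ under column stacking. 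Organizing the proof around the two auxiliary observations above makes this reindexing invisible, which is the main reason I would favor that route over the coordinate computation.
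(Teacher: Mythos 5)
Your proposal is correct. Note, however, that the paper itself offers no proof of this statement: Theorem~\ref{may3110_hadamardproduct} is quoted as a known fact, with the proof deferred to the reference \citep{neudecker}, so there is no internal argument to compare yours against. Your two-step reduction --- first that $\diag(u)\,v = u*v$ for equal-length vectors, and second that $\,\mathrm{vec}\,(A)*\,\mathrm{vec}\,(B) = \,\mathrm{vec}\,(A*B)$ because column stacking is a fixed bijective reindexing of entries --- is exactly the right elementary content, and combining it with $\widehat{A} = \diag(\,\mathrm{vec}\,(A))$ from Eq.~\eqref{june2110_vecAndHatDefinition} and the commutativity of the entrywise product yields both asserted equalities. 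The coordinate check via $k = (j-1)m+i$ is a sound fallback but, as you say, unnecessary once the two auxiliary observations are isolated. This is a complete, self-contained proof of a statement the paper leaves to the literature.
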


\begin{theorem}[\textbf{Product rule}\citep{neudecker}]\label{may3110_simpleProtuctRule}
 Let $G : \mathbb{C}^{p \times q} \rightarrow \mathbb{C}^{m \times r}$ and $H : \mathbb{C}^{p \times q} \rightarrow \mathbb{C}^{r \times n}$ be two differentiable function then
\[
 \frac{\partial \,\mathrm{vec}\,(GH)}{\partial \,\mathrm{vec}\,(X)}
=
(H^t \otimes I_m)\frac{\partial \,\mathrm{vec}\,(G)}{\partial \,\mathrm{vec}\,(X)}
+
(I_n \otimes G)\frac{\partial \,\mathrm{vec}\,(H)}{\partial \,\mathrm{vec}\,(X)}.
\]
\end{theorem}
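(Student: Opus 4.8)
The plan is to reduce the statement to the ordinary Leibniz rule for differentials combined with the vectorization identity of Theorem~\ref{may3110_vecABC}. First I would unwind the definition in Eq.~(\ref{june2110_derivativeDefinition}): the matrix derivative $\partial\,\mathrm{vec}\,(G)/\partial\,\mathrm{vec}\,(X)$ is by construction the Jacobian of the vector-valued map $\,\mathrm{vec}\,(X)\mapsto\,\mathrm{vec}\,(G(X))$, i.e.\ the unique matrix $A$ for which the first-order relation $\,\mathrm{vec}\,(dG)=A\,\mathrm{vec}\,(dX)$ holds. Consequently it suffices to compute the differential of $\,\mathrm{vec}\,(GH)$ and read off the coefficient of $\,\mathrm{vec}\,(dX)$.

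Since matrix multiplication is bilinear, the product $GH$ satisfies the ordinary product rule at the level of differentials, $d(GH)=(dG)H+G(dH)$, with the factors keeping their left/right positions so that no reordering is needed. Applying $\,\mathrm{vec}\,$ and invoking Theorem~\ref{may3110_vecABC} twice does the work: writing $(dG)H=I_m(dG)H$ gives $\,\mathrm{vec}\,((dG)H)=(H^t\otimes I_m)\,\mathrm{vec}\,(dG)$, while writing $G(dH)=G(dH)I_n$ gives $\,\mathrm{vec}\,(G(dH))=(I_n\otimes G)\,\mathrm{vec}\,(dH)$. Adding these yields
\[
\,\mathrm{vec}\,(d(GH))=(H^t\otimes I_m)\,\mathrm{vec}\,(dG)+(I_n\otimes G)\,\mathrm{vec}\,(dH).
\]

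Finally I would substitute $\,\mathrm{vec}\,(dG)=\frac{\partial\,\mathrm{vec}\,(G)}{\partial\,\mathrm{vec}\,(X)}\,\mathrm{vec}\,(dX)$ and $\,\mathrm{vec}\,(dH)=\frac{\partial\,\mathrm{vec}\,(H)}{\partial\,\mathrm{vec}\,(X)}\,\mathrm{vec}\,(dX)$ into this identity, factor $\,\mathrm{vec}\,(dX)$ out on the right, and appeal to uniqueness of the Jacobian to identify the bracketed coefficient with $\partial\,\mathrm{vec}\,(GH)/\partial\,\mathrm{vec}\,(X)$, giving the claimed formula.

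The only genuinely delicate point is this last passage from differentials to Jacobians: one must know that a first-order relation $\,\mathrm{vec}\,(dG)=A\,\mathrm{vec}\,(dX)$ determines $A$ uniquely, so that matching coefficients is legitimate. This is Neudecker's identification theorem, which I would cite rather than reprove~\citep{neudecker}. A fully self-contained alternative avoids differentials entirely: differentiate the scalar entries $(GH)_{ij}=\sum_k G_{ik}H_{kj}$ with respect to each $X_{ab}$ by the scalar product rule, then check that reassembling the partials into the $\,\mathrm{vec}\,$/Kronecker layout of Eq.~(\ref{june2110_vecAndHatDefinition}) reproduces the two terms above. This route is elementary but bookkeeping-heavy—the main effort being to verify that the index patterns of $H^t\otimes I_m$ and $I_n\otimes G$ agree with the column-stacking convention—and I expect that the differential argument is preferable precisely because it hides this index bookkeeping inside Theorem~\ref{may3110_vecABC}.
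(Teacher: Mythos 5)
Your argument is correct and is essentially the standard proof from the cited reference: the paper itself offers no proof of this theorem, deferring entirely to Neudecker, and the differential-based derivation you give ($d(GH)=(dG)H+G(dH)$, followed by two applications of Theorem~\ref{may3110_vecABC} with the padding $I_m(dG)H$ and $G(dH)I_n$, then the identification of the Jacobian from the first-order relation) is precisely how the result is established there. Your flagging of the identification step as the only delicate point is apt, and citing it rather than reproving it is the right call.
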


\begin{theorem}[{\bf Hadamard product rule} \citep{neudecker}]\label{may3110_HadamardProductRule} 
 Let $G : \mathbb{C}^{p \times q} \rightarrow \mathbb{C}^{m \times n}$ and $H : \mathbb{C}^{p \times q} \rightarrow \mathbb{C}^{m \times n}$ be two differentiable functions then
\[
 \frac{\partial \,\mathrm{vec}\,(G*H)}{\partial \,\mathrm{vec}\,(X)}
=
\widehat{H}\, \frac{\partial \,\mathrm{vec}\,(G)}{\partial \,\mathrm{vec}\,(X)}
+
\widehat{G}\,\frac{\partial \,\mathrm{vec}\,(H)}{\partial \,\mathrm{vec}\,(X)}.
\]
\end{theorem}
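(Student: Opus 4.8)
The plan is to reduce this identity to the ordinary scalar product rule by exploiting the fact that the Hadamard product of two matrices becomes, under $\,\mathrm{vec}\,$, the entrywise product of two column vectors. Concretely, set $w := \,\mathrm{vec}\,(X) \in \mathbb{C}^{pq}$ and define the vector-valued maps $g(w) := \,\mathrm{vec}\,(G(X))$ and $h(w) := \,\mathrm{vec}\,(H(X))$, each from $\mathbb{C}^{pq}$ into $\mathbb{C}^{mn}$. Since the Hadamard product acts entrywise, $(G*H)_{ij} = G_{ij}H_{ij}$, and $\,\mathrm{vec}\,$ merely reindexes the matrix entries by a single linear index $k$, the $k$th component of $\,\mathrm{vec}\,(G*H)$ equals the product of the $k$th components of $\,\mathrm{vec}\,(G)$ and $\,\mathrm{vec}\,(H)$. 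In other words, $\,\mathrm{vec}\,(G*H)$ is the entrywise product of $g$ and $h$. This is exactly the content of Theorem~\ref{may3110_hadamardproduct}, which encodes entrywise multiplication as left-multiplication by a diagonal matrix, $\,\mathrm{vec}\,(G*H) = \widehat{G}\,\mathrm{vec}\,(H) = \widehat{H}\,\mathrm{vec}\,(G)$.

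First I would differentiate the entrywise product componentwise. By the definition of the matrix derivative in Eq.~\eqref{june2110_derivativeDefinition}, the left-hand side is the Jacobian $\partial (g*h)/\partial w$, whose $(k,l)$ entry is $\partial (g_k h_k)/\partial w_l$. Differentiability of $G$ and $H$ guarantees these partials exist, and the scalar Leibniz rule gives
\[
 \frac{\partial (g_k h_k)}{\partial w_l} = h_k\,\frac{\partial g_k}{\partial w_l} + g_k\,\frac{\partial h_k}{\partial w_l}.
\]
The first summand is the $(k,l)$ entry of $\diag(h)\,(\partial g/\partial w)$ and the second is the $(k,l)$ entry of $\diag(g)\,(\partial h/\partial w)$, because left-multiplication by a diagonal matrix scales each row by the corresponding diagonal entry.

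Next I would translate back into the hatted notation. By the definition $\widehat{M} = \diag(\,\mathrm{vec}\,(M))$ from Eq.~\eqref{june2110_vecAndHatDefinition}, we have $\diag(h) = \widehat{H}$ and $\diag(g) = \widehat{G}$, while $\partial g/\partial w = \partial\,\mathrm{vec}\,(G)/\partial\,\mathrm{vec}\,(X)$ and $\partial h/\partial w = \partial\,\mathrm{vec}\,(H)/\partial\,\mathrm{vec}\,(X)$ again by definition. Assembling the componentwise identities into matrices then yields
\[
 \frac{\partial \,\mathrm{vec}\,(G*H)}{\partial \,\mathrm{vec}\,(X)} = \widehat{H}\,\frac{\partial \,\mathrm{vec}\,(G)}{\partial \,\mathrm{vec}\,(X)} + \widehat{G}\,\frac{\partial \,\mathrm{vec}\,(H)}{\partial \,\mathrm{vec}\,(X)},
\]
which is the claimed formula.

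I do not expect a genuine obstacle here: the entire content is the scalar Leibniz rule together with the bookkeeping that converts ``scale the $k$th component by $h_k$'' into ``left-multiply by the diagonal matrix $\widehat{H}$.'' The one point deserving a line of care is dimensional consistency — $\widehat{H}$ and $\widehat{G}$ are $mn \times mn$ diagonal matrices while the two Jacobians are $mn \times pq$, so both products are well defined and of size $mn \times pq$, matching the left-hand side. A coordinate-free alternative would observe that entrywise multiplication by a fixed vector is the linear map given by the associated diagonal matrix and then invoke the product rule for bilinear maps; this hides the indices but rests on the identical underlying fact.
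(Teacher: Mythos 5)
Your proof is correct. Note that the paper does not actually prove this theorem: it is stated in the appendix on matrix differentiation with the proof deferred to the cited reference \citep{neudecker}. Your argument — observing that $\,\mathrm{vec}\,(G*H)$ is the entrywise product of $\,\mathrm{vec}\,(G)$ and $\,\mathrm{vec}\,(H)$, applying the scalar Leibniz rule to each component $g_k h_k$, and repackaging the row-scalings as left-multiplication by the diagonal matrices $\widehat{H}$ and $\widehat{G}$ — is the standard elementary derivation, and the dimensional bookkeeping you supply is exactly the point that needs checking.
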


\section{Proof of Lemma~  \ref{21june10_linerSolve}}\label{june0110_solveLinearEquation}
Note that the unknowns in Eq.~(\ref{10mar10_linear}) are matrices and the structure of the equation is somewhat similar to a Lyapunov equation, $AX + XB = C$. A standard approach to solving Lyapunov equations  is to vectorize the matrices (see~\citep{matrixAnalysisII}), resulting in an equation of the type $\left[(I_m \otimes A) + (B^t\otimes I_n ) \right] \,\mathrm{vec}\, (X) = \,\mathrm{vec}\, (C)$. Proving solvability then essentially reduces to proving the non-singularity of the coefficient matrix $\left[(I_m \otimes A) + (B^t\otimes I_n ) \right]$.  We will use this approach to show the solvability of Eq.~(\ref{10mar10_linear}).

In the proof of this Lemma we first assume that all possible reactions occur at nonzero rates so that all entries in the  matrices $K_1,K_2,K_{-1},L_1,L_2$ and $L_{-1}$ are strictly positive.  The result is
then generalized to the case when some reaction rates are zero, so that no all reactions  occur.

Note that Eq.~(\ref{10mar10_linearb}) is uncoupled from Eq.~(\ref{10mar10_lineara}). Using Theorems~\ref{may3110_vecABC} and \ref{may3110_hadamardproduct} from section~\ref{20june10_diffWrtMatrix}, we vectorize Eq.~(\ref{10mar10_linearb})  to obtain
\begin{eqnarray}
 \,\mathrm{vec}\, \big[ L_1* \left(C_E \left(V_n  P^t\right)\right)&+&  (L_{-1}+L_2)* C_E \big]  \nonumber\\
&=& \,\mathrm{vec}\, \left[ L_1* \left(C_E \left(V_n  P^t\right)\right)\right] + \,\mathrm{vec}\, \left[ (L_{-1}+L_2)* C_E \right] \nonumber\\
&=&  \widehat{L}_1 \,\mathrm{vec}\, \left[C_E \left(V_n  P^t\right)\right] + (\widehat{L}_{-1}+\widehat{L}_2)  \,\mathrm{vec}\,(C_E)  \nonumber\\
&=&  \widehat{L}_1 \left(PV^t_n \otimes I_n\right) \,\mathrm{vec}\,(C_E)+ (\widehat{L}_{-1}+\widehat{L}_2)  \,\mathrm{vec}\,(C_E)  \nonumber\\
&=&  \left[ \widehat{L}_1 \left(PV^t_n \otimes I_n\right)+ (\widehat{L}_{-1}+\widehat{L}_2) \right] \,\mathrm{vec}\,(C_E) \label{june0110_vecCE}
\end{eqnarray}
The following lemma shows that the matrix multiplying $\,\mathrm{vec}\,(C_E)$ in this equation is invertible. 
\begin{lemma}\label{invertCoeff}
If $A,B \in \mathbb{R}_+^{n^2 \times n^2}$ are diagonal matrices with positive entries on the diagonal, $Y \in \mathbb{R}_+^{n\times 1}$  is a column vector with positive entries , $V_n = [1\,1\,\cdots\,1]^t$ is a column vector of size $n$, and $I_n$ is the $n\times n$ identity matrix, then the $n^2 \times n^2$ matrix 
\[
D =  A \left(YV_n^t \otimes I_n\right)+ B
\]
is invertible.
\end{lemma}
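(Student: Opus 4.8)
The plan is to establish invertibility by showing that $D$ has trivial kernel: if $Dx = 0$ for some $x \in \mathbb{C}^{n^2}$, then $x = 0$. This is a weaker demand than the positivity of real parts needed in Lemma~\ref{15mar10_stableMatrix}, and the rank-one structure of $YV_n^t$ makes it tractable by a direct computation rather than a spectral argument.

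First I would exploit the low rank of the perturbation. Writing $x = \,\mathrm{vec}\,(X)$ for a matrix $X \in \mathbb{C}^{n \times n}$, Theorem~\ref{may3110_vecABC} (applied with $A = I_n$ and $C = V_nY^t$) gives
\[
\left(YV_n^t \otimes I_n\right)\,\mathrm{vec}\,(X) = \,\mathrm{vec}\,\left(XV_nY^t\right),
\]
so that this term depends on $X$ only through the row sums $S_i := \sum_{k=1}^n X_{ik} = (XV_n)_i$; indeed the $(i,j)$ entry of $XV_nY^t$ equals $y_j S_i$. Denoting by $a_{ij}, b_{ij} > 0$ the diagonal entries of $A,B$ at the coordinate corresponding to matrix position $(i,j)$ in the $\,\mathrm{vec}\,$ ordering, the equation $Dx = 0$ becomes, entry by entry,
\[
a_{ij}\, y_j\, S_i + b_{ij}\, X_{ij} = 0, \qquad 1 \le i,j \le n.
\]

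Next I would solve for $X_{ij}$ and collapse the system to one scalar condition per row. Since $b_{ij} > 0$, the displayed relation yields $X_{ij} = -(a_{ij}y_j/b_{ij})\,S_i$; summing over $j$ and using $S_i = \sum_j X_{ij}$ gives
\[
S_i\left(1 + \sum_{j=1}^n \frac{a_{ij}\,y_j}{b_{ij}}\right) = 0.
\]
Because $a_{ij}, b_{ij}, y_j$ are all strictly positive, the parenthesized factor is a real number exceeding $1$, hence nonzero, so $S_i = 0$ for every $i$. Substituting back into $X_{ij} = -(a_{ij}y_j/b_{ij})S_i$ forces $X_{ij} = 0$ for all $i,j$, i.e. $x = 0$, and therefore $D$ is invertible.

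I do not expect a serious obstacle beyond careful bookkeeping of the $\,\mathrm{vec}\,$ convention, which dictates how the diagonal entries of $A$ and $B$ are paired with the entries of $X$ and which index is summed in $XV_n$. The whole argument is short precisely because the rank-one factor $YV_n^t$ couples the $n^2$ unknowns only through the $n$ row sums $S_i$, each of which then satisfies a single scalar equation with a strictly positive coefficient. As an alternative I could invoke the matrix determinant lemma $\det(B + PQ) = \det(B)\det(I_n + QB^{-1}P)$ with $P = A(Y \otimes I_n)$ and $Q = V_n^t \otimes I_n$, reducing invertibility to the nonvanishing of an explicit $n \times n$ determinant, but the kernel computation above is more transparent and directly reveals why strict positivity of the entries is the feature that is used.
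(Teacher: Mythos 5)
Your proof is correct. The paper proves the same lemma by a spectral argument: it first normalizes to $B = I_{n^2}$ (multiplying by $B^{-1}$), then shows that $A\left(YV_n^t \otimes I_n\right)$ cannot have $-1$ as an eigenvalue by decomposing the eigenvector equation into $n$ decoupled $n\times n$ problems, each governed by a rank-one matrix whose only nonzero eigenvalue is the positive number $\sum_{i} y_i A_{i_k}$; hence every eigenvalue of $A\left(YV_n^t \otimes I_n\right)$ is $0$ or positive and $D$ has no zero eigenvalue. You instead keep $B$ general and run a direct kernel computation on $D$ itself, using the identity $\left(YV_n^t\otimes I_n\right)\,\mathrm{vec}\,(X)=\,\mathrm{vec}\,(XV_nY^t)$ to see that the coupling enters only through the row sums $S_i$, which then each satisfy $S_i\left(1+\sum_j a_{ij}y_j/b_{ij}\right)=0$ with a coefficient strictly greater than $1$. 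Both arguments rest on the same structural fact --- the rank-one factor couples the $n^2$ unknowns only through $n$ aggregate quantities --- but yours is more elementary: it avoids the reduction to $B=I$ and the discussion of the full spectrum, and it makes explicit exactly where positivity of the entries is used (the scalar factor $1+\sum_j a_{ij}y_j/b_{ij}>1$). The paper's version yields slightly more information (the complete eigenvalue list of the perturbation, namely $0$ and the positive traces), which is of no additional use for this lemma. Your bookkeeping of the $\,\mathrm{vec}\,$ convention is consistent, and the alternative via the matrix determinant lemma that you sketch would also work.
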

\begin{proof}
Invertibility of $D$ is equivalent to invertibility of $B^{-1}D$.  Therefore it is sufficient to prove the result with $
 B = I_{n^2 \times n^2} =: I$, so that  
$
D =  A \left(YV_n^t \otimes I_n\right)+ I.
$
If  $A \left(YV^t_n \otimes I_n\right)$ does not have $-1$ as an eigenvalue, then  $D$ cannot have $0$ as an eigenvalue. Demonstrating this will complete the proof. 
Let 
\begin{equation*}
A = \left[\begin{array}{cccc}
     A_1 &   &  &   \\
         & A_2   &  &   \\
         &    & \ddots  &   \\ 
   &    &    & A_n   \\   
    \end{array}\right], \quad
Y = \left[\begin{array}{c}
     y_1 \\
  y_2 \\
\vdots \\
y_n
    \end{array}\right],
\end{equation*}
where $A_i \in \mathbb{R}_+^{n\times n}$, $i \in \{1,2,...,n\}$ are diagonal matrices, and $y_i \in \mathbb{R}_+$. 
Now 
\begin{eqnarray*}
YV^t_n \otimes I_n &=& \left[\begin{array}{c}
     y_1 \\
  y_2 \\
\vdots \\
y_n
    \end{array}
\begin{array}{c}
     y_1 \\
  y_2 \\
\vdots \\
y_n
    \end{array}...
\begin{array}{c}
     y_1 \\
  y_2 \\
\vdots \\
y_n
    \end{array}
\right]\otimes I_n 
= \left[\begin{array}{c}
     y_1I_n \\
  y_2I_n \\
\vdots \\
y_nI_n
    \end{array}
\begin{array}{c}
     y_1I_n \\
  y_2I_n \\
\vdots \\
y_nI_n
    \end{array}
...
\begin{array}{c}
     y_1I_n \\
  y_2I_n \\
\vdots \\
y_nI_n
    \end{array}
\right].
\end{eqnarray*}
This implies that 
\begin{eqnarray}\label{11mar10_break}
 A \left(YV^t_n \otimes I_n\right) = 
\left[\begin{array}{c}
     y_1A_1 \\
  y_2A_2 \\
\vdots \\
y_nA_n
    \end{array}\begin{array}{c}
     y_1A_1 \\
  y_2A_2 \\
\vdots \\
y_nA_n
    \end{array}
...
\begin{array}{c}
     y_1A_1 \\
  y_2A_2 \\
\vdots \\
y_nA_n
    \end{array}\right].
\end{eqnarray}
Suppose $\lambda$ is an eigenvalue of $A \left(YV_n^t \otimes I_n\right)$, and
\[
 \bar{X} = \left[\begin{array}{c}
     X_1 \\
  X_2 \\
\vdots \\
X_n
    \end{array}\right],
\]

$X_i \in \mathbb{C}^{n\times 1}, \, i \in \{1,2,...,n \} $  the corresponding eigenvector. Using Eq.~(\ref{11mar10_break}) we have
\begin{eqnarray*}
 \left[\begin{array}{c}
     y_1A_1 \\
  y_2A_2 \\
\vdots \\
y_nA_n
    \end{array}\begin{array}{c}
     y_1A_1 \\
  y_2A_2 \\
\vdots \\
y_nA_n
    \end{array}
...
\begin{array}{c}
     y_1A_1 \\
  y_2A_2 \\
\vdots \\
y_nA_n
    \end{array}\right]
\left[\begin{array}{c}
     X_1 \\
  X_2 \\
\vdots \\
X_n
    \end{array}\right]
 =
 \lambda 
\left[\begin{array}{c}
     X_1 \\
  X_2 \\
\vdots \\
X_n
    \end{array}\right].
\end{eqnarray*}
This implies that for all $k \in \{1,2,...,n\}$,
\begin{eqnarray}\label{11mar10_bingo}
 \left[\begin{array}{c}
     y_1A_{1_k} \\
  y_2A_{2_k} \\
\vdots \\
y_nA_{n_k}
    \end{array}\begin{array}{c}
     y_1A_{1_k} \\
  y_2A_{2_k} \\
\vdots \\
y_nA_{n_k}
    \end{array}...\begin{array}{c}
     y_1A_{1_k} \\
  y_2A_{2_k} \\
\vdots \\
y_nA_{n_k}
    \end{array}\right]
\left[\begin{array}{c}
     X_{1_k} \\
  X_{2_k} \\
\vdots \\
X_{n_k}
    \end{array}\right] = \lambda \left[\begin{array}{c}
     X_{1_k} \\
  X_{2_k} \\
\vdots \\
X_{n_k}
    \end{array}\right],
\end{eqnarray}
where $A_{i_k}$ is $(k,k)$-th entry in the matrix $A_i$, and $X_{i_k}$ is the $k$th entry in the vector $X_i$.

Therefore, if $\lambda$ is an eigenvalue of $A \left(YV^t_n \otimes I_n\right)$ then it must be an eigenvalue of one of its $n\times n$ principal submatrices which have the form of the coefficient matrix in Eq.~(\ref{11mar10_bingo}) and whose eigenvalues we know are either zero or $\sum_{i=1}^ny_iA_{i_k}$ (see reason in the footnote\footnote{
We have
\begin{eqnarray*}
 \left[\begin{array}{c}
     y_1A_{1_k} \\
  y_2A_{2_k} \\
\vdots \\
y_nA_{n_k}
    \end{array}\begin{array}{c}
     y_1A_{1_k} \\
  y_2A_{2_k} \\
\vdots \\
y_nA_{n_k}
    \end{array}...\begin{array}{c}
     y_1A_{1_k} \\
  y_2A_{2_k} \\
\vdots \\
y_nA_{n_k}
    \end{array}\right]^t
\left[\begin{array}{c}
     1 \\
  1 \\
\vdots \\
1
    \end{array}\right] = \sum_{i=1}^ny_iA_{i_k} \left[\begin{array}{c}
     1 \\
    1 \\
   \vdots \\
   1
    \end{array}\right].
\end{eqnarray*}
Since the coefficient matrix in the above equation is rank one,  $\sum_{i=1}^ny_iA_{i_k}$ is the only non-zero eigenvalue.
}).
 Hence $\lambda$ can not be $-1$, and hence $D$ cannot have a zero eigenvalue.  \end{proof}
This settles the problem of solvablity of $C_E$ in Eq.~(\ref{10mar10_linearb}). We can use this solution to eliminate  $C_E$ from Eq.~(\ref{10mar10_lineara}). Rewriting Eq.~(\ref{10mar10_lineara}) with all the known terms on the right hand side we obtain
\begin{eqnarray}\label{june0110_CU}
K_1* \big[P \big(  V_n^tC_U^t + V_n^tC_U &-& V_n^t (I_n * C_U)\big)  \big] 
+
(K_{-1}+K_2)* C_U \nonumber \\
&=&\quad K_1* \left[P \left( U_T^t- P^t\right)\right] 
-K_1* \left[ PV_n^t C_E  \right]. 
\end{eqnarray}
We can write 
\begin{eqnarray}\label{july2810_vecBig}
 \,\mathrm{vec}\,\left[P \left(  V_n^tC_U^t + V_n^tC_U- V_n^t (I_n * C_U)\right)  \right]
&=&
(I_n \otimes P) \, 
 \,\mathrm{vec}\,\left[  V_n^tC_U^t + V_n^tC_U- V_n^t (I_n * C_U)  \right]. \nonumber \\ 
\end{eqnarray}
Since $(C_UV_n)^t$ is a row vector, we have $\,\mathrm{vec}\,[(C_UV_n)^t] =\,\mathrm{vec}\,(C_UV_n) $.  Therefore, using Theorems~\ref{may3110_vecABC} and \ref{may3110_hadamardproduct}
\begin{eqnarray*}
 \,\mathrm{vec}\,(V_n^tC_U^t) 
&=&
 \,\mathrm{vec}\,(C_UV_n) = (V_n^t \otimes I_n) \,\mathrm{vec}\,(C_U), 
\\
 \,\mathrm{vec}\,(V_n^tC_U)
&=&
( I_n \otimes V_n^t )  \,\mathrm{vec}\,(C_U),
\\
\,\mathrm{vec}\,\left(V_n^t (I_n * C_U)\right)
&=&( I_n \otimes V_n^t )\,\mathrm{vec}\, (I_n * C_U) = ( I_n \otimes V_n^t ) \widehat{I}_n \,\mathrm{vec}\,(C_U).
\end{eqnarray*}
Plugging these in Eq.~(\ref{july2810_vecBig}) we get
\begin{eqnarray*}
 \,\mathrm{vec}\,\big[P \big(  V_n^tC_U^t + V_n^tC_U 
&-&
 V_n^t (I_n * C_U)\big)  \big] \\
&=&
(I_n \otimes P) 
\left[
(V_n^t \otimes I_n)
+
( I_n \otimes V_n^t )
-
( I_n \otimes V_n^t ) \widehat{I}_n
 \right] \,\mathrm{vec}\,(C_U)
\\
&=&
\left[
(I_n \otimes P) (V_n^t \otimes I_n)
+
( I_n \otimes PV_n^t )
-
( I_n \otimes PV_n^t ) \widehat{I}_n
 \right]\,\mathrm{vec}\,(C_U).
\end{eqnarray*}
The vectorized form of the left hand side of Eq.~(\ref{june0110_CU}) is 
\begin{equation*}
\left[
 \widehat{K}_1
\left\{
(I_n \otimes P) (V_n^t \otimes I_n)
+
( I_n \otimes PV_n^t )
-
( I_n \otimes PV_n^t ) \widehat{I}_n
 \right \} + 
(\widehat{K}_{-1} +\widehat{K}_{-1} )
\right]
\,\mathrm{vec}\,(C_U).
\end{equation*}
The following Lemma shows that the  matrix mutliplying $\,\mathrm{vec}\,( C_U)$ in this expression is invertible.

\begin{lemma}\label{17may10_invertCoefficient}
If $A,B \in \mathbb{R}_+^{n^2 \times n^2}$ are diagonal matrices with positive entries on the diagonal, $Y \in \mathbb{R}_+^{n\times 1}$  is a column vector with positive entries, $V_n = [1\,1\,\cdots\,1]^t$ is a column vector of size $n$, then the $n^2 \times n^2$ matrix 
\[
D =  A \left(
\left(I_n \otimes Y \right)
\left(V_n^t \otimes I_n \right)
+
\left(I_n \otimes YV_n^t \right)
-
\left(I_n \otimes YV_n^t \right)\widehat{I}_n
\right)+ B
\]
is invertible.
\end{lemma}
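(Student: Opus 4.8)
The plan is to establish invertibility of the square matrix $D$ by proving injectivity: I will show that if $\,\mathrm{vec}\,(C)$ lies in the kernel of $D$ for some $C \in \mathbb{R}^{n \times n}$, then $C = 0$. The key simplification is that the bracketed factor multiplying $A$ has a transparent entrywise action. Indeed, Eq.~(\ref{july2810_vecBig}) together with the computation following it shows that for any $C$,
\[
\left[ (I_n \otimes Y)(V_n^t \otimes I_n) + (I_n \otimes YV_n^t) - (I_n \otimes YV_n^t)\widehat{I}_n \right]\,\mathrm{vec}\,(C) = \,\mathrm{vec}\,\left[ Y\left(V_n^tC^t + V_n^tC - V_n^t(I_n * C)\right)\right].
\]
Reading off the $(i,j)$ entry of the right-hand side, it equals $Y_i\,T_j$, where $T_j := \sum_{k}C_{jk} + \sum_{k\neq j}C_{kj}$ is the total of the entries of $C$ lying in row $j$ or column $j$, with the diagonal term $C_{jj}$ counted once. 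Writing $a_{ij}>0$ and $b_{ij}>0$ for the diagonal entries of $A$ and $B$ at vec-position $(i,j)$, the equation $D\,\mathrm{vec}\,(C)=0$ becomes the entrywise system $a_{ij}Y_i T_j + b_{ij}C_{ij} = 0$.

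First I would solve each of these scalar equations for $C_{ij}$, obtaining $C_{ij} = -m_{ij}T_j$ with $m_{ij} := a_{ij}Y_i/b_{ij} > 0$. The crucial gain is that this collapses the $n^2$ unknowns onto the $n$ cross-sums $T_j$. Substituting $C_{ij}=-m_{ij}T_j$ back into the definition of $T_j$ and collecting terms yields a closed homogeneous linear system $GT=0$ for $T=(T_1,\dots,T_n)^t$, where $G$ has positive off-diagonal entries $G_{jk}=m_{jk}$ (for $k\neq j$) and diagonal entries $G_{jj} = 1 + \sigma_j$, with $\sigma_j := \sum_{k}m_{kj}$ the $j$th column sum of the array $(m_{ij})$.

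Then I would prove that $G$ is nonsingular by verifying strict diagonal dominance along its \emph{columns}. In column $j$ the diagonal entry is $1+\sigma_j$, whereas the off-diagonal entries are $m_{ij}>0$ for $i\neq j$, whose sum is $\sigma_j - m_{jj}$; hence
\[
|G_{jj}| - \sum_{i\neq j}|G_{ij}| = (1+\sigma_j)-(\sigma_j-m_{jj}) = 1+m_{jj} > 0.
\]
By the Levy--Desplanques theorem (equivalently, Gershgorin applied to $G^t$), $G$ is invertible, so $T=0$, and therefore $C_{ij}=-m_{ij}T_j=0$ for all $i,j$. This establishes injectivity, hence invertibility, of $D$. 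Finally, as in Lemma~\ref{21june10_linerSolve}, the case of connectivity matrices with some zero entries follows by restricting to the principal submatrix indexed by the genuinely present variables, since that submatrix inherits the same structure.

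The step I expect to be the main obstacle is the first one: correctly reading off the entrywise action of the bracketed matrix, and in particular tracking the single-versus-double counting of the diagonal induced by the $\widehat{I}_n$ term, so that the cross-sum $T_j$ is identified exactly. Once the $n^2$-dimensional kernel condition is reduced to the $n\times n$ system $GT=0$, the decisive structural observation is that the summand $B$ in $D$ contributes the additive $1$'s on the diagonal of $G$; it is precisely this $1$, combined with positivity of $m_{jj}$, that guarantees strict column dominance. This mirrors the mechanism behind Lemma~\ref{invertCoeff}, where invertibility followed from the relevant eigenvalues being bounded away from $-1$; here the analogous safeguard is the inequality $1+m_{jj}>0$.
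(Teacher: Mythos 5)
Your proof is correct and follows essentially the same route as the paper's: both reduce the kernel condition $D\,\mathrm{vec}\,(C)=0$ to an entrywise system, observe that the non-$B$ part depends only on the $n$ cross-sums $T_j$, solve for $C_{ij}$ in terms of $T_j$, and substitute back to obtain an $n\times n$ homogeneous system whose coefficient matrix is strictly diagonally dominant along columns (with the decisive margin $1+m_{jj}>0$ arising exactly as in the paper). The only cosmetic difference is that the paper first normalizes $A=I_{n^2}$ and $Y=V_n$ before extracting the reduced system, whereas you absorb $A$ and $Y$ directly into the coefficients $m_{ij}$.
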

\begin{proof}
The invertibility of $D$ is equivalent to invertibility of $A^{-1}D$. We can therefore assume that $A = I_{n^2}$.
Now 
\begin{eqnarray*}
 \left(I_n \otimes Y \right)
\left(V_n^t \otimes I_n \right)
&=&
\left[
\everymath{}
\footnotesize
\begin{array}{cccc|cccc|c|cccc} 
\Y & \Z & \ddots & \Z & \Y & \Z & \ddots & \Z & \ldots & \Y & \Z & \ddots & \Z \vspace{1mm} \\
\hline \\
\Z & \Y & \ddots & \Z & \Z & \Y & \ddots & \Z & \ldots & \Z & \Y & \ddots & \Z \vspace{1mm}\\ 
\hline \\
\, & \, &     \, & \, & \, & \, &     \, & \, & \ddots &\, & \, &     \, & \, \\
\, & \, &     \, & \, & \, & \, &     \, & \, & \ddots &\, & \, &     \, & \, \vspace{1mm} \\ 
\hline \\
\Z & \Z & \ddots & \Y & \Z & \Z & \ddots & \Y & \ldots &\Z & \Z & \ddots & \Y
\end{array}
\right], 
\end{eqnarray*}
and
\begin{eqnarray*}
\left(I_n \otimes YV_n^t \right) 
= \left[
\begin{array}{cccc}
     \YVn^t &        &        & \\
            & \YVn^t &        & \\
            &        & \ddots & \\
            &        &        & \YVn^t 
    \end{array} \right].
\end{eqnarray*}
So,
\begin{eqnarray*}
\left(I_n \otimes YV_n^t \right)(I_{n^2} - \widehat{I}_n) = 
 \left[
\everymath{}
\scriptsize
\begin{array}{cccc}
     \YVan^t &        &        & \\
            & \YVbn^t &        & \\
            &        & \ddots & \\
            &        &        & \YVnn^t 
    \end{array} \right],
\end{eqnarray*}
and
\[
 \left(I_n \otimes Y \right)
\left(V_n^t \otimes I_n \right)
+
\left(I_n \otimes YV_n^t \right)
-
\left(I_n \otimes YV_n^t \right)\widehat{I}_n 
\]
\begin{equation*}
=
 \left[
\everymath{}
\footnotesize
\begin{array}{c|c|c|c}
     \YVn^t & \Za    &  \ldots & \Za \vspace{1mm} \\
\hline \\
     \Zb    & \YVn^t &  \ldots & \Zb  \vspace{1mm}   \\
\hline \\
            &        & \ddots  &     \\
           &        & \,  &    \vspace{1mm} \\
\hline \\
      \Zn      &  \Zn      &  \ldots       & \YVn^t 
    \end{array} \right].
\end{equation*}

Clearly, its sufficient to show the invertibility of $D$ with $y_1 = y_2 = ... = y_n = 1$.
We examine
\begin{eqnarray*}
D - B 
= \left[
\everymath{}
\footnotesize
\begin{array}{c|c|c|c}
     \OYVn^t & \OZa    &  \ldots & \OZa \vspace{1mm}\\
\hline \\
     \OZb    & \OYVn^t &  \ldots & \OZb   \vspace{1mm}  \\
\hline \\
            &        & \ddots  &     \\
           &        & \,  &   \vspace{1mm}  \\
\hline \\
      \OZn      &  \OZn      &  \ldots       & \OYVn^t 
    \end{array} \right].
\end{eqnarray*}
Now let
\[
 V = \left[\begin{array}{cccccccccccccccccccc}
\begin{array}{ccc} v_{11}  & \ldots & v_{1n} \end{array}&
\begin{array}{ccc} v_{21}  & \ldots & v_{2n} \end{array}&
\ldots &
\begin{array}{ccc} v_{n1}  & \ldots & v_{nn} \end{array}
 \end{array}\right]^t
\]
be an eigenvector of $D$ corresponding to a zero eigenvalue. We aim to show that $V = 0$.
Let 

\[
 B = \text{ diag}\, \left[
\begin{array}{cccccccccc}
 b_{11} & \cdots  & b_{1n}  &
 b_{21} & \cdots  & b_{2n} &
  \cdots                               &
 b_{n1} &  \cdots & b_{nn} 
\end{array}
\right].
\]
Then  for each $i,j \in \{1,2,...,n \}$,
\begin{equation}\label{sept1410_main}
 \underbrace{\sum_{s=1}^nv_{is} +  \sum_{\underset{r \neq i}{r=1}}^nv_{ri}}_{:= -\lambda_i} = -b_{ij}v_{ij}.
\end{equation}
Note that the left hand side of this equation, which we denote by $-\lambda_i$,  is independent of $j$. Hence, for all $i,j \in \{1,2,...,n \}$ we obtain $v_{ij} = \frac{\lambda_i}{b_{ij}}$. Using this observation in  Eq.~(\ref{sept1410_main}) we get 

\[
 \sum_{s=1}^n\frac{\lambda_i}{b_{is}} +  \sum_{\underset{r \neq i}{r=1}}^n\frac{\lambda_r}{b_{ri}} = -\lambda_i, \qquad \forall \,  i \in \{1,2,...,n \}.
\]
This equality can be written in matrix form as
\[
 \left[
\begin{array}{rrrrrrrrrrrrrrrr}
1 + \sum_{s=1}^n\frac{1}{b_{1s}} & \frac{1}{b_{21}} & \quad\ldots &  \frac{1}{b_{n1}} \vspace{1mm} \\
\frac{1}{b_{12}} & 1 + \sum_{s=1}^n\frac{1}{b_{2s}}  & \quad\ldots &  \frac{1}{b_{n2}} \\
 & & \ddots & 
\\
\frac{1}{b_{1n}} &\frac{1}{b_{2n}} &  \quad\ldots & 1 + \sum_{s=1}^n\frac{1}{b_{ns}}
\end{array}
\right]
\left[
\begin{array}{c}
 \lambda_1 \\  \lambda_2 \\ \vdots \\ \lambda_n
\end{array}
\right] 
= 0
\]
The coefficient matrix is diagonally dominant along the columns, and hence invertible. This implies that $\lambda_i = 0$, and so $v_{ij} = 0$. 
 \end{proof}

Lemmas~\ref{invertCoeff} and \ref{17may10_invertCoefficient} together complete the proof of Lemma~\ref{21june10_linerSolve} for the case when all the entries in the connectivity matrices are strictly positive. This proof can be extended to  general connectivity matrices, as stated in the Lemma~\ref{21june10_linerSolve} in the following way. 

Suppose that some of  the entries in the connectivity matrix are zero.
Let,
\begin{subequations}\label{E:sept1710_A}
\begin{alignat}{2}
I_K &= [I_K(i,j)]_{i,j = 1}^n \quad \text{such that}\quad I_K(i,j) &\,=&\begin{cases}
                                                                     1 & \text{if } k_{ij}^1, k_{ij}^{-1}, k_{ij}^2 \text{ are nonzero,} \\
                                                                     0 & \text{if } k_{ij}^1 = k_{ij}^{-1} =  k_{ij}^2 = 0,
                                                                   \end{cases}
\\
I_L &= [I_L(i,j)]_{i,j = 1}^n \quad \text{such that}\quad I_L(i,j) &\,=& \begin{cases}
                                                                     1 & \text{if } l_{ij}^1, l_{ij}^{-1}, l_{ij}^2 \text{ are nonzero,} \\
                                                                     0 & \text{if } l_{ij}^1 = l_{ij}^{-1} =  l_{ij}^2 = 0.
                                                                   \end{cases}
\end{alignat}
\end{subequations}
Hence $I_K$ and $I_L$ are the unweighted connectivity matrices of the reaction network.
The matrices of intermediate variables, corresponding to existing connections, now have the form
\begin{equation}\label{newCUCE}
 C_U^{I_K} = I_K *  C_U \qquad  C_E^{I_L} = I_L *  C_L.
\end{equation}

Replacing $C_U$ with $C_U^{I_K}$ and  $C_E$ with $C_E^{I_E}$ in Eq.~(\ref{10mar10_linear}), one can easily check that the solution of the non-zero entries of $C_U^{I_K}$ and   $C_E^{I_E}$ does not depend on the zero entries of $K_1,$ $K_2,$ $K_{-1},$ $L_1,$ $L_2,$ $L_{-1}$. This observation completes the proof of Lemma~\ref{21june10_linerSolve}.

\section{Stability and normal hyperbolicity of the  slow manifold ${\mathcal M}_0$}\label{september1510_stableManifold}

Following the approach in the previous section, we first prove the result under the assumption that    $K_1,K_2,K_{-1},L_1,L_2$ and $L_{-1}$ are strictly positive.  At the end of this section we show how to generalize the proof to the case when some of the
reactions do not occur.

First we start with a  preliminary lemma.
\begin{lemma}\label{5may10_stableMatrix}
Suppose $Z \in \mathbb{R}^{2n \times 1}_+$ is a $2n$ dimensional vector with positive entries,  $Y \in \mathbb{R}^{n \times 1}_+$ is an $n$ dimensional vector with positive entries, and $\widehat{\Psi} = [\widehat{\psi}_{ij}],\, \widehat{\Gamma} = [\widehat{\gamma}_{ij}] \in \mathbb{R}^{n \times 2n}_+$ real matrices with positive entries. Let $\lambda \in \mathbb{C}$ be a complex number with \emph{nonpositive real part}.  If $V = [v_{ij}] \in \mathbb{C}^{n \times 2n} $ is a complex matrix that satisfies the following system of linear homogeneous equations,
\begin{subequations}\label{16mar10_AB}
\begin{eqnarray}
\frac{1}{y_i}\sum_{s=1}^{2n}{v_{is}
+
\frac{1}{z_j}\sum_{\underset{r \neq j}{r=1}}^n{v_{rj}}}
&=&
\frac{\widehat{\psi}_{ij}}{y_iz_j}\left( \lambda-\widehat{\gamma}_{ij}\right) v_{ij}, \quad  \begin{array}{r} 1 \le i \le n, \\  1 \le j \le n,  \end{array}
\\
\frac{1}{y_i}\sum_{s=1}^{2n}{v_{is}
+
\frac{1}{z_j}\sum_{r=1}^n{v_{rj}}}
&=&
\frac{\widehat{\psi}_{ij}}{y_iz_j}\left( \lambda-\widehat{\gamma}_{ij}\right) v_{ij}, \quad
\begin{array}{r} 1 \le i \le n, \\ n+1 \le j \le 2n, \end{array}
\end{eqnarray}
\end{subequations}
then $V$ is the zero matrix.
\end{lemma}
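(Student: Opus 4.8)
The plan is to assume, for contradiction, that $V\neq 0$ solves the homogeneous system~\eqref{16mar10_AB} while $\operatorname{Re}\lambda\le 0$, and to squeeze enough out of the equations to force every entry to vanish. Write $R_i:=\sum_{s=1}^{2n}v_{is}$ for the $i$-th row sum, $S_j:=\sum_{r=1}^{n}v_{rj}$ for the $j$-th column sum, and $a_{ij}:=\frac{\widehat{\psi}_{ij}}{y_iz_j}(\lambda-\widehat{\gamma}_{ij})$, so that the two families of equations read $a_{ij}v_{ij}=\frac{R_i}{y_i}+\frac{S_j}{z_j}$ for $n+1\le j\le 2n$, and $a_{ij}v_{ij}=\frac{R_i}{y_i}+\frac{S_j-v_{jj}}{z_j}$ for $1\le j\le n$. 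The first fact I would record is the crucial sign observation: since $\widehat{\gamma}_{ij}>0$ and $\operatorname{Re}\lambda\le 0$, we have $\operatorname{Re}(\lambda-\widehat{\gamma}_{ij})<0$, hence $\lambda\neq\widehat{\gamma}_{ij}$ and $\operatorname{Re}(a_{ij})<0$ for every pair $(i,j)$. In particular each equation is solvable for $v_{ij}$, and every denominator appearing below will have strictly negative real part.

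As an instructive first step I would pair each equation with $\overline{v_{ij}}$ and sum over all $i,j$; using $\sum_j\overline{v_{ij}}=\overline{R_i}$ and $\sum_i\overline{v_{ij}}=\overline{S_j}$ this gives the energy identity
\begin{equation*}
\sum_{i=1}^n\frac{|R_i|^2}{y_i}+\sum_{j=1}^{2n}\frac{|S_j|^2}{z_j}-\operatorname{Re}\sum_{j=1}^{n}\frac{v_{jj}\,\overline{S_j}}{z_j}=\sum_{i,j}\operatorname{Re}(a_{ij})\,|v_{ij}|^2\le 0,
\end{equation*}
with strict inequality on the right unless $V=0$. This settles the case $n=1$ outright, and it isolates the source of difficulty: were it not for the diagonal-exclusion term $-v_{jj}/z_j$ in the first block (the contribution of the autocatalytic factor $I_{2n^2}-\widehat{\left(I_{2n}^n\right)^t}$), the left-hand side would be the manifestly nonnegative form $\sum_i|R_i|^2/y_i+\sum_j|S_j|^2/z_j$ and we would be done. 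Thus the enzyme block $j>n$ and all off-diagonal interactions are harmless, and the entire obstruction lives in the autocatalytic terms $v_{jj}$, $1\le j\le n$.

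The main obstacle is exactly that this cross term makes the quadratic form indefinite: one can exhibit matrices $V$ for which the left-hand side is negative, so no purely energetic inequality can finish the argument, and a local, term-by-term absorption of $v_{jj}$ into $\operatorname{Re}(a_{jj})|v_{jj}|^2$ is hopeless because $\operatorname{Re}(a_{jj})$ can be arbitrarily small. The route I would take instead is to exploit that \emph{all} coupling in~\eqref{16mar10_AB} is funnelled through the $3n$ scalars $R_i$ and $S_j$. Writing $\mu_{ij}:=1/\bigl(\widehat{\psi}_{ij}(\lambda-\widehat{\gamma}_{ij})\bigr)$, so $\operatorname{Re}\mu_{ij}<0$, and $v_{ij}=\mu_{ij}(z_jR_i+y_iS_j^{\mathrm{excl}})$, I would substitute into $S_j=\sum_i v_{ij}$ and solve for the column quantities. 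The resulting coefficients $1-\sum_{k\in I_j}\mu_{kj}y_k$, where $I_j=\{1,\dots,n\}$ for $j>n$ and $I_j=\{1,\dots,n\}\setminus\{j\}$ for $j\le n$, have real part strictly greater than $1$ by the sign fact above, hence are automatically invertible, and each $S_j^{\mathrm{excl}}$ becomes an explicit linear combination of the $R_i$. Feeding these back into $R_i=\sum_j v_{ij}$ collapses the problem to a homogeneous $n\times n$ system $A\vec R=0$ whose diagonal entries $1-\sum_j\mu_{ij}z_j$ again inherit real part exceeding $1$.

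It remains to prove that this reduced $n\times n$ system is nonsingular, and this is the step I expect to be genuinely hard: it is not a one-line Gershgorin estimate, since each diagonal entry has real part bounded below by $1$ but the off-diagonal entries are sums of many products of the $\mu$'s. The plan is to establish nonsingularity by a column diagonal-dominance argument in the spirit of Lemma~\ref{17may10_invertCoefficient}, using the bounds $|\mu_{ij}|\le 1/\bigl(\widehat{\psi}_{ij}\,|\lambda-\widehat{\gamma}_{ij}|\bigr)$ together with $|1-\sum_{k\in I_j}\mu_{kj}y_k|>1$ to dominate the coupling; should clean dominance prove elusive, one can instead re-run the energy identity on the reduced variables, in which the troublesome exclusion terms have already been eliminated. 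Once $\vec R=0$ is forced, the explicit relations yield $S_j^{\mathrm{excl}}=0$ and then $v_{ij}=\mu_{ij}(z_jR_i+y_iS_j^{\mathrm{excl}})=0$ for all $i,j$, so $V=0$, completing the proof.
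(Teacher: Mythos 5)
Your setup mirrors the paper's: you introduce the row sums $R_i$, the (exclusion-adjusted) column sums, and the coefficients $a_{ij}=\frac{\widehat{\psi}_{ij}}{y_iz_j}(\lambda-\widehat{\gamma}_{ij})$, observe that $\operatorname{Re}(a_{ij})<0$, and correctly diagnose that all coupling is funnelled through these $3n$ scalars, so the problem reduces to the nonsingularity of a small linear system. Your preliminary energy identity is a genuinely nice observation (and does dispose of $n=1$), and your diagnosis that the diagonal-exclusion term is the sole obstruction is accurate. But the proof has a real gap exactly where you say you expect one: the nonsingularity of the reduced system is never established. You offer two candidate routes (a column diagonal-dominance estimate on an $n\times n$ system in $\vec R$ obtained by further eliminating the column variables, or ``re-running the energy identity on the reduced variables'') and carry out neither; the first is unlikely to close cleanly because the off-diagonal entries of that $n\times n$ matrix are sums of products of the $\mu$'s whose moduli you cannot control by $\operatorname{Re}$-sign information alone, and the further elimination destroys the structure that makes the estimate possible.

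The paper closes precisely this step by \emph{not} eliminating down to $n$ unknowns. It keeps the full $3n\times 3n$ homogeneous system in $(R_1,\dots,R_n,C_1,\dots,C_{2n})$, multiplies on the right by $\operatorname{diag}(y_1,\dots,y_n,z_1,\dots,z_{2n})$, and observes that the resulting coefficient matrix $X$ is complex \emph{symmetric}, with off-diagonal entries $1/a_{ij}$ and diagonal entries $-y_i+\sum_j 1/a_{ij}$ (resp.\ $-z_j+\sum_i 1/a_{ij}$). For a complex symmetric matrix the Hermitian part $S=(X+X^*)/2$ is just the entrywise real part, which here is a real symmetric matrix that is strictly diagonally dominant with negative diagonal (every $\operatorname{Re}(1/a_{ij})<0$, and the extra $-y_i$ or $-z_j$ supplies the strictness). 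Hence $S$ is negative definite, $\operatorname{Re}\langle Xu,u\rangle=\langle Su,u\rangle<0$ for $u\neq 0$, so $X$ is invertible, all $R_i$ and $C_j$ vanish, and then $v_{ij}=\frac{1}{a_{ij}}\bigl(\frac{R_i}{y_i}+\frac{C_j}{z_j}\bigr)=0$. This is essentially your ``energy identity on the reduced variables'' fallback made precise --- but it must be applied to the $3n$-dimensional system before any further elimination, and the diagonal rescaling that symmetrizes the matrix is the ingredient your sketch is missing. As written, the proposal identifies the right reduction but does not prove the lemma.
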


\begin{proof} 
 Let $V = [v_{ij}] \in \mathbb{C}^{n \times 2n} $ satisfy Eq.~(\ref{16mar10_AB}). We will show that $v_{ij} = 0 $ for all $i,j$. 
Let
\begin{eqnarray*}
 R_i := \sum_{j=1}^{2n}v_{ij}, \quad  1 \le i \le n, \qquad
 C_j :=  
\begin{cases}
\sum_{\underset{i \neq j}{i=1}}^{n}v_{ij}, \quad  1 \le j \le n, \\
 \sum_{i=1}^{n}v_{ij}, \quad  n+1 \le j \le 2n.
\end{cases}
\end{eqnarray*}
Then Eq.~(\ref{16mar10_AB}) can be written as
\[
 \frac{1}{y_i}R_i
+
\frac{1}{z_j}C_j
=\frac{\widehat{\psi}_{ij}}{y_iz_j}\left( \lambda-\widehat{\gamma}_{ij}\right) v_{ij}, 
\quad
\begin{array}{r} 1 \le i \le n, \\  1 \le j \le 2n,  \end{array}
\]
Setting $a_{ij} = \frac{\widehat{\psi}_{ij}}{y_iz_j}\left( \lambda-\widehat{\gamma}_{ij}\right)$, we have
\begin{equation}\label{19may10_A}
 \frac{1}{a_{ij}y_i}R_i
+
 \frac{1}{a_{ij}z_j}C_j
= v_{ij}, 
\quad 1 \le i \le n, \, 1 \le j \le 2n.
\end{equation}
By summing Eq.~(\ref{19may10_A}) over $i$ and $j$ separately we obtain the following system of linear equations in the unknowns $\{R_1,R_2,...,R_n,$ $ C_1,C_2,...,C_{2n}\}$ 
\begin{subequations}\label{18may10_B}
 \begin{align}
 R_i \frac{1}{y_i}\sum_{j=1}^{2n} \frac{1}{a_{ij}}
+
\sum_{j=1}^{2n} \frac{1}{z_j a_{ij}} C_j
&= R_i,\quad 1 \le i \le n, \label{18may10_Ba} \\
\sum_{i=1}^n \frac{1}{y_ia_{ij}}R_i
+
C_j \frac{1}{z_j} \sum_{i=1}^n\frac{1}{a_{ij}}
&= C_j, \quad  1 \le j \le 2n \label{18may10_Bb}
\end{align} 
\end{subequations}
Eq.~(\ref{18may10_B}) can be written in matrix form  as
\begin{equation}\label{18may10_C}
\underbrace{
 \left[
\everymath{}
\footnotesize
\begin{array}{ccc|ccc}
-1+\frac{1}{y_1}\sum_{j=1}^{2n} \frac{1}{a_{1j}} &  &  &  \frac{1}{z_1a_{11}} & \ldots & \frac{1}{z_{2n}a_{1,{2n}}} \\
                                             & \ddots &   & \vdots & & \vdots\\
 & & -1+\frac{1}{y_n}\sum_{j=1}^{2n} \frac{1}{a_{nj}} & \frac{1}{z_1a_{n1}} & \ldots & \frac{1}{z_{2n}a_{n,2n}}   \\ &&&&& \\
\hline &&&&& \\
\frac{1}{y_1 a_{11}} & \ldots & \frac{1}{y_n a_{n1}} &
-1 + \frac{1}{z_1}\sum_{i=1}^n \frac{1}{a_{i1}} &  &  
 \\  
  \vdots  &  & \vdots  &  & \ddots &  \\
\frac{1}{y_1 a_{1,2n}} & \ldots & \frac{1}{y_n a_{n{2n}}} &
 &  &
-1 + \frac{1}{z_{2n}}\sum_{i=1}^n \frac{1}{a_{i, {2n}}}
\end{array}
\right]
}_{:= A}
 \left[
\begin{array}{c}
R_1 \\ R_2 \\ \vdots \\ R_n\\ C_1 \\ C_2 \\ \vdots \\ C_{2n}
\end{array}
\right]
=
0.
\end{equation}

We next show that the  the coefficient matrix, $A$, is invertible. This will imply that $R_i = C_j = 0, \, \forall \, i,j$. This, together with (\ref{19may10_A}), will force $v_{ij}$ to be zero and we will be done. 

To show the non-singularity of  $A$ it is sufficient to show the non singularity of the product of   $A$ with  a non-singular diagonal matrix
\[
 A \left[
\begin{array}{ccccccccccccccccccc}
 y_1 & & &     \\
     & \ddots &  \\
     &        & y_n \\
     &        &     &  z_1 & & &     \\
     &        &     &      &   \ddots &  \\
     &        &     &      &          & z_{2n} \\
\end{array}
\right]
\]
\[
 = 
\underbrace{
 \left[
\begin{array}{ccc|ccc}
-y_1+\sum_{j=1}^{2n} \frac{1}{a_{1j}} &  &  &  \frac{1}{a_{11}} & \ldots & \frac{1}{a_{1,{2n}}} \\
                                             & \ddots &   & \vdots & & \vdots\\
 & & -y_n+\sum_{j=1}^{2n} \frac{1}{a_{nj}} & \frac{1}{a_{n1}} & \ldots & \frac{1}{a_{n,2n}}   \\ &&&&& \\
\hline &&&&& \\
\frac{1}{a_{11}} & \ldots & \frac{1}{ a_{n1}} & -z_1 + \sum_{i=1}^n \frac{1}{a_{i1}} &  &  \\  
  \vdots  &  & \vdots  &  & \ddots &  \\
\frac{1}{ a_{1,2n}} & \ldots & \frac{1}{a_{n{2n}}} &  &  & -z_{2n} +\sum_{i=1}^n \frac{1}{a_{i, {2n}}}
\end{array}
\right].}_{= X}
\]

\noindent
Note that $X$ is a complex symmetric matrix (i.e. $X = X^t$). To show the non singularity of $X$, it is sufficient to show that $X$ has no zero eigenvalue. Assume that $\alpha$ is an eigenvalue of $X$ and $u \in \mathbb{R}^{3n}$  a corresponding eigenvector. Break $X$ into two Hermitian matrices,
\[
 X = \underbrace{\frac{X + X^*}{2}}_{:= S} + i \underbrace{\frac{X - X^*}{2i}}_{:= T} = S +i T,
\]
where $X^*$ is the conjugate transpose of $X$). Then, 
\begin{eqnarray*}
 \alpha \langle u,u \rangle = \langle Xu,u \rangle = \langle Su,u \rangle + i \langle Tu,u \rangle.
\end{eqnarray*}
To show that $\alpha $ is not zero, it is sufficient to show that $\langle Su,u \rangle$ is not zero for any  $0 \neq u \in \mathbb{R}^{3n}$. Note that, since $S,$ and $T$ are Hermitian, the terms $\langle Su,u \rangle$ and $\langle Tu,u \rangle$) are always real. 

But since $X$ is a complex symmetric matrix, $S_{ij} = \frac{X_{ij}+\bar{X}_{ji}}{2} =\frac{X_{ij}+\bar{X}_{ij}}{2} = Re (X_{ij}) $, where $S_{ij},$ and  $X_{ij}$ are the $(i,j)$-th entries of the matrices $S $ and $X$ respectively, and $\bar{X}_{ij}$  is the complex conjugate of the complex number $X_{ij}$, and $Re (X_{ij})$ is the real part of  $X_{ij}$.
Therefore, 
\[
S = 
\left[
\everymath{}
\footnotesize
\begin{array}{ccc|ccc}
-y_1+\sum_{j=1}^{2n} Re\,\frac{1}{a_{1j}} &  &  & Re\,\frac{1}{a_{11}} & \ldots & Re\,\frac{1}{a_{1,{2n}}} \\
                                             & \ddots &   & \vdots & & \vdots                         							\\
 & & -y_n+\sum_{j=1}^{2n} Re\,\frac{1}{a_{nj}} & Re\,\frac{1}{a_{n1}} & \ldots & Re\,\frac{1}{a_{n,2n}}  	\\ 
&&&&& 																				\\
\hline 
&&&&&																				\\
Re\,\frac{1}{ a_{11}} & \ldots & Re\,\frac{1}{ a_{n1}} &
-z_1 + \sum_{i=1}^n Re\,\frac{1}{a_{i1}} &  &  												\\  
  \vdots  &  & \vdots  &  & \ddots &  															\\
Re\,\frac{1}{ a_{1,2n}} & \ldots & Re\,\frac{1}{ a_{n,2n}} &
 &  &
-z_{2n} + \sum_{i=1}^n Re\, \frac{1}{a_{i, {2n}}}
\end{array}
\right] . 
\]
Recall that $a_{ij} = \frac{\widehat{\psi}_{ij}}{y_iz_j}\left( \lambda-\widehat{\gamma}_{ij}\right) $. If the real part of $\lambda$ is nonpositive then the real parts of $a_{ij}$ are negative. This implies that $Re\,\frac{1}{a_{ij}} < 0$ for all  $i,j$. In turn,  this implies that $S$ is diagonally dominant, and all the eigenvalues of $S$ are negative and real, since $S$ is a real symmetric matrix. 

Therefore $\langle Su,u \rangle < 0$ for all $u \in \mathbb{R}^{3n}$, and 
 $\alpha$ cannot be zero. This implies that $X$ is invertible, which further implies that $A$ is invertible. So, $R_i = C_j = 0$ for $i,j$. Eq.~(\ref{19may10_A})  therefore implies that $V = 0$.  
  \end{proof}


\subsection{Proof of Lemma~\ref{15mar10_stableMatrix}:}
\begin{proof}
 We will prove the lemma by contradiction. 
 Let 
\begin{equation*}
 Z = \left[ \begin{array}{c} z_1 \\ z_2 \\ \vdots \\ z_{2n} \end{array} \right],
\quad
Y = \left[ \begin{array}{c} y_1 \\ y_2 \\ \vdots \\ y_{n} \end{array} \right].
\end{equation*}
Then
\begin{eqnarray*}
 ZR_{2n} \otimes I_n = \left[ \begin{array}{c} z_1 \\ z_2 \\ \vdots \\ z_{2n} \end{array} 
\begin{array}{c} z_1 \\ z_2 \\ \vdots \\ z_{2n} \end{array} 
\hdots
\begin{array}{c} z_1 \\ z_2 \\ \vdots \\ z_{2n} \end{array} 
\right]\otimes I_n
= \underbrace{
\left[ \begin{array}{c} z_1I_n \\ z_2I_n \\ \vdots \\ z_{2n}I_n \end{array} 
\begin{array}{c} z_1I_n \\ z_2I_n \\ \vdots \\ z_{2n}I_n \end{array} 
\hdots
\begin{array}{c} z_1I_n \\ z_2I_n \\ \vdots \\ z_{2n}I_n \end{array} \right]}_{2n \text{ block columns}}, 
\end{eqnarray*}

\begin{eqnarray*}
 I_{2n}\otimes YR_n  = I_{2n}\otimes \underbrace{
\left[ \begin{array}{c} y_1 \\ y_2 \\ \vdots \\ y_{n} \end{array} 
\begin{array}{c} y_1 \\ y_2 \\ \vdots \\ y_{n} \end{array} 
\hdots
\begin{array}{c} y_1 \\ y_2 \\ \vdots \\ y_{n} \end{array} \right]}_{=YR_n}
=\underbrace{
\left[
\begin{array}{cccc} 
YR_n &       &       & \\
     &  YR_n &       & \\
     &       &\ddots & \\
     &       &       & Y R_n 
\end{array}
\right]}_{2n \text{ block columns}}.
\end{eqnarray*}

Let 
$
R_n^{\left(i\right)} = \left[\begin{array}{ccccccccc} 
                              1 &\cdots& 1 & 0 & 1 \cdots & 1
                             \end{array}\right]
$ be a row vector with a zero in the $i$-th place and 1s everywhere else. 
Then, 
\begin{eqnarray*}
 (I_{2n}\otimes YR_n )\left(I_{2n^2}- \widehat{{I_{2n}^n}^t} \right)
  =
\underbrace{
\left[
\begin{array}{ccccccccc} 
YR_n^{\left(1\right)} &        &                        &       &        &     \\
                      & \ddots &                        &       &        &     \\
                      &        & YR_n^{\left(n\right)}  &       &        &     \\
                      &        &                        &  YR_n &        &     \\
                      &        &                        &       & \ddots &     \\
                      &        &                        &       &        & YR_n
\end{array}
\right]}_{2n \text{ block columns}}.
\end{eqnarray*}

Therefore,
\[
 ZR_{2n} \otimes I_n + (I_{2n}\otimes YR_n )\left(I_{2n^2}- \widehat{{I_{2n}^n}^t} \right) \]
\begin{eqnarray*}
=
\left[ \begin{array}{lll|lll}
z_1I_n+YR_n^{\left(1\right)} &  \ldots    &  z_1I_n                      &    z_1I_n        & \ldots         & z_1I_n       \\
\vdots                       &  \ddots    & \vdots                       &   \vdots         & \ddots         & \vdots       \\
z_nI_n                       &  \ldots    & z_nI_n+YR_n^{\left(n\right)} &   z_nI_n         & \ldots         & z_nI_n       \\
\hline
 z_{n+1}I_n                  &  \ldots    &  z_{n+1}I_n                  & z_{n+1}I_n+YR_n  & \ldots         &  z_{n+1}I_n  \\
 \vdots                      &  \ddots    & \vdots                       &  \vdots          & \ddots         &  \vdots      \\ 
 z_{2n}I_n                   &  \ldots    & z_{2n}I_n                    &  z_{2n}I_n       & \ldots         & z_{2n}I_n+YR_n
\end{array} \right].
\end{eqnarray*}

Let 
\begin{equation*}
\Lambda = \left[\begin{array}{cccc}
\Lambda^{\left(1\right)} & & & \\
 &\Lambda^{\left(2\right)} & & \\
  & & \ddots & \\
 & & & \Lambda^{\left(2n\right)} 
\end{array}\right],
\quad
\Gamma = \left[\begin{array}{cccc}
\Gamma^{\left(1\right)} & & & \\
 &\Gamma^{\left(2\right)} & & \\
  & & \ddots & \\
 & & & \Gamma^{\left(2n\right)} 
\end{array}\right],
\end{equation*}
where $\Lambda^{\left(k\right)},\Gamma^{\left(k\right)}$, $k \in \{1,2,...,2n\}$ are $n \times n$ diagonal blocks of $\Lambda, \Gamma$ respectively.
Hence
\[
 J = \left[ \begin{array}{c|c} A_{11} &  A_{12} \\ \hline A_{21} &  A_{22}   \end{array}  \right],
\]
where
\begin{eqnarray*}
A_{11} 
&=& 
\left[ \begin{array}{lll}
z_1\Lambda^{\left(1\right)}+\Lambda^{\left(1\right)}YR_n^{\left(1\right)} +\Gamma^{\left(1\right)}&  \ldots    &  z_1\Lambda^{\left(1\right)}                            \\
\vdots                       &  \ddots    & \vdots     \\
z_n\Lambda^{\left(n\right)}  &  \ldots    & z_n\Lambda^{\left(n\right)}+\Lambda^{\left(n\right)}YR_n^{\left(n\right)}+\Gamma^{\left(n\right)} 
\end{array} \right],
\\
A_{12} 
&=& 
\left[ \begin{array}{lll}
   z_1\Lambda^{\left(1\right)}       & \ldots         & z_1\Lambda^{\left(1\right)}       \\
                      \vdots         & \ddots         & \vdots                            \\
   z_n\Lambda^{\left(n\right)}       & \ldots         & z_n\Lambda^{\left(n\right)}       
\end{array} \right],
\\
A_{21} 
&=& 
\left[ \begin{array}{lll}
 z_{n+1}\Lambda^{\left(n+1\right)}    &  \ldots    &  z_{n+1}\Lambda^{\left(n+1\right)} \\
 \vdots                               &  \ddots    & \vdots                             \\ 
 z_{2n}\Lambda^{\left(2n\right)}      &  \ldots    & z_{2n}\Lambda^{\left(2n\right)}   
\end{array} \right],
\\
A_{22} 
&=& 
\left[ \begin{array}{lll}
 z_{n+1}\Lambda^{\left(n+1\right)} +\Lambda^{\left(n+1\right)} YR_n+\Gamma^{\left(n+1\right)}  & \ldots         &  z_{n+1}\Lambda^{\left(n+1\right)}   \\
  \vdots          & \ddots         &  \vdots      \\ 
  z_{2n}\Lambda^{\left(2n\right)}        & \ldots         & z_{2n}\Lambda^{\left(2n\right)} +\Lambda^{\left(2n\right)} YR_n+\Gamma^{\left(n+1\right)}
\end{array} \right].
\end{eqnarray*}

Let $\lambda$ be an eigenvalue of $J$, with a corresponding eigenvector 
\[
V = \left[\begin{array}{c} V^{\left(1\right)} \\V^{\left(2\right)}\\ \vdots \\V^{\left(2n\right) } \end{array}\right] \in \mathbb{C}^{2n^2}, \quad \text{where }  
V^{\left(k\right)}
= 
\left[
\begin{array}{c}
 v_1^{\left(k\right)} \\
 v_2^{\left(k\right)} \\
 \vdots \\
 v_n^{\left(k\right)}
\end{array}
\right]
 \in \mathbb{C}^{n}, k \in \{1,2,...,2n\}.
\]
We will show that $v_l^{\left(k\right)} = 0$ for all $l,k$. 
By definition of eigenvalues and using the block structure of $J $ we get
\begin{eqnarray*}
\left[\begin{array}{c}
z_1\Lambda^{\left(1\right)}\sum_{j=1}^{2n}{V^{\left(j\right)}}+\Lambda^{\left(1\right)}YR_n^{\left( 1\right)}V^{\left(1\right)} + \Gamma^{\left(1\right)}V^{\left(1\right)} \\
\vdots
\\
z_n\Lambda^{\left(n\right)}\sum_{j=1}^{2n}{V^{\left(j\right)}}+\Lambda^{\left(n\right)}YR_n^{\left( n\right)}V^{\left(n\right)} +\Gamma^{\left(n\right)}V^{\left(n\right)} 
\\ 
z_{n+1}\Lambda^{\left(n+1\right)}\sum_{j=1}^{2n}{V^{\left(j\right)}}+\Lambda^{\left(n+1\right)}YR_nV^{\left(n+1\right)} +\Gamma^{\left(n+1\right)}V^{\left(n+1\right)} 
\\
\vdots
\\
z_{2n}\Lambda^{\left(2n\right)}\sum_{j=1}^{2n}{V^{\left(j\right)}}+\Lambda^{\left(2n\right)}YR_nV^{\left(2n\right)} +\Gamma^{\left(2n\right)}V^{\left(2n\right)}
\end{array}\right]
=
\left[\begin{array}{l} \lambda V^{\left(1\right)} 
\\ \vdots \\
\lambda V^{\left(n\right)}
\\
\lambda V^{\left(n+1\right)}
\\
\vdots
\\
\lambda V^{\left(2n\right) }
\end{array}\right].
\end{eqnarray*}
Looking at the above equation row by row we get
\begin{subequations}\label{20may10_row}
\begin{eqnarray}
z_k\Lambda^{\left(k\right)}\sum_{j=1}^{2n}{V^{\left(j\right)}}+\Lambda^{\left(k\right)}YR_n^{\left(k\right)}V^{\left(k\right)} + \Gamma^{\left(k\right)}V^{\left(k\right)} 
&=&
\lambda V^{\left(k\right)}, \qquad k \in \{1,2,...,n\} \label{20may10_rowa} \\
z_k\Lambda^{\left(k\right)}\sum_{j=1}^{2n}{V^{\left(j\right)}}+\Lambda^{\left(k\right)}YR_nV^{\left(k\right)} + \Gamma^{\left(k\right)}V^{\left(k\right)} 
&=&
\lambda V^{\left(k\right)}, \qquad k \in \{n+1,...,2n\} \label{20may10_rowb}
\end{eqnarray}
\end{subequations}
Note that Eq.~(\ref{20may10_row}) is still in matrix multiplication form. Writing it further in terms of each of its rows,  for each $k \in \{1,2,...,2n\}$ and $l \in \{1,2,...,n\}$, we have (For notational simplicity let $\left(\Lambda^{\left(k\right)}\right)^{-1} := \Psi^{\left(k\right)}$ )
\begin{subequations}\label{16mar10_A}
\begin{eqnarray}
\frac{1}{y_l}\sum_{j=1}^{2n}{v_l^{\left(j\right)}}
+
\frac{1}{z_k}\sum_{\underset{h \neq l}{h=1}}^n{v_h^{\left(k\right)}}
&=&\frac{\psi_l^{\left(k\right)}}{y_lz_k}\left( \lambda-\gamma_l^{\left(k\right)}\right) v_l^{\left(k\right)},
\qquad k \in \{1,...,n\},
\\
\frac{1}{y_l}\sum_{j=1}^{2n}{v_l^{\left(j\right)}}
+
\frac{1}{z_k}\sum_{h=1}^n{v_h^{\left(k\right)}}
&=&\frac{\psi_l^{\left(k\right)}}{y_lz_k}\left( \lambda-\gamma_l^{\left(k\right)}\right) v_l^{\left(k\right)},
\qquad k \in \{n+1,...,2n\}.
\end{eqnarray}
\end{subequations}

Now, Lemma~\ref{5may10_stableMatrix} applied to Eq.~(\ref{16mar10_A}) immediately yields that $v_l^{\left(k\right)} = 0$ for all $l,k$. This implies that the real part of $\lambda$ cannot be nonpositive. This completes the proof of stability of $J$.
\end{proof}

\subsection{Stability of slow manifold in the absence of some connections}\label{Section:Sparse} 
Lemma~\ref{15mar10_stableMatrix} show that the slow manifold defined by Eq.~\eqref{20june10_slowManifold}
is normally hyperbolic and stable when all entries in the connectivity matrices are positive. We 
next show how to extend the result to the case when some reactions are absent.

Recall the definitions of the unweighted connectivity matrices, $I_K, I_L$, and the associated matrices $C_U^{I_K}$ and $C_E^{I_L}$ given in  Eqs.~(\ref{E:sept1710_A}) and (\ref{newCUCE}). Let $I_{L^t}$  be a $n \times n$ matrix with ones at the places where $L_1^t , L_2^t, L_{-1}^t$ are non zero and zero where $L_1^t , L_2^t, L_{-1}^t$ are zeros. Now, recall the definition of $C$ in section~\ref{20june10_StabilityOfSlowManifold} and define 
\[
 C_0 = \left[\begin{array}{cc} I_U  & I_{L^t} \end{array}\right] * C
\]
Then, in the sense that we only need to differentiate along the coordinates corresponding to positive connections, one can formally write 
\begin{equation}\label{E:sept1710_B}
I_0 := \frac{\partial \,\mathrm{vec}\,  \left(C_0\right)  }{\partial \, \mathrm{vec}\,  \left(C_0 \right)  } =   \left [ \begin{array}{cc} \widehat{I_K} & 0 \\ 0 & \widehat{I_{L^t}} \end{array} \right ]. 
\end{equation}

Replacing $C$ with $C_0$ in the definition of $F$ and repeating the whole process of finding the Jacobian of $F$, now with respect to  $C_0$, and using Eq.~(\ref{E:sept1710_B}) we obtain the new Jacobian \[
J_0 :=  \frac{\partial \,\mathrm{vec}\,  (F\left(C_0\right) ) }{\partial \, \mathrm{vec}\,  \left(C_0 \right) }
= I_0 J I_0, 
\]
where the matrix $J$ is the Jacobian matrix  given in Eq.~(\ref{1april10_jacobian}). 
If the connectivity matrices have zero entries, then $I_0$ will have zero entries in the diagonal. Therefore, some eigenvalues of $J_0$ will be zero. But, this does not affect the stability of slow manifold because we only need to look for the stability along the directions of intermediate complexes that occur in the reactions. 
That is, we only need to look at the principal submatrix of $J_0$ corresponding to the positive entries in the diagonal of $I_0$. Let this principal submatrix be $J_0^+$. But, since $I_0 J_0 I_0 = I_0 J I_0$, we see that $J_0^+$ is also a principal submatrix of $J$. And $J_0^+$ is  independent of   zero entries in the connectivity matrices. Since Lemma~\ref{15mar10_stableMatrix} implies that, when all the entries in connectivity matrices are positive, $J$ has eigenvalues with only negative real parts, we get that  $J_0^+$ will have eigenvalues with only negative real parts.  We conclude that the results hold even if some entries in the connectivity matrices are zero.









\end{document}